\title{Sharp Online Hardness for Large Balanced Independent Sets}
\author[1]{Abhishek Dhawan\thanks{Email: \textit{adhawan2@illinois.edu}. Partially supported by the NSF RTG grant DMS-1937241.}}
\author[2]{Eren C. K{\i}z{\i}lda\u g\thanks{Email: \textit{kizildag@illinois.edu}.}}
\author[3]{Neeladri Maitra\thanks{Email: \textit{nmaitra@illinois.edu}.}}
\affil[1,3]{Department of Mathematics, University of Illinois Urbana-Champaign}
\affil[2]{Department of Statistics, University of Illinois Urbana-Champaign}
\date{}
\begin{document}
\maketitle

\begin{abstract}
    We study the algorithmic problem of finding large $\gamma$-balanced independent sets in dense random bipartite graphs; an independent set is $\gamma$-balanced if a $\gamma$ proportion of its vertices lie on one side of the bipartition. In the sparse regime, Perkins and Wang~\cite{perkins2024hardness} established tight bounds within the low-degree polynomial (LDP) framework, showing a factor-$1/(1-\gamma)$ statistical–computational gap via the Overlap Gap Property (OGP) framework tailored for stable algorithms. However, these techniques do not appear to extend to the dense setting. For the related large independent set problem in dense Erd\H{o}s-R\'{e}nyi random graph $G(n,p)$, the best known algorithm is an online greedy procedure that is inherently unstable, and LDP algorithms are conjectured to fail even in the ``easy'' regime where greedy succeeds.

    For constant $p,\gamma \in (0, 1)$, we show that the largest $\gamma$-balanced independent set in $G_{\text{bip}}(n,p)$ has size $\alpha_{\rm STAT}:=\frac{\log_b n}{\gamma(1-\gamma)}$ with high probability (whp), where $n$ is the size of each bipartition, $p$ is the edge probability, and $b=1/(1-p)$. We design a two-stage online algorithm—revealing vertices sequentially and making irrevocable decisions based solely on current information—that achieves $(1-\epsilon)\alpha_{\rm COMP}$ whp for any $\epsilon>0$, where $\alpha_{\rm COMP}:=(1-\gamma)\alpha_{\rm STAT}$. We complement this with a sharp lower bound, showing that no online algorithm can achieve $(1+\epsilon)\alpha_{\rm COMP}$ with nonnegligible probability.
    
    Our results suggest that the same factor-$1/(1-\gamma)$ gap is also present in the dense setting, supporting its conjectured universality. While the classical greedy procedure on $G(n,p)$ is straightforward, our algorithm is more intricate: it proceeds in two stages, incorporating a stopping time and suitable truncation to ensure that $\gamma$-balancedness—a global constraint—is met despite operating with limited information. Our lower bound utilizes the OGP framework. Although the traditional scope of the OGP has been stable algorithms, we build on a recent refinement of this framework for online models and extend it to the bipartite setting.
\end{abstract}
\pagenumbering{gobble}
\newpage
\tableofcontents
\newpage
\pagenumbering{arabic}

\sloppy

\section{Introduction}

In this paper, we study the algorithmic problem of finding large balanced independent sets in dense random bipartite graphs. While finding large independent sets—or even approximating them to within an $n^{1-\epsilon}$ factor—is NP-hard in the worst-case~\cite{hastad-clique,khot2001improved}, the situation becomes far more intriguing in the presence of randomness. 

For the \ER random graph $G(n,\frac12)$, the largest independent set has size approximately $2\log_2 n$ with high probability (whp)~\cite{matula1970complete,matula1976largest,Grimmett_McDiarmid_1975,bollobas1976cliques}. Moreover, a simple greedy algorithm operating in an online fashion—where vertices are revealed sequentially, and the decision at step $t$ depends only on partial information available at step $t$—finds an independent set of size $\log_2 n$ with high probability~\cite{Grimmett_McDiarmid_1975}. In 1976, Karp asked whether it is possible to design an efficient algorithm that, whp, finds an independent set of size $(1+\epsilon)\log_2 n$ for $\epsilon>0$~\cite{karp1976probabilistic}. Surprisingly, this question remains open and is widely believed to be computationally intractable. It is worth mentioning that proving the hardness of Karp’s task unconditionally would imply $P\ne NP$.

Karp’s problem stands as a central question in average-case complexity and the algorithmic theory of random graphs.\footnote{For instance, Frieze explicitly highlighted it as a major open problem in his 2014 ICM plenary lecture~\cite{F2014}.} It is perhaps the earliest instance of a  \emph{statistical-computational gap}—a gap between the existential bound and the best known polynomial-time algorithm. This gap has been extensively studied (e.g., prompting Jerrum to propose the now-famous planted clique model~\cite{jerrum1992large}), and a large body of work has since uncovered similar ``factor 2-gaps” in other random graph models, suggesting a certain universality. For a broad overview of such gaps both in the context of random graphs and beyond, see the surveys~\cite{bandeira2018notes,gamarnik2021overlap,gamarnik2022disordered,gamarnik2025turing}.

The sparse case, $G(n,\frac{d}{n})$ with constant $d$, has seen substantial progress. In this case, the largest independent set has size $\sim 2n\frac{\log d}{d}$~\cite{frieze1990independence,frieze1992independence,bayati2010combinatorial}, while the best known efficient algorithm only achieves $\sim n\frac{\log d}{d}$, both whp~\cite{lauer2007large}.\footnote{Both guarantees hold in the double limit $n\to\infty$ followed by $d\to\infty$.} Is it hard to find independent sets of size $(1+\epsilon)n\frac{\log d}{d}$? As noted above, resolving this question would imply a conclusion even stronger than $P \ne NP$. Accordingly, contemporary research has instead focused on providing rigorous evidence of hardness, e.g., by establishing unconditional lower bounds for certain classes of algorithms. This algorithmic gap led Gamarnik and Sudan to introduce the \emph{Overlap Gap Property} (OGP) framework~\cite{gamarnik2014limits}. Leveraging this framework, tight hardness results were obtained for powerful algorithmic classes, including low-degree polynomial (LDP) algorithms~\cite{wein2020optimal} and local algorithms~\cite{rahman2017local}. These arguments were subsequently extended to sparse random bipartite graphs by Perkins and Wang~\cite{perkins2024hardness}, a work closely related to ours (see below); see also~\cite{dhawan2024low} for an extension to hypergraphs.

\paragraph{Dense Random Graphs} The situation is markedly different in the dense regime, $G(n,p)$ with constant $p$. In this setting, while the hardness results were recently established for LDP algorithms~\cite{huang2025strong}, it remains unknown whether the greedy algorithm can itself be implemented as an LDP. In fact, it has been conjectured in a recent AIM workshop~\cite{AIM2024} that for $G(n,\frac12)$:
\begin{conjecture}\label{conj:LDP}
   No degree-$o(\log^2 n)$ polynomial can return an independent set of size $0.9\log_2 n$.
\end{conjecture}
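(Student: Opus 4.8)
The statement is an open conjecture, so I can only propose a plan of attack; the natural vehicle is the low-degree polynomial (LDP) framework combined with an appropriately tailored Overlap Gap Property (OGP), the same machinery behind the hardness results of \cite{huang2025strong,gamarnik2014limits,wein2020optimal}. The target size $0.9\log_2 n$ sits \emph{below} the greedy value $\log_2 n$ and far below the first-moment threshold $2\log_2 n$, so---as the introduction notes---the point is to show that low-degree polynomials are strictly weaker than the online, unstable greedy procedure. Any proof must therefore exploit the \emph{stability} of low-degree polynomials rather than any feature shared with adaptive algorithms. Concretely, I would aim to combine two ingredients: (i) a stability estimate showing that a degree-$D$ polynomial $f$ on the $\binom n2$ edge indicators, rounded to a vertex set $S_f(G)$, satisfies $|S_f(G)\triangle S_f(G')| = o(\log n)$ whp whenever $G,G'$ differ by resampling a $\Theta(1/D)$ fraction of edges (a hypercontractivity / Efron--Stein computation on the product of Bernoulli variables); and (ii) a structural ``ensemble OGP'' ruling out, for a correlated family $G_0,\dots,G_m$ of instances, a tuple $(I_1,\dots,I_m)$ of independent sets each of size $0.9\log_2 n$ whose pairwise overlaps all lie in a prescribed forbidden band.

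The delicate point in (ii) is that, unlike in the sparse case, the plain single-instance pairwise overlap profile of independent sets of size $\kappa\log_2 n$ has no gap until $\kappa$ is essentially $2$; this is exactly why the factor-two gap here has resisted proof, and it means the forbidden structure cannot be imposed on \emph{all} independent sets of size $0.9\log_2 n$---there are far too many and they are too well connected---but only on the much smaller sub-collection a \emph{stable} algorithm can actually produce, or else on an $m$-replica configuration with $m$ growing. I would then route an interpolation path along which stability from (i) forces the outputs $S_f(G_t)$ to move in small steps, while the two endpoints (the true instance and a fresh independent copy) have nearly disjoint outputs; the overlap must then cross the forbidden band, contradicting (ii). The replica count $m$ that the ensemble OGP needs controls the attainable degree: the per-step resampling fraction is $\Theta(1/m)$, so one needs $D = o(m)$, and reaching $D = o(\log^2 n)$ forces $m$ roughly $\log^2 n$---consistent with the fact that greedy itself, performing $\log_2 n$ rounds that each scan $\order(\log n)$ earlier choices, is plausibly implementable in degree $\order(\log^2 n)$, which would make the conjecture's threshold sharp.

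The main obstacle is establishing a form of (ii) that bites already at size $0.9\log_2 n$, i.e. below the greedy threshold, where no overlap-gap phenomenon is currently known. This would require genuinely new input: either (a) a ``conditional'' or stable OGP isolating the independent sets reachable by noise-stable procedures and exhibiting a gap among \emph{those}, in the spirit of the stable-algorithm refinements of Perkins and Wang \cite{perkins2024hardness} in the sparse bipartite setting and of the online/overlap refinements this paper builds on; or (b) a branching (forbidden-tree) ensemble OGP with polylogarithmically many replicas whose first-moment analysis is pushed well below the classical $2\log_2 n$ location. Either route demands a much finer understanding of the geometry of near-$\log_2 n$ independent sets in $G(n,\tfrac12)$ than is presently available, and the concentration bounds for low-degree polynomials must be sharpened in parallel so that the replica count---and hence the degree barrier---can be carried all the way to $\log^2 n$ rather than $\order(1)$ or $o(\log n)$. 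Pending such a structural advance, the conjecture appears out of reach of the current OGP/LDP toolkit.
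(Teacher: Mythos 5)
There is nothing to compare against: the statement you were given is Conjecture~\ref{conj:LDP}, which the paper states without proof (it is attributed to a recent AIM workshop) and which remains open. You correctly recognize this, and what you offer is a research program rather than a proof, so it cannot be judged ``correct''; the honest assessment is that it contains the genuine gap you yourself flag, and that gap is more serious than your writeup suggests. Your ingredient (ii) asks for an OGP-type forbidden overlap structure among independent sets of size $0.9\log_2 n$. But at that size --- below the greedy value $\log_2 n$, let alone the existential value $2\log_2 n$ --- no first-moment computation can forbid tuples with intermediate overlaps: such sets are superabundant, they are produced by the online greedy algorithm itself, and the solution landscape at this density is, as far as anyone knows, completely connected. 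Your proposed fixes (a ``conditional/stable OGP'' restricted to outputs of stable procedures, or a branching OGP with polylogarithmically many replicas) are precisely the missing mathematics, not a route around it; restricting the forbidden structure to ``sets reachable by stable algorithms'' is circular unless one can characterize that reachable set independently, which is the whole difficulty.

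There is also a conceptual misalignment worth naming. The interest of the conjecture, as the paper's introduction stresses, is that LDP algorithms are expected to fail \emph{in the regime where online greedy succeeds}, i.e.\ the obstruction cannot be a property of the solution landscape alone (otherwise it would also rule out greedy). Any argument that, like yours, runs stability against a landscape-level forbidden structure proves too much: if the forbidden structure existed at $0.9\log_2 n$ it would contradict the achievability of $\log_2 n$ by stable-enough relaxations one can build along the same interpolation, and in any case the first-moment bound needed for (ii) is simply false at that size. A proof of Conjecture~\ref{conj:LDP}, if one is found, will almost surely have to exploit a feature separating low-degree polynomials from online algorithms other than stability --- e.g.\ their symmetry/obliviousness to vertex arrival order, or direct Fourier-analytic moment arguments --- none of which appears in your plan or, for that matter, in the present paper, whose OGP machinery (Sections~\ref{sec:impossibility} and~\ref{subsection: OGP}) is deployed only \emph{above} the computational threshold $\alpha_{\rm COMP}$, where the first-moment calculation does close.
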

That is, LDP algorithms likely fail even in the regime where the greedy algorithm succeeds, suggesting they may not be viable at all in the dense setting. Given that LDP algorithms are quite powerful in many high-dimensional problems, this is particularly surprising—especially since the greedy algorithm operates using only partial, sequential information without accessing the full graph. Thus, different techniques are needed for analyzing the greedy algorithm and the online setting.
  
A recent work~\cite{gamarnik2025optimal} has refined the OGP framework and subsequently obtained sharp lower bounds for a broad class of online algorithms, which includes the greedy algorithm as a special case. Our work extends these techniques to dense random bipartite graphs, as we detail below.

\subsection{Random Bipartite Graphs}
Bipartite graphs arise frequently in modeling real-world scenarios with an inherent two-part structure
(e.g., job assignment). From a theoretical standpoint, many classical problems in graph theory and extremal combinatorics—such as Turán- and Ramsey-type questions—have natural bipartite counterparts. In our context, random bipartite graphs are a natural testbed for investigating the robustness of statistical-computational gaps (e.g., the factor-$2$ gap discussed above).

Our focus is on the dense \ER random bipartite graph $\ERB(n, p)$ defined on disjoint vertex sets $(L,R)$ where $|L|=|R|=n$ and $p$ is a constant. Each edge between $L$ and $R$ is included independently with probability $p$. 
In general, maximum independent sets in bipartite graphs can be found efficiently via max-flow. Likewise, finding large independent sets in $\ERB(n,p)$ is algorithmically easy; in fact, even approximately counting or sampling such sets in the sparse case $G_{\text{bip}}(n, d/n)$ is tractable—see~\cite{dhawan2024low} and references therein. However, powerful  statistical mechanics heuristics~\cite{mezard1987mean} suggest that introducing global constraints may lead to a \emph{glassy} phase and computational hardness. 

A natural global constraint, as studied in~\cite{perkins2024hardness}, is to consider \emph{balanced} independent sets $I$, where $|I\cap L|=|I\cap R|$. More generally, they allow for a specified proportion of vertices from each side of the bipartition, which is also our focus. Indeed, introducing such a global constraint already leads to computational barriers: finding a largest balanced independent set in a bipartite graph is NP-hard~\cite{gareyjohnson,feige2002relations}. For further references and detailed discussion, see~\cite{perkins2024hardness}.

In this paper, we study $\gamma$-balanced independent sets—those in which a $\gamma$ fraction of vertices lie in one part and a $(1 - \gamma)$ fraction in the other, where $\gamma\in (0,\frac12]$ without loss of generality.
\begin{definition} 
Given a bipartite graph $G$ with bipartition $(L,R)$, an independent set $I$ of $G$ is $\gamma$-balanced if $\bigl||I\cap L|-\gamma |I| \bigr|<1$ or $\bigl||I\cap R|-\gamma |I| \bigr|<1$.
\end{definition}

For the sparse case, $\ERB(n, d/n)$, Perkins and Wang~\cite{perkins2024hardness} provide a fairly comprehensive picture. They show that (i) the largest $\gamma$-balanced independent set has size $\left(1/(\gamma(1-\gamma))\pm o_d(1)\right)n\frac{\log d}{d}$, (ii) there are local/LDP algorithms finding an independent set of size $\left(1/\gamma-o_d(1)\right)n\frac{\log d}{d}$, and (iii) for $d$ large, local/LDP algorithms fail to find an independent set of size $\left(1/\gamma+\epsilon\right)n\frac{\log d}{d}$.  That is, a factor-$1/(1 - \gamma)$~statistical-computational gap emerges. In the balanced case $\gamma = 1/2$, this reproduces the familiar factor-$2$ gap. 

Can we extend these results to dense bipartite graphs? 
To begin with, their algorithmic result crucially exploits sparsity—specifically, the fact that sparse graphs are locally `tree-like'. This structural property no longer holds in the dense regime. In fact, as discussed earlier and suggested by Conjecture~\ref{conj:LDP}, such algorithms may not even be viable candidates in the dense setting. Instead, online algorithms emerge as a natural candidate. However, enforcing the $\gamma$-balancedness constraint in an online setting with randomly ordered vertex arrivals poses new difficulties (see below). Furthermore, the core of their hardness result relies on the stability of local/LDP algorithms. Online algorithms, however, may be unstable; see~\cite{gamarnik2025optimal}.

\paragraph{Dense Bipartite Graphs} In this paper, we characterize the landscape for dense random bipartite graphs, including (i) the statistical threshold $\alpha_{\rm STAT}$ for the largest $\gamma$-balanced independent set, (ii) an online algorithm achieving the threshold $\alpha_{\rm COMP}=(1-\gamma)\alpha_{\rm STAT}$, and (iii) a sharp algorithmic lower bound, showing that no online algorithm can surpass $\alpha_{\rm COMP}$. This suggests that $\alpha_{\rm COMP}$ is the computational threshold for this problem. In particular, our results establish a similar factor-$1/(1-\gamma)$ statistical-computational gap, thus reinforcing its apparent universality.

Before describing our main results, we highlight key challenges. First and foremost, any approach that incurs even mild logarithmic factors would fall short of addressing a constant-factor computational gap. In the sparse setting, the OGP framework has proven to be a powerful tool for addressing constant-factor gaps. However, this framework applies primarily to stable algorithms, including local/LDP algorithms. In light of Conjecture~\ref{conj:LDP}, such algorithms are unlikely to be viable in the dense case. Instead, online algorithms offer a more natural starting point—further motivated by the fact that the best known algorithm for $G(n,p)$ itself is an online algorithm. Despite this, applications of the OGP framework to online algorithms remain only a handful, see Section~\ref{subsection: OGP}. Even for classical greedy on $G(n,p)$, lower bounds via OGP require delicate refinements and have been established only very recently~\cite{gamarnik2025optimal}. In our setting, the technical challenges are compounded by two additional factors: (i) the greedy algorithm operates with only partial information, while the $\gamma$-balancedness constraint is global; and (ii) the random vertex arrival order—a core feature of online algorithms—can lead to situations revealing few or no cross-edges in the bipartite graph, further complicating the analysis (see below).

In the classical online setting (e.g., greedy on $G(n, p)$), vertices arrive sequentially and in random order. At each step $t$, the algorithm decides whether to include the incoming vertex $v_t$ by inspecting its connections to $I_{t-1}$, the independent set constructed thus far. Crucially, the decision for $v_t$ must be made immediately—it cannot be deferred. 

The introduction of the $\gamma$-balancedness constraint makes the situation more delicate. In the setting of~\cite{perkins2024hardness}, the algorithm is not online; once an independent set is constructed, it can be rebalanced—e.g., by discarding vertices—to satisfy the balancedness constraint. In our setting, however, such post-hoc pruning is not allowed, as it breaks the requirement that decisions are irrevocable. Consequently, the algorithm must be cognizant of the balancedness constraint throughout its execution.

This challenge is further compounded by the randomness in vertex arrival order. Crucially, if majority of the vertices come from the same side of the bipartition in, e.g., the first $n$ steps, the algorithm receives limited information about the status of cross-edges. In contrast, in $G(n,p)$, each step reveals the status of some new edges, regardless of the vertex order. 

The random arrival order, along with the $\gamma$-balancedness constraint also necessitates truncation, elaborated next. Consider the case where the first $n$ (or $n-o(n)$) vertices all come from the same side of the bipartition. A greedy algorithm might add all of them—clearly violating the balancedness constraint. More subtly, even adding as few as $(1+\epsilon)\log_b n$ of these vertices can prevent any subsequent vertex from the opposite side from being included—indicating that a careful truncation (without violating onlineness) is essential. Hence, while the analysis for classical greedy on $G(n,p)$ is rather easy, this is not the case in our setting. Moreover, any meaningful algorithmic lower bound should be oblivious to the arrival order—including adversarial ones such as the scenarios described above.

To address these challenges, we construct suitable auxiliary stochastic processes to track the evolution of the algorithm’s output, along with judiciously chosen stopping times that enforce the balancedness constraint without violating the online requirement.

\subsection{Summary of Main Results}
Recall that $\ERB(n,p)$ is the random bipartite graph on vertex set $(L,R)$ with $|L|=|R|=n$, where each edge in $L\times R$ is present independently with probability $p=\Theta(1)$. Fix $\gamma\in(0,\frac12]$ and set
\begin{equation}\label{eq:a-star}
    \alpha_{\rm STAT}\coloneqq\frac{\log_b n}{\gamma(1-\gamma)},\quad\quad \text{where}\quad\quad b\coloneqq b(p)=\frac{1}{1-p}.
\end{equation}
For the remainder of this paper, we ignore all floor/ceiling operators for simplicity with the understanding that this does not affect the overall arguments. 
\begin{theorem}[Informal, see Theorem~\ref{thm:Max-IS}]\label{thm:largest-IS-informal}
The largest $\gamma$-balanced independent set in $\ERB(n,p)$ has size $\bigl(1\pm o(1)\bigr)\alpha_{\rm STAT}$ whp.
\end{theorem}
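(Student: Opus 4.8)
The plan is to establish the upper and lower bounds separately via a first- and second-moment argument, respectively. Throughout, write $\alpha = \alpha(n)$ for the target size and recall that an independent set $I$ with $|I \cap L| = k$ and $|I \cap R| = \ell$ appears with probability exactly $(1-p)^{k\ell} = b^{-k\ell}$, since each of the $k\ell$ potential cross-edges must be absent and these events are mutually independent. For the $\gamma$-balanced constraint we have $k \approx \gamma \alpha$ and $\ell \approx (1-\gamma)\alpha$ (or vice versa), so $k\ell \approx \gamma(1-\gamma)\alpha^2$, and this is precisely why the critical scale is $\alpha_{\rm STAT} = \log_b n / (\gamma(1-\gamma))$: at that value $b^{-k\ell} \approx n^{-\alpha_{\rm STAT}}$, which is poised against the roughly $\binom{n}{k}\binom{n}{\ell} \approx n^{k+\ell} = n^{\alpha_{\rm STAT}}$ choices of vertex sets.

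\emph{Upper bound (first moment).} Let $X$ denote the number of $\gamma$-balanced independent sets of size $(1+\epsilon)\alpha_{\rm STAT}$. First I would fix the split $(k, \ell)$ with $k = \gamma \alpha$, $\ell = (1-\gamma)\alpha$, $\alpha = (1+\epsilon)\alpha_{\rm STAT}$, and compute
\[
\EE[X_{k,\ell}] = \binom{n}{k}\binom{n}{\ell} b^{-k\ell} \le n^{k+\ell} b^{-k\ell} = n^{\alpha}\, n^{-\gamma(1-\gamma)\alpha^2/\log_b n} = n^{\alpha - (1+\epsilon)^2 \alpha_{\rm STAT}} ,
\]
using $k\ell = \gamma(1-\gamma)\alpha^2$ and $\alpha_{\rm STAT} = \log_b n / (\gamma(1-\gamma))$. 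Since $\alpha - (1+\epsilon)^2\alpha_{\rm STAT} = \bigl((1+\epsilon) - (1+\epsilon)^2\bigr)\alpha_{\rm STAT} = -\epsilon(1+\epsilon)\alpha_{\rm STAT} \to -\infty$, we get $\EE[X_{k,\ell}] \to 0$. Summing over the $O(\alpha^2) = O(\log^2 n)$ admissible pairs $(k,\ell)$ — those within additive $1$ of the $\gamma$-balanced profile, on either side of the bipartition — keeps the bound $o(1)$, and Markov's inequality gives $\Prob[X \ge 1] \to 0$. Care is needed only to include the two symmetric orientations of the split and to confirm that mild deviations of $(k,\ell)$ from the exact $\gamma$-profile do not help; the exponent is strictly decreasing in $k\ell$, and $k\ell$ is maximized near the balanced profile, so larger independent sets on the "wrong" profile are even rarer.

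\emph{Lower bound (second moment).} Let $Y$ count $\gamma$-balanced independent sets of size $(1-\epsilon)\alpha_{\rm STAT}$ with the fixed split $(k,\ell) = (\gamma\alpha, (1-\gamma)\alpha)$, $\alpha = (1-\epsilon)\alpha_{\rm STAT}$; then $\EE[Y] = n^{\alpha - (1-\epsilon)^2\alpha_{\rm STAT} + o(\alpha)} = n^{\epsilon(1-\epsilon)\alpha_{\rm STAT} + o(\alpha)} \to \infty$. The main work is the variance estimate: writing $\EE[Y^2] = \sum_{(A,B),(A',B')} b^{-(|A||B'| \text{-type terms})}$, I would index pairs of sets by their overlaps $i = |A \cap A'|$ and $j = |B \cap B'|$, and show
\[
\frac{\EE[Y^2]}{\EE[Y]^2} = \sum_{i,j} \frac{\binom{k}{i}\binom{n-k}{k-i}}{\binom{n}{k}} \cdot \frac{\binom{\ell}{j}\binom{n-\ell}{\ell-j}}{\binom{n}{\ell}} \cdot b^{ij} \;\longrightarrow\; 1,
\]
since the number of cross-pairs shared by the two independent sets is exactly $ij$ (an edge is "double-counted" iff both endpoints lie in the intersection), so the inclusion–exclusion correction factor is $b^{ij}$. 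The dominant contribution comes from $i, j = O(1)$ (essentially disjoint pairs), where the hypergeometric factors are $\approx (k^2/n)^i (\ell^2/n)^j \to 0$ geometrically while $b^{ij}$ stays bounded; one checks the full range $0 \le i \le k$, $0 \le j \le \ell$ by splitting into a "small-overlap" regime and a "large-overlap" regime and bounding each, exactly as in the classical clique/independent-set second moment computation of Matula and of Bollobás–Erdős, adapted to the bipartite bilinear exponent $ij$. Paley–Zygmund (or Chebyshev) then yields $Y \ge 1$ whp.

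\emph{Main obstacle.} The delicate point is the large-overlap regime of the second-moment sum, where $i$ or $j$ is a constant fraction of $k$ or $\ell$: there $b^{ij}$ grows like $n^{\Theta(1)}$ and could in principle overwhelm the shrinking hypergeometric weights. I expect to control this exactly as in the \ER clique second-moment argument — the entropy of choosing the overlap pattern is $O(k \log n) + O(\ell \log n) = O(\log^2 n)$, which is of lower order than $ij \log b = \Theta(\log^2 n)$ only when $ij$ is comparably large, and a careful accounting shows the product is still dominated by the $i=j=0$ term; the bilinear (rather than linear) form of the exponent $ij$ actually helps, since for the weight to be large one needs \emph{both} overlaps large simultaneously, which is doubly penalized by the hypergeometric factors. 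A secondary point is making Theorem~\ref{thm:largest-IS-informal} uniform over the "which side gets the $\gamma$ fraction" choice and over the $o(1)$-slack in the definition of $\gamma$-balanced; this only costs polylogarithmic union-bound factors and does not affect the leading order.
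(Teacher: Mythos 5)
Your proposal is correct and follows essentially the same route as the paper: a first-moment bound at $(1+\epsilon)\alpha_{\rm STAT}$, and a second-moment computation at $(1-\epsilon)\alpha_{\rm STAT}$ parameterized by the two overlaps $(i,j)$ with the correction factor $b^{ij}$, finished by Paley--Zygmund; your displayed ratio is exactly the paper's key identity. The large-overlap regime you flag as the main obstacle is handled in the paper by the same idea you sketch ("both overlaps must be large simultaneously"), made quantitative through a harmonic-mean bound $1/i+1/j\ge (1+\epsilon)/\log_b n$ that exploits the $\epsilon$-slack in $\alpha$.
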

Theorem~\ref{thm:largest-IS-informal} identifies $\alpha_{\rm STAT}$ in~\eqref{eq:a-star} as the \emph{statistical threshold}. Equipped with this, a natural algorithmic question arises: can we efficiently find large $\gamma$-balanced independent sets? We set
\begin{equation}\label{eq:alpha-alpg}
    \alpha_{\rm COMP} \coloneqq (1-\gamma)\alpha_{\rm STAT} = \frac{\log_b n}{\gamma}.
\end{equation}
Our next result shows that $\alpha_{\rm COMP}$ is attainable in polynomial time.
\begin{theorem}[Informal, see Theorem~\ref{theorem: achievability}]\label{thm:alg-informal}
There is an online algorithm—oblivious to the vertex arrival order—which, for any $\epsilon > 0$, finds a $\gamma$-balanced independent set of size $(1 - \epsilon)\alpha_{\mathrm{COMP}}$ whp.
\end{theorem}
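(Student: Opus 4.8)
The plan is to design an explicit two-stage online algorithm — with an internal stopping time and two truncations — and show it finds a $\gamma$-balanced independent set of size $(1-o(1))\alpha_{\rm COMP}$ whp, which yields the stated bound for every fixed $\epsilon>0$. Set $k\defeq\log_b n-2\log_b\log_b n$ and $m\defeq\tfrac{1-\gamma}{\gamma}k$, so that $k+m=k/\gamma=(1-o(1))\alpha_{\rm COMP}$ and $k=\gamma(k+m)$. Maintain sets $S\subseteq L$ and $T\subseteq R$, initially empty. In \emph{Stage 1} run the natural bipartite greedy: when $v$ arrives, add it to its side iff it has no edge to the current opposite set, else reject; Stage 1 ends at the stopping time $\tau$, the first arrival at which $\max(|S|,|T|)=k$. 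Say the side that reached $k$ is $L$ (the other case is symmetric) and \emph{freeze} $S$. In \emph{Stage 2} reject every incoming $L$-vertex and add an incoming $R$-vertex to $T$ iff it has no edge to the frozen $S$ and $|T|<m$, stopping once $|T|=m$; output $S\cup T$. The point of truncating the minority side at $k\approx\log_b n$ is to keep the per-vertex ``resistance'' $b^{-k}\approx(\log_b n)^2/n$ of the majority side large enough (in particular $\gg m/n$) that the majority side, which needs only $m=\Theta(\log n)$ vertices, can fill up; the caps $k$ and $m$ are chosen so the output ratio is exactly $\gamma$.

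\emph{Step 1 (deterministic correctness).} The sets $S$ and $T$ are internally edge-free by bipartiteness, and for any $u\in S$ and $v\in T$, whichever of the two was added later was tested against the other — which was already present and is never removed — at the time of its addition, so no edge between $S$ and $T$ survives; thus $S\cup T$ is independent. Since $|S|=k=\gamma(k+m)=\gamma|S\cup T|$ exactly, $S\cup T$ is $\gamma$-balanced, of size $(1-o(1))\alpha_{\rm COMP}$.

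\emph{Probabilistic analysis.} The engine is the fresh-randomness filtration: conditioning on all edges examined before arrival $t$, the edges from the new vertex $v_t$ to the opposite part are still i.i.d.\ $\BER(p)$, so while Stage 1 is ongoing (hence both $|S|,|T|<k$) the vertex $v_t$ is added with conditional probability $(1-p)^{|\text{opposite set}|}\ge b^{-k}$. Hence, if among the first $N$ arrivals some side contributes at least $N/2$ vertices, the number added to it up to time $\min(\tau,N)$ stochastically dominates $\BIN(N/2,b^{-k})$; choosing $N$ to be a large constant times $n/\log_b n$ makes the mean of this Binomial a large multiple of $k$, so whp it exceeds $k$, which forces $\tau\le N$. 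As each side has $n$ vertices, the pigeonhole hypothesis holds for this $N=o(n)$, so whp $\tau=o(n)$ for \emph{every} (graph-oblivious) arrival order. In particular at most $o(n)$ vertices of $R$ are seen before $\tau$, so at least $n-o(n)$ vertices of $R$ arrive \emph{fresh} in Stage 2, each with all of its edges to the now-frozen $S$ unexamined; thus each is added to $T$ with conditional probability exactly $b^{-k}$ while $|T|<m$, so the number of Stage-2 additions dominates $\min\!\bigl(m-|T_\tau|,\ \BIN(n-o(n),b^{-k})\bigr)$. Since this Binomial has mean $\sim n\,b^{-k}=(\log_b n)^2\gg m$, a Chernoff bound gives $|T|=m$ whp; combining with Step 1, whp the algorithm outputs a $\gamma$-balanced independent set of size $(1-o(1))\alpha_{\rm COMP}$, hence of size at least $(1-\epsilon)\alpha_{\rm COMP}$ for any fixed $\epsilon>0$.

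I expect the main obstacle to be the termination bound $\tau=o(n)$: this is precisely what guarantees that enough \emph{fresh} $R$-vertices remain for Stage 2, and it must be proved uniformly over worst-case arrival orders (e.g.\ perfectly alternating $L,R,L,R,\dots$, which slows the greedy growth the most) while carefully bookkeeping which edges have already been queried so that the stochastic-domination step is legitimate. The resistance lower bound $b^{-k}$ together with the pigeonhole over the first $o(n)$ arrivals is what makes this robust, but turning that into a clean coupling with an i.i.d.\ Binomial process that is valid up to the random time $\tau$ is the delicate part.
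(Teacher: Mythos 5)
Your proposal is correct and follows essentially the same route as the paper: a two-stage online greedy with a stopping time and truncation on each side, a coupling of the acceptance events with i.i.d.\ Bernoulli variables to show Stage 1 terminates in $o(n)$ steps uniformly over (adaptive) arrival orders, and a Chernoff bound on the fresh $R$-vertices in Stage 2 (the paper's Lemmas~\ref{lemma: greedy analysis} and~\ref{lemma: balancing analysis}, with thresholds $(1-\epsilon)\log_b n$ in place of your $\log_b n-2\log_b\log_b n$). The only cosmetic difference is that the paper sidesteps your majority-side pigeonhole (and the attendant adaptivity issue you flag) by tracking the \emph{total} size of the constructed set and noting that reaching $2(1-\epsilon)\log_b n$ forces one side to reach $(1-\epsilon)\log_b n$, with the domination made rigorous by an explicit coupling valid up to the stopping time.
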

See Definition~\ref{def:OnlineAlg} for a description of online algorithms. The decision at time $t$ takes polynomial time, so the overall runtime of our algorithm is polynomial in $n$.
Notably, our algorithm is completely oblivious to the vertex arrival order. As mentioned earlier, while the analysis of the classical greedy algorithm on $G(n,p)$ is relatively straightforward, the situation here is more delicate. The presence of the $\gamma$-balancedness constraint, along with the random arrival order, introduces additional challenges. We address these by (i) analyzing suitable stochastic processes that track the algorithm's evolution, and (ii) employing a careful truncation to ensure the $\gamma$-balancedness.

Observe that there is a factor-$1/(1-\gamma)$ gap between the statistical threshold and the algorithmic value, reminiscent of the computational gap in the sparse setting. Our final result addresses this gap and establishes a sharp computational lower bound.
\begin{theorem}[Informal, see Theorem~\ref{theorem: impossibility}]\label{thm:OGP-informal}
For any $\epsilon>0$, no online algorithm finds a $\gamma$-balanced independent set of size $(1+\epsilon)\alpha_{\rm COMP}$ with probability at least $\exp(-O(\log^2 n))$.
\end{theorem}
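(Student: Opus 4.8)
The plan is to prove a sharp \emph{online $m$-overlap gap property} ($m$-OGP) for $\gamma$-balanced independent sets of size $k\defeq(1+\epsilon)\alpha_{\rm COMP}=(1+\epsilon)\log_b n/\gamma$ in $\ERB(n,p)$, following the online refinement of the OGP framework of~\cite{gamarnik2025optimal} and adapting it to the balanced, bipartite setting. We may assume $(1+\epsilon)(1-\gamma)\le1$, since otherwise $k>\alpha_{\rm STAT}$ and Theorem~\ref{thm:largest-IS-informal} (via a first-moment bound of order $b^{-\Omega(\log^2 n)}$) already shows no $\gamma$-balanced independent set of size $k$ exists except with probability $\exp(-\Omega(\log^2 n))$. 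Fix an arrival order; all estimates below are uniform in it, so the conclusion also holds for a uniformly random order. Recall that for an online algorithm $\A$ the decision on the $t$-th arriving vertex $v_t$ is measurable with respect to $\mathcal F_t$, the $\sigma$-field generated by the edges among $v_1,\dots,v_t$; in particular two instances agreeing on $\mathcal F_t$ yield outputs of $\A$ that agree on $\{v_1,\dots,v_t\}$. This \emph{prefix-agreement} property plays the role that stability plays in OGP arguments for low-degree/local algorithms. Throughout, ``$\A$ succeeds'' means $\A$ outputs a $\gamma$-balanced independent set of size $\ge k$ (it suffices to treat size exactly $k$, since larger outputs are easier to rule out by the same count).

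\emph{The branching construction.} Fix $\mu\in(0,1)$ to be chosen. Run $\A$ on an auxiliary copy $\tilde G\sim\ERB(n,p)$, put $\tilde I\defeq\A(\tilde G)$, and let $\sigma\defeq\min\{t:|\tilde I\cap\{v_1,\dots,v_t\}|\ge\mu k\}$ (with $\sigma=n$ if no such $t$ exists); this is a stopping time for $(\mathcal F_t)_t$ since the first $t$ decisions of $\A$ are $\mathcal F_t$-measurable. On the $\mathcal F_\sigma$-measurable event $A\defeq\{|\tilde I|\ge\mu k\}$, the set $Q\defeq\tilde I\cap\{v_1,\dots,v_\sigma\}$ has size $\mu k$ and is $\mathcal F_\sigma$-measurable. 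Conditionally on $\mathcal F_\sigma$, draw i.i.d.\ $G_1,\dots,G_m$, each obtained from $\mathcal F_\sigma$ by resampling afresh all edges incident to $\{v_{\sigma+1},\dots,v_n\}$; since $\sigma$ is a stopping time and the edges are independent, each $G_i\sim\ERB(n,p)$, and by prefix-agreement $I_i\defeq\A(G_i)$ satisfies $I_i\cap\{v_1,\dots,v_\sigma\}=Q$, so $Q\subseteq I_i$. Hence on $\mathcal E\defeq A\cap\{\A\text{ succeeds on each of }G_1,\dots,G_m\}$ we obtain $m$ $\gamma$-balanced independent sets $I_1,\dots,I_m$ of size $k$, in the correlated instances $G_1,\dots,G_m$, all containing a common core $Q$ of size $\mu k$.

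\emph{The $m$-OGP.} A first-moment bound shows that for a suitable $\mu$ and a large enough constant $m$ no such configuration exists, except with tiny probability. The number of candidate tuples $(Q,I_1,\dots,I_m)$ with $|Q|=\mu k$, $Q\subseteq\{v_1,\dots,v_\sigma\}$, $I_i\cap\{v_1,\dots,v_\sigma\}=Q$, and each $I_i$ a $\gamma$-balanced $k$-set, is at most $n^{\mu k}\cdot n^{m(1-\mu)k}$, while the probability that every $I_i$ is independent in $G_i$ equals $\prod_{i=1}^m b^{-(|I_i\cap L|\,|I_i\cap R|-|Q\cap L|\,|Q\cap R|)}$, since only the cross-pairs incident to $\{v_{>\sigma}\}$ are resampled, and independently across the $G_i$. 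Using $|I_i\cap L|\,|I_i\cap R|=\gamma(1-\gamma)k^2$ and the worst-case split $|Q\cap L|\,|Q\cap R|\le\mu^2k^2/4$, the base-$b$ exponent of the expected count is $(\mu+m(1-\mu))k\log_b n+m\mu^2k^2/4-m\gamma(1-\gamma)k^2+O(\log\log n)$; dividing by $mk\log_b n$ and letting $n\to\infty$ then $m\to\infty$ this tends to $f(\mu)\defeq(1-\mu)+\frac{\mu^2(1+\epsilon)}{4\gamma}-(1-\gamma)(1+\epsilon)$. A short computation shows $f$ is minimized at $\mu^\star=\frac{2\gamma}{1+\epsilon}\in(0,1)$ with $f(\mu^\star)=\frac{(1-\gamma)\bigl(1-(1+\epsilon)\bigr)\bigl((1+\epsilon)-\frac{\gamma}{1-\gamma}\bigr)}{1+\epsilon}<0$ for every $\gamma\in(0,\tfrac12]$ and $\epsilon>0$ (at $\gamma=\tfrac12$ it equals $-\frac{\epsilon^2}{2(1+\epsilon)}$). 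Hence the open set $\{\mu:f(\mu)<0\}$ is nonempty; fixing $\mu$ in it and then $m$ a large enough constant makes the exponent at most $-\Omega(m\log^2 n)$, so by Markov $\Pr[\mathcal E]\le\exp\bigl(O(\log^2 n)-\Omega(m\log^2 n)\bigr)$.

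\emph{The contradiction.} Let $q\defeq\Pr[\A\text{ succeeds}]$ on a fresh $\ERB(n,p)$, and let $p(\cdot)\defeq\Pr[\A\text{ succeeds on }G_i\mid\mathcal F_\sigma=\cdot]$, the same function for each $i$ by conditional exchangeability; then $\Pr[\mathcal E]=\E[\mathbbm 1_A\,p(\mathcal F_\sigma)^m]$. Since $\{\A\text{ succeeds on }G_1\}\subseteq A$ — if $A$ fails then $\sigma=n$, forcing $G_1=\tilde G$ with output of size $<\mu k<k$ — we have $\E[\mathbbm 1_A\,p(\mathcal F_\sigma)]=\Pr[\{\A\text{ succeeds on }G_1\}]=q$, and conditional Jensen ($x\mapsto x^m$ convex) gives $\Pr[\mathcal E]=\E[\mathbbm 1_A\,p(\mathcal F_\sigma)^m]\ge\E[\mathbbm 1_A\,p(\mathcal F_\sigma)]^m/\Pr[A]^{m-1}\ge q^m$. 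Combining with the $m$-OGP bound yields $q^m\le\exp\bigl(O(\log^2 n)-\Omega(m\log^2 n)\bigr)$, and taking $m$ large enough to absorb the $O(\log^2 n)/m$ term gives $q\le\exp(-\Omega(\log^2 n))$, which is the claim. The main obstacle is the first-moment step: because the core $Q$ is produced by $\A$ under an arbitrary (possibly adversarial) arrival order it may be badly imbalanced, forcing the worst-case split $\mu^2k^2/4$, and one must then verify that the window $\{f(\mu)<0\}$ is nonempty \emph{uniformly} in $\gamma\le\tfrac12$ — precisely where the problem is sharp, the window shrinking to a point as $\epsilon\to0$ at $\gamma=\tfrac12$ — while checking that the lower-order entropy terms are harmless and that $m$ may be taken constant. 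A secondary technical point is making the stopping-time/branching construction interact correctly with the online model: $\mathcal F_\sigma$-measurability of $\sigma$, $Q$ and $A$, the identity $G_i\sim\ERB(n,p)$, conditional independence of the $G_i$ given $\mathcal F_\sigma$, and the containment $\{\A\text{ succeeds on }G_1\}\subseteq A$.
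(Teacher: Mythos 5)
Your proposal follows the same overall framework as the paper's proof: reduce (implicitly) to a deterministic algorithm, branch the instance at a stopping time into $m$ copies that agree on all exposed edges and have the unexposed edges resampled independently, use prefix-agreement of online algorithms in place of stability, prove $\Pr[\mathcal E]\ge q^m$ by conditioning on the stopped $\sigma$-field and Jensen (this is exactly the paper's Proposition~\ref{prop: lb}), and kill $\Pr[\mathcal E]$ by a first-moment count over tuples of balanced independent sets sharing a common core (the paper's Proposition~\ref{prop: ub}). Where you genuinely differ is in the stopping rule and the counting lemma. The paper stops when \emph{one side} of the partial output reaches $(1-\mu)\log_b n$ with the fixed choice $\mu=\epsilon^2/2$, and its Lemma~\ref{lemma: key prob estimate} parametrizes the core by the minority-side size $\beta$, getting negativity of the exponent from $\theta_i,\theta_i'\ge(1+\epsilon)\log_b n$ versus the $(1-\mu)\log_b n$ cap per side. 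You stop when the \emph{total} partial output reaches $\mu k$, bound the core's cross-pair count by the worst-case split $\mu^2k^2/4$, and optimize the resulting rate function $f(\mu)$; your computation of $\mu^\star=2\gamma/(1+\epsilon)$ and $f(\mu^\star)=1-\frac{\gamma}{1+\epsilon}-(1-\gamma)(1+\epsilon)<0$ for $\gamma\le\tfrac12$ is correct (margin $\Theta(\epsilon^2)$ at $\gamma=\tfrac12$, matching the paper's $\exp(-O(\epsilon^2\log_b^2 n))$ scale). Your version buys a cleaner one-parameter optimization but needs the nonemptiness of $\{f<0\}$ checked; the paper's one-sided stopping rule avoids any optimization and any worst-case-split bookkeeping, at the cost of the extra parameter $\beta$ in the union bound. (One small remark: the dangerous core split for your union bound is the \emph{balanced} one, since it maximizes $|Q\cap L||Q\cap R|$ and hence the probability; your prose calls it ``badly imbalanced,'' but the bound you actually use is the correct maximum.)

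Two points need patching, neither fatal. First, the reduction to outputs of size exactly $k$ is not justified by truncation: a $\gamma$-balanced independent set of size $a_i>k$ need not contain a $\gamma$-balanced subset of size $k$ that still contains the common core $Q$ (indeed $|Q|=\mu^\star k>\gamma k$, so keeping $Q$ can force an imbalanced trace). The correct fix is what the paper does—sum over the size vector $\mathbf a$ with $a_i\ge k$—and then verify that your per-copy exponent $(a_i-\mu k)\log_b n+\mu^2k^2/4-\gamma(1-\gamma)a_i^2$ is decreasing in $a_i$ for $a_i\ge k$ (it is, since $\log_b n<2\gamma(1-\gamma)k$), so the worst case is $a_i=k$ and the extra sum costs only $\exp(O(m\log n))$. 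Second, prefix-agreement and the claim that $p(\mathcal F_\sigma)$ is the same function for every copy require the algorithm to be deterministic; you should first fix the internal seed as in the paper's Lemma~\ref{lem:reduc_to_det_algos} (``fix an arrival order'' is not quite the right formulation, since the order is chosen adaptively by the algorithm), and the union bound over the random stopped configuration should be organized, as in the paper, by summing over the $2n$ possible values of $\sigma$ and conditioning on $\mathcal F_T$. With these routine repairs your argument goes through.
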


We note that there is no restriction on the runtime of the algorithms ruled out. Furthermore, Theorem~\ref{thm:OGP-informal} establishes strong hardness—it rules out online algorithms that succeed even with vanishing probability. Importantly, the probability guarantee is essentially optimal: the probability that a randomly chosen $\gamma$-balanced set of size $(1+\epsilon)\alpha_{\rm COMP}$ is an independent set is itself $\exp(-\Theta(\log^2 n))$.

Taken together, Theorems~\ref{thm:alg-informal} and~\ref{thm:OGP-informal} provide a tight characterization of the performance of online algorithms, providing rigorous evidence toward the following conjecture (we note that a version of this conjecture in the sparse regime appears in \cite{perkins2024hardness}).
\begin{conjecture}
    For any $\epsilon>0$ and $p = \Theta(1)$, no polynomial-time algorithm finds a $\gamma$-balanced independent set of size $(1+\epsilon)\alpha_{\rm COMP}$ in $\ERB(n, p)$ whp.
\end{conjecture}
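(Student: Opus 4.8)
The final statement is a \emph{conjecture}, not a theorem, so there is no proof to propose in the literal sense. What I can offer is a roadmap describing the \emph{kind} of evidence one would marshal toward it, and why a complete proof is presently out of reach. The honest assessment is that an unconditional resolution of this conjecture would imply $P \neq NP$ (indeed, something stronger, exactly as the excerpt notes for the analogous sparse and $G(n,p)$ problems), so the realistic target is a body of rigorous evidence against natural algorithmic classes plus a structural explanation (the overlap gap property) for why the barrier at $\alpha_{\rm COMP}$ should be universal.

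\textbf{Step 1: The OGP obstruction as the structural backbone.} The plan is to first establish that the set of $\gamma$-balanced independent sets of size slightly above $\alpha_{\rm COMP}$ exhibits an overlap gap: there exist $0 \le \nu_1 < \nu_2$ such that for any two such sets $I, I'$ (possibly constructed over a suitable interpolation path of instances, to get the \emph{ensemble} OGP needed to rule out stable algorithms), their normalized overlap never lands in $(\nu_1, \nu_2)$. The computation here mirrors the first-moment/second-moment analysis underlying Theorem~\ref{thm:largest-IS-informal}: one counts pairs of $\gamma$-balanced sets of size $k = (1+\epsilon)\alpha_{\rm COMP}$ with prescribed overlap $\ell$ on each side, and shows the expected count, as a function of $\ell$, has the characteristic ``two clusters, forbidden middle'' shape once $k$ exceeds the computational threshold. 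The bipartite bookkeeping — tracking $|I \cap L \cap I' \cap L|$ and $|I \cap R \cap I' \cap R|$ separately, weighted by $\gamma$ and $1-\gamma$ — is the new ingredient relative to $G(n,p)$, but it is the same flavor of calculation as in \cite{perkins2024hardness}. This step is essentially self-contained modulo careful exponents.

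\textbf{Step 2: Leveraging the OGP against broad algorithmic classes.} With the ensemble (and, ideally, multi-overlap / branching) OGP in hand, one invokes the now-standard machinery to rule out \emph{stable} polynomial-time algorithms — chiefly low-degree polynomial algorithms and Lipschitz/local algorithms — following \cite{gamarnik2014limits,wein2020optimal,rahman2017local,huang2025strong}: a successful stable algorithm, run on an interpolating family of instances, would produce a continuously-varying family of large $\gamma$-balanced independent sets whose overlaps must pass through the forbidden interval, a contradiction. For the online class specifically, one appeals to Theorem~\ref{thm:OGP-informal}, already proved in this paper via the refined online-OGP framework of \cite{gamarnik2025optimal}. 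Taken together these cover the algorithmic paradigms that are known to be effective on dense random graph problems; crucially, as the excerpt stresses via Conjecture~\ref{conj:LDP}, LDP algorithms are not even expected to reach $\alpha_{\rm COMP}$, so the online lower bound is the operative barrier and the LDP lower bound is a consistency check rather than the main thrust.

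\textbf{Step 3: Reduction-based and heuristic corroboration, and the honest gap.} Beyond algorithm-class lower bounds, one would seek average-case reductions — e.g. from planted clique or from the planted variant of balanced independent set — transferring hardness of a planted detection/recovery task to the optimization task of beating $\alpha_{\rm COMP}$; and one would verify that the statistical-physics replica/cavity heuristics (the ``glassy phase'' alluded to around \cite{mezard1987mean}) predict a dynamical threshold coinciding with $\alpha_{\rm COMP}$. \textbf{The main obstacle}, and the reason this remains a conjecture, is precisely that none of these routes yields an unconditional statement: OGP arguments are confined to restricted (if powerful) algorithm classes and do not preclude a clever non-stable, non-online polynomial-time algorithm; reductions only relativize hardness to other conjecturally-hard problems; and an unconditional proof would, as noted, separate $P$ from $NP$. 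So the deliverable is a convergence of independent evidence — matching algorithmic upper bound at $\alpha_{\rm COMP}$ (Theorem~\ref{thm:alg-informal}), OGP-based lower bounds for every natural algorithmic framework (Theorem~\ref{thm:OGP-informal} plus the LDP analogue), near-optimal success-probability hardness, and agreement with physics predictions — which is the standard form that support for such statistical–computational-gap conjectures takes in this literature.
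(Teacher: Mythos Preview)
Your assessment is correct: this is a conjecture, and the paper offers no proof of it. The paper's own treatment is strictly narrower than your proposal --- it simply states the conjecture immediately after Theorems~\ref{thm:alg-informal} and~\ref{thm:OGP-informal} and cites those two results (the matching online upper and lower bounds at $\alpha_{\rm COMP}$) as the rigorous evidence, noting also the parallel with the sparse-regime conjecture in \cite{perkins2024hardness}. Your Step~2 already captures exactly this. Your Steps~1 and~3 (a direct OGP calculation for stable/LDP classes in the dense bipartite setting, planted reductions, replica predictions) go beyond what the paper actually carries out; they are reasonable extensions of the evidence program but are not part of the paper's contribution, and indeed the paper explicitly flags (via Conjecture~\ref{conj:LDP}) that the LDP route may be vacuous in the dense regime. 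So: you have correctly identified both the status of the statement and the paper's supporting evidence, and your broader roadmap is sound but supererogatory relative to what the paper does.
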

\paragraph{Future Queries}
The setting of~\cite{gamarnik2025optimal} permits querying a limited set of \emph{future edges}—edges incident to vertices not yet seen—at each step. That is, the decision at time $t$ is based not only on the edges revealed so far, but also on a restricted set of such future edges. In this augmented model,~\cite{gamarnik2025optimal} prove both lower bounds and algorithmic guarantees: specifically, they show that algorithms with modest access to future information can in fact exceed the $(1+\epsilon)\log_b n$ threshold, albeit using quasi-polynomial time.

This naturally raises the following question: can algorithms with limited future information outperform the $(1+\epsilon)\alpha_{\rm COMP}$ threshold for $\ERB(n,p)$? We show that the answer is yes:
\begin{theorem}[Informal, see Theorem~\ref{thm:surpass}]\label{thm:surpass-informal}
For any $\epsilon>0$, there exists an online algorithm which makes limited future queries and finds a $\gamma$-balanced independent set of size $(1+\epsilon)\alpha_{\rm COMP}$ with probability at least $1-\exp(-n^{\Theta(1)})$.
\end{theorem}
Our algorithm is online, and runs in super-polynomial time. It has a mild dependence on the vertex arrival order—indeed, no online algorithm that is fully oblivious to arrival order can surpass $(1+\epsilon)\alpha_{\rm COMP}$. See Section~\ref{sec:future-sec} for further discussion.

\subsection{Proof Overview}

In this section, we will provide an overview of our proof techniques.
The proof of the statistical threshold follows a standard application of the first and the second moment similar to that of $G(n, p)$ tailored to the bipartite setting.
Much of our effort is in the proof of the computational threshold.

\paragraph{Achievability result}
The greedy algorithm for ordinary independent sets is inherently online, however, as mentioned earlier the global nature of the $\gamma$-balancedness constraint introduces technical challenges.
We overcome these challenges by designing a two-stage online algorithm with truncation steps, i.e., we stop adding vertices to the independent set $I$ from a partition $\eta$ once we have reached the desired number of vertices in $I \cap \eta$.
More formally:
\begin{enumerate}
    \item In stage one, we greedily add vertices to the independent set $I$ until $I$ contains $(1-\epsilon)\gamma\alpha_{\rm COMP}$ vertices from one partition.
    \item At this point, without loss of generality, let us assume $|I \cap L| = (1-\epsilon)\gamma\alpha_{\rm COMP}$.
    During stage two, we only add vertices in $R$ to the independent set, stopping once $|I \cap R| = (1-\epsilon)(1 - \gamma)\alpha_{\rm COMP}$.
\end{enumerate}
Note that there are two ``truncation'' points in our algorithm (one in each stage).
Furthermore, since $\gamma \leq 1/2$ we may assume $|I \cap R| < (1-\epsilon)(1 - \gamma)\alpha_{\rm COMP}$ at the beginning of stage two.

In order to prove our acheivability result, we must show that enough vertices from $R$ are added to $I$ during stage two whp.
Under the assumption that $|I \cap L| = (1-\epsilon)\gamma\alpha_{\rm COMP}$, the probability that an arbitrary vertex in $R$ considered during stage two is in fact added to $I$ is $n^{-1 + \epsilon}$.
Therefore, it is enough to show that $\gg n^{1-\epsilon}$ vertices in $R$ remain to be exposed during stage two.
The key part of our analysis, therefore, is the following result: \textit{stage one concludes in at most $n^{1-\epsilon/2}$ steps whp irrespective of the vertex arrival order} (Lemma~\ref{lemma: greedy analysis}).

\paragraph{Impossibility result}
The proof of the upper bound of our computational threshold falls within the OGP framework (see Section~\ref{subsection: OGP} for a brief history of the technique).
As mentioned earlier, OGP-based arguments have predominantly served as a barrier to stable
algorithms.
As online algorithms may not be stable, one needs to refine the approach to adapt it to this setting.
Our work is one of the first to do so.

Given an algorithm $\mathcal{A}$, we aim to bound the probability, denoted by $\delta$, that $\mathcal{A}$ finds a $\gamma$-balanced independent set of size at least $(1+\epsilon)\alpha_{\rm COMP}$.
At the heart of our proof lies a sequence of correlated random graphs $(G_{i}^{(T)})_{i\in [m], T \in [2n]}$ for $m = \Theta(\epsilon^{-2})$.
We define a \textit{successful event $\mathcal{S}$} determined by running the algorithm $\mathcal{A}$ on each of the graphs $G_i^{(T)}$.
Roughly speaking, $\mathcal{S}$ denotes the event that for a specific timestep $\tau \coloneqq \tau\left(\mathcal{A},\, (G_{i}^{(T)})_{i\in [m], T \in [2n]}\right)$, the independent sets $\mathcal{A}(G_1^{(\tau)}), \ldots, \mathcal{A}(G_m^{(\tau)})$ all have size at least $(1+\epsilon)\alpha_{\rm COMP}$.
We then show the following:
\begin{enumerate}
    \item $\mathbb{P}[\mathcal{S}] \geq \delta^m$, and
    \item $\mathbb{P}[\mathcal{S}] = \exp(-\Omega(\log_b^2n))$.
\end{enumerate}
Combining the above completes the proof.

We note that the sequence of correlated graphs $(G_{i}^{(T)})_{i\in [m], T \in [2n]}$ is defined with respect to the algorithm $\mathcal{A}$.
In particular, the vertex arrival order determines the correlations within the sequence.
This is in stark contrast to other applications of the OGP framework and is the key refinement to adapt the OGP technique to the (potentially unstable) online setting.
For instance, in \cite{wein2020optimal, perkins2024hardness, dhawan2024low}, the authors define sequences of correlated random (hyper)graphs independent of the algorithm $\mathcal{A}$ in question.
They then apply the algorithm to each graph to construct a \textit{forbidden substructure} whp—a sequence of independent sets which appear with low probability; therefore, arriving at a contradiction.

\subsection{Open Problems}

We conclude this introduction with a description of potential future directions of inquiry.

\paragraph{One-Stage Algorithm} Our algorithm relies on a two-stage structure to enforce the $\gamma$-balancedness constraint. A natural question is whether a one-stage algorithm—akin to greedy on $G(n,p)$—can achieve $\alpha_{\rm COMP}$. This seems difficult due to the $\gamma$-balancedness constraint, though one might try introducing a bias at each step: based on the vertex’s side of the bipartition, current balance, and connectivity, decide via a coin flip. We leave this for future work.

\paragraph{Hypergraphs} A recent work~\cite{dhawan2024low} extends the results of~\cite{gamarnik2014limits,rahman2017local,wein2020optimal} to sparse random $r$-uniform hypergraphs, obtaining algorithmic guarantees and sharp computational lower bounds for LDP algorithms, and demonstrating the emergence of an analogous statistical–computational gap. It also investigates the universality of this gap via a multipartite hypergraph version of the largest balanced independent set problem (introduced in \cite{dhawan2025balanced}) in sparse $r$-uniform $r$-partite hypergraphs, recovering and generalizing the results of~\cite{perkins2024hardness}. Extending our results and the results of \cite{gamarnik2025optimal} to dense random hypergraphs are both interesting directions for future work.

\paragraph{Optimization Problems with Global Constraints}
The balanced independent set problem in bipartite graphs is an example of how a problem in $P$ can be made $NP$-hard by imposing global constraints.
It is worth investigating whether such gaps persist across similar problems.
For instance, the largest induced matching problem—finding the largest matching $M \subseteq E(G)$ such that $E(G[V(M)]) = M$—is $NP$-hard \cite{cameron1989induced}.
The statistical threshold was determined in \cite{cooley2021large}, while the computational threshold is unknown.
Another problem to explore would be the $m$-partite graph analogue of the largest balanced independent set problem; we note that a coloring variant of this problem was suggested in \cite{chakraborti2023extremal} for deterministic graphs.

\section{Statistical-Computational Gaps and OGP: Prior Work}
A wide range of problems across probability theory, computer science, high-dimensional statistics, and machine learning involve randomness and exhibit a common phenomenon: a statistical-computational gap. That is, there is often a discrepancy between what is information-theoretically possible and what is achievable by efficient algorithms. For example, in random optimization problems such as the one we study, the optimal value can often be identified through non-constructive means. However, known polynomial-time algorithms yield strictly suboptimal solutions, and no efficient method is known for finding a global optimum without brute-force search. The models with such an apparent gap include random constraint satisfaction problems (CSP)~\cite{mezard2005clustering,achlioptas2006solution,achlioptas2008algorithmic,gamarnik2017performance,bresler2021algorithmic,yung2024}, spin glass models~\cite{chen2019suboptimality,huang2021tight,huang2023algorithmic,gamarnikjagannath2021overlap,gamarnik2020low,gamarnik2023shattering,kizildaug2023sharp,sellke2025tight}, number balancing and discrepancy minimization~\cite{mallarapu2025strong,gamarnik2023algorithmic,gamarnik2023geometric}, Ising perceptron~\cite{gamarnik2022algorithms,li2024discrepancy}, as well as various computational problems over random graphs~\cite{gamarnik2014limits,gamarnik2017,gamarnik2020low,wein2020optimal,perkins2024hardness,ding2023low,dhawan2024low} and more.

Standard complexity theory is tailored primarily to worst-case hardness and offers limited insight into average-case models (see~\cite{ajtai1996generating,boix2021average,GK-SK-AAP,vafa2025symmetric} for a few notable exceptions). Nevertheless, these gaps are a very active area of investigation; researchers have developed various frameworks for providing rigorous evidence of hardness. For an overview of these methods, we refer the reader to the excellent surveys~\cite{wu2018statistical,bandeira2018notes,gamarnik2021overlap,gamarnik2022disordered,gamarnik2025turing,wein2025computational}. 
\subsection{Computational Gaps in Random Optimization Problems}\label{subsection: OGP}
For random optimization problems, arguably the most powerful framework for establishing algorithmic hardness is the Overlap Gap Property (OGP) introduced by Gamarnik and Sudan~\cite{gamarnik2014limits} (and formally named in~\cite{gamarnik2018finding}). Building on insight from statistical physics—particularly the intriguing connection between the onset of algorithmic hardness and geometric phase transitions in random CSPs~\cite{mezard2005clustering,achlioptas2008algorithmic,achlioptas2006solution}—the OGP framework has proven instrumental in establishing rigorous algorithmic barriers by leveraging intricate geometry of the optimization landscape. For surveys on OGP, see~\cite{gamarnik2021overlap,gamarnik2025turing}.

We briefly describe this framework in its original context: finding large independent sets in sparse random graphs. Gamarnik and Sudan~\cite{gamarnik2014limits} established that independent sets of size $(1+1/\sqrt{2})n\frac{\log d}{d}$ exhibit an `overlap-gap': any two such sets have either a large or a small intersection, with no overlaps of intermediate size. This structural property enabled them to rule out local algorithms at this threshold, thereby refuting a conjecture
of Hatami-Lov\'asz-Szegedy~\cite{hatami2014limits} which posited that local algorithms can find maximum independent sets in $d$-regular random graphs. 

Subsequent work by Rahman-Vir\'ag~\cite{rahman2017local} extended this hardness result down to the sharp threshold $n\frac{\log d}{d}$, below which polynomial-time algorithms are known~\cite{lauer2007large}. Unlike the pairwise OGP considered in earlier work, their approach relied on analyzing overlaps among multiple independent sets—a notion termed multi-OGP—to obtain tight lower bounds. Recent works have introduced refined variants of the multi OGP: for instance, asymmetric versions of OGP yielded tight lower bounds against LDP algorithms~\cite{wein2020optimal,bresler2021algorithmic}, and the branching OGP~\cite{huang2021tight} has emerged as a very powerful tool in the study of spin glasses. 
The OGP framework has since become the `bread-and-butter' for proving sharp computational lower bounds in numerous random optimization problems. For certain models—such as Ising perceptron and discrepancy minimization—OGP-based hardness results are complemented by more traditional notions of average-case hardness (e.g., worst-case hardness of approximating the shortest vector in lattices)~\cite{vafa2025symmetric}.\footnote{Interestingly, there exist models exhibiting the OGP, which remain solvable in polynomial time (e.g., by linear programming)~\cite{li2024some}—beyond the classical counterexample of random XOR-SAT solvable by Gaussian elimination.} The literature on OGP is now quite extensive; we refer the reader to references above.
\paragraph{OGP for Online Algorithms} The OGP framework is primarily tailored for stable algorithms—those whose output is insensitive to small variations in the input.\footnote{Informally, an algorithm $\A$ is stable if for any inputs $G,G'$ with small $\|G-G'\|$, the outputs $\A(G),\A(G')$ are close.} Many prominent algorithms for average-case models fall into this category, including local algorithms (e.g., factors of iid)~\cite{gamarnik2014limits,rahman2017local}, LDP algorithms, approximate message passing~\cite{gamarnikjagannath2021overlap}, Boolean circuits with low depth~\cite{gamarnik2021circuit}, as well as gradient descent and Langevin dynamics~\cite{gamarnik2020low}. OGP-based hardness arguments rely critically on this stability—for instance, to construct interpolation arguments showing that the algorithm's trajectory evolves smoothly and thus avoids the intermediate overlap region which is forbidden in the solution space. Online algorithms, however, may be unstable—see~\cite[Proposition~1.1]{gamarnik2025optimal}.

Can OGP-based barriers be extended to online algorithms? This question is especially relevant in the modern era of big data, where the online setting is a natural model of decision-making under uncertainty and have been extensively studied in the optimization and machine learning literature~\cite{rakhlin2010online,rakhlin2011online-a,rakhlin2011online-b,rakhlin2013online,hazan2016introduction}. This question has first been addressed in~\cite{gamarnik2023geometric}, where sharp lower bounds for online algorithms were obtained for the Ising perceptron. More recently, OGP-based barriers for online algorithms were obtained for the graph alignment problem~\cite{du2025algorithmic} and the largest submatrix problem~\cite{bhamidi2025finding}. 

Extending such online barriers to random graphs—the very setting where the OGP has first emerged in—turned out to be quite challenging. This is particularly due to the lack of stability, a key feature that OGP-based arguments crucially build on. The first lower bounds for online algorithms in $G(n,p)$ were obtained in~\cite{gamarnik2025optimal} through novel technical refinements. Their arguments include (i) the construction of temporal interpolation paths that evolve with the algorithm (in contrast to earlier OGP-based barriers, which are algorithm-independent) as well as (ii) the use of stopping times tracking the size of the output. In the present paper, we extend the techniques of~\cite{gamarnik2025optimal} to the bipartite setting where the arguments are further refined to (i) incorporate the $\gamma$-balancedness constraint, and (ii) handle the random arrival order, which may reveal very few cross-edges—thus providing limited information. 
\section{Main Results}
We begin by determining the size of the largest $\gamma$-balanced independent set in $\ERB(n,p)$ for constant $p$. Recall from~\eqref{eq:a-star} that $\alpha_{\rm STAT}=\frac{\log_b n}{\gamma(1-\gamma)}$, where $b=1/(1-p)$.

\begin{theorem}\label{thm:Max-IS}
Let  $Z_\alpha(\gamma)$ denote the number of $\gamma$-balanced independent sets of size $\alpha$. For any fixed $\epsilon\in(0,1)$ and $\alpha_{\rm STAT}$ as in~\eqref{eq:a-star}, the following hold:
\begin{enumerate}[label = (S\arabic*)]
    \item\label{stat: FMM} For $\alpha\ge (1+\epsilon)\alpha_{\rm STAT}$, $\mathbb{P}[Z_\alpha(\gamma)\ge 1] = \exp(-\Theta(\log^2 n))$.
    \item\label{stat: SMM} For $\alpha \le (1-\epsilon)\alpha_{\rm STAT}$, $\mathbb{P}[Z_\alpha(\gamma)\ge 1]\ge 1-\exp(-\Theta(\log n))$.
\end{enumerate}
\end{theorem}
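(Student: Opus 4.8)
The plan is a standard first-and-second-moment argument, carried out carefully in the bipartite setting. Fix $\gamma \in (0,\tfrac12]$ and a candidate size $\alpha$. A $\gamma$-balanced set of size $\alpha$ consists of $\gamma\alpha$ vertices on one side and $(1-\gamma)\alpha$ on the other (up to the floor/ceiling conventions we are ignoring, and up to the choice of which side is the ``$\gamma$ side''); there are roughly $2\binom{n}{\gamma\alpha}\binom{n}{(1-\gamma)\alpha}$ such sets, and each is independent precisely when none of the $\gamma\alpha \cdot (1-\gamma)\alpha$ potential cross-edges is present, which has probability $(1-p)^{\gamma(1-\gamma)\alpha^2} = b^{-\gamma(1-\gamma)\alpha^2}$. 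Hence
\begin{equation}\label{eq:first-moment-plan}
    \EE[Z_\alpha(\gamma)] \;\asymp\; \binom{n}{\gamma\alpha}\binom{n}{(1-\gamma)\alpha}\, b^{-\gamma(1-\gamma)\alpha^2}.
\end{equation}
Using $\binom{n}{k} \le n^k$ and $\binom{n}{k}\ge (n/k)^k$, the dominant behavior of the logarithm (base $b$) of the right-hand side is $\alpha\log_b n - \gamma(1-\gamma)\alpha^2 + (\text{lower-order terms of order }\alpha\log_b\log_b n)$, since $\alpha = \Theta(\log n)$ and $\log_b k = \log_b\alpha + O(1) = \log_b\log_b n + O(1)$. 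Setting this leading expression to zero gives exactly $\alpha = \alpha_{\rm STAT} = \frac{\log_b n}{\gamma(1-\gamma)}$, which is where the threshold comes from.

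For the upper bound~\ref{stat: FMM}: when $\alpha \ge (1+\epsilon)\alpha_{\rm STAT}$, plugging into~\eqref{eq:first-moment-plan} and simplifying, the exponent becomes $\alpha\log_b n - \gamma(1-\gamma)\alpha^2 + O(\alpha\log_b\log_b n) \le -\epsilon\,\gamma(1-\gamma)\alpha_{\rm STAT}^2\,(1+o(1)) = -\Theta(\log^2 n)$ (the $\log_b\log_b n$ correction is lower order than the $\log^2 n$ term). Markov's inequality, $\Prob[Z_\alpha(\gamma)\ge 1]\le \EE[Z_\alpha(\gamma)]$, then gives $\Prob[Z_\alpha(\gamma)\ge 1] = \exp(-\Theta(\log^2 n))$; a matching lower bound of the same order holds because a single fixed $\gamma$-balanced set is independent with probability $b^{-\gamma(1-\gamma)\alpha^2} = \exp(-\Theta(\log^2 n))$, so the probability cannot be smaller than that. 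One should double-check monotonicity in $\alpha$ so that the bound at $\alpha = (1+\epsilon)\alpha_{\rm STAT}$ suffices for all larger $\alpha$ (the quadratic exponent makes larger $\alpha$ only smaller, so a union bound over the $O(n)$ relevant sizes is harmless).

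For the lower bound~\ref{stat: SMM}: when $\alpha \le (1-\epsilon)\alpha_{\rm STAT}$ we have $\EE[Z_\alpha(\gamma)] = \exp(+\Theta(\log^2 n)) \to \infty$, and we apply the second moment method, $\Prob[Z_\alpha(\gamma)\ge 1]\ge \EE[Z_\alpha(\gamma)]^2/\EE[Z_\alpha(\gamma)^2]$. Writing $Z_\alpha(\gamma) = \sum_I \ind{I \text{ independent}}$ over $\gamma$-balanced sets $I$, we expand $\EE[Z_\alpha(\gamma)^2] = \sum_{I,J}\Prob[I,J\text{ both independent}]$ and classify pairs $(I,J)$ by the overlap parameters $k = |I\cap J\cap L|$ and $\ell = |I\cap J\cap R|$ (and by whether $I,J$ put their $\gamma$-fraction on the same side). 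For such a pair the number of present-forbidden cross-pairs is $\gamma\alpha\cdot(1-\gamma)\alpha \cdot 2 - k\ell - (\gamma\alpha - k)((1-\gamma)\alpha-\ell)$ or the analogous count, i.e.\ inclusion–exclusion on the two cross-edge sets, so $\Prob[I,J\text{ ind.}] = b^{-(2\gamma(1-\gamma)\alpha^2 - k\ell - (\gamma\alpha-k)((1-\gamma)\alpha-\ell))}$ in the same-orientation case. The standard task is then to show $\sum_{k,\ell}(\text{count of such pairs})\cdot\Prob[\cdot] = (1+o(1))\EE[Z_\alpha(\gamma)]^2$, i.e.\ that the diagonal-ish terms ($k,\ell$ small relative to $\alpha$, which is where the $\gamma$-fractions of $I$ and $J$ are essentially disjoint) dominate. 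This reduces to bounding, for each $(k,\ell)$, a ratio of the form $\binom{\gamma\alpha}{k}\binom{n-\gamma\alpha}{\gamma\alpha-k}\binom{\gamma\alpha}{k}^{-1}\cdots \cdot b^{k\ell + (\gamma\alpha-k)((1-\gamma)\alpha-\ell) - \gamma(1-\gamma)\alpha^2}$ and showing it is summably small; the key inequality is that the ``overlap gain'' $b^{k\ell+\cdots}$ is dominated by the combinatorial loss from forcing the overlap, which holds strictly because $\alpha$ is bounded away from $\alpha_{\rm STAT}$.

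The main obstacle I anticipate is the second-moment computation~\ref{stat: SMM}: unlike the $G(n,p)$ case there are two overlap parameters $(k,\ell)$ rather than one, and one must also handle the two orientations (whether $I$ and $J$ place their heavier part on the same side) and verify that the ``mixed orientation'' terms — where the $\gamma$-part of $I$ overlaps the $(1-\gamma)$-part of $J$ — are also negligible. Organizing the double sum over $(k,\ell)$ and showing uniform exponential decay away from $(0,0)$, while keeping the bookkeeping of the inclusion–exclusion on cross-edges correct, is the delicate part; everything else (the first-moment estimate, Markov, the matching lower bound on the probability in~\ref{stat: FMM}) is routine. Throughout, one keeps in mind that $\alpha = \Theta(\log n)$ so that binomial coefficients contribute $\alpha\log_b n$ to leading order with $O(\alpha\log_b\log_b n)$ corrections, and these corrections are always dominated by the $\Theta(\log^2 n)$ main terms.
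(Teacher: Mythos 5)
Your overall route is the same as the paper's: first moment plus Markov for \ref{stat: FMM} (with the single-set probability $b^{-\gamma(1-\gamma)\alpha^2}=\exp(-\Theta(\log^2 n))$ supplying the matching lower bound), and a second-moment/Paley--Zygmund argument over the two overlap parameters for \ref{stat: SMM}. However, the central formula in your second-moment sketch is wrong. For same-orientation pairs $I=(L',R')$, $J=(L'',R'')$ with $k=|L'\cap L''|$, $\ell=|R'\cap R''|$, the union of the two sets of forbidden cross-pairs $L'\times R'$ and $L''\times R''$ has size exactly $2\gamma(1-\gamma)\alpha^2-k\ell$, since the only double-counted pairs are those in $(L'\cap L'')\times(R'\cap R'')$. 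Your extra subtraction of $(\gamma\alpha-k)\bigl((1-\gamma)\alpha-\ell\bigr)$ has no justification: those pairs lie in $L'\times R'$ and must be absent for $I$ to be independent. The error is not cosmetic; with your exponent, already at $k=\ell=0$ two disjoint sets would be assigned joint probability $b^{-\gamma(1-\gamma)\alpha^2}$ rather than $b^{-2\gamma(1-\gamma)\alpha^2}$, i.e.\ the indicators would look maximally correlated instead of independent, and the $(k,\ell)=(0,0)$ block alone would contribute roughly $\mathbb{E}[Z]^2\, b^{\gamma(1-\gamma)\alpha^2}\gg\mathbb{E}[Z]^2$, so the second-moment method would fail as written.

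Beyond that slip, the step you yourself flag as ``the delicate part'' is exactly where the proof lives, and it is not carried out. After the (corrected) exponent, the ratio $\mathbb{E}[Z^2]/\mathbb{E}[Z]^2$ becomes a sum over $(i_1,i_2)$ of terms bounded by $\exp\bigl(-(i_1+i_2)(\log n-2\log\alpha)+i_1i_2\log b\bigr)$; the decisive estimate, in the regime $i_1i_2\ge\log_b n$, is the harmonic-mean bound $\frac{1}{i_1}+\frac{1}{i_2}\ge\frac{1}{\gamma\alpha_\epsilon}+\frac{1}{(1-\gamma)\alpha_\epsilon}=\frac{1}{(1-\epsilon)\log_b n}\ge\frac{1+\epsilon}{\log_b n}$, which is precisely where the $(1-\epsilon)$ slack in $\alpha$ is consumed and forces every off-diagonal term to be $\exp(-\Omega(\log n))$; since there are only $O(\log^2 n)$ terms, the ratio is $1+\exp(-\Omega(\log n))$ and Paley--Zygmund finishes. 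Finally, the mixed-orientation pairs you worry about can be avoided entirely: writing $Z_\alpha(\gamma)=Z_\alpha^L(\gamma)+Z_\alpha^R(\gamma)$ and using $Z_\alpha^L\stackrel{d}{=}Z_\alpha^R$ together with $Z_\alpha\ge Z_\alpha^L$, it suffices to run the second moment on $Z_\alpha^L$ alone (all sets have their $\gamma$-fraction in $L$), and the reduction to $\alpha=(1-\epsilon)\alpha_{\rm STAT}$ is immediate from the monotonicity $Z_\alpha\ge Z_{\alpha'}$ for $\alpha\le\alpha'$.
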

Thus, the largest $\gamma$-balanced independent set is approximately of size $\alpha_{\rm STAT}$, which we refer to as the \emph{statistical threshold}. Theorem~\ref{thm:Max-IS} follows from a standard application of the first and the second moment method, see Section~\ref{section: stat thresh proof} for the proof.

Given this benchmark, a natural algorithmic question arises: can we find such independent sets efficiently? Motivated by the fact that the best known algorithm for the maximum independent set problem in $G(n,p)$ is an online greedy algorithm, we naturally investigate the performance of online algorithms. 
\subsection{Algorithmic Setting}
The class of \emph{online algorithms} we consider is formalized as follows.
\begin{definition}\label{def:OnlineAlg}
Let $G\sim \ERB(n,p)$ have vertex set $L\cup R$, where $|L|=|R|=n$ and $L\cap R=\varnothing$. A randomized algorithm $\mathcal{A}$ with internal randomness determined by seed $\omega$ runs for $2n$ rounds and keeps track of sets $L_t \subseteq L$ and $R_t \subseteq R$ (initially $L_0 = R_0 = \varnothing$). At each round $t \in [2n]$:
\begin{enumerate}
\item Based on $\omega$ and all information revealed so far, $\mathcal{A}$ randomly selects a vertex $v_t\in (L\cup R)\setminus (L_t\cup R_t)$ and reveals the status of all edges $(v_t,v)$, where $v\in L_t\cup R_t$. 
\item Based on $\omega$ and all information revealed so far, $\mathcal{A}$ then decides if $\mathcal{A}_t(G) = \mathcal{A}_{t-1}(G)\cup \{v_t\}$ and updates the sets: (i) $L_{t+1}=L_t\cup \{v_t\}$ if $v_t\in L$ or (ii) $R_{t+1}=R_t\cup \{v_t\}$ if $v_t\in R$. 
\end{enumerate}
\end{definition}
Per Definition~\ref{def:OnlineAlg}, the vertex arrival order is random, determined jointly by the algorithm’s internal randomness $\omega$ and the randomness of $G$ (its edges). If $v_t\in L$ (resp.\,$R$), then all edges from $v_t$ to vertices in $L$ (resp. $R$) are absent, so information arises only from edges to the opposite side of bipartition inspected so far. The algorithm may select multiple vertices from the same side in succession, potentially revealing no new information—for example, inspecting only $L$ in the first $n$ rounds and yielding $R_n = \varnothing$.

Our results hold for the most general setting: (i) the algorithmic bound (Theorem~\ref{theorem: achievability}) is independent of the arrival order, and (ii) the hardness result (Theorem~\ref{theorem: impossibility}) applies to all online arrival scenarios allowed by Definition~\ref{def:OnlineAlg}.

Our focus is on online algorithms that return large independent sets with specificed probability, formalized as follows.
\begin{definition} \label{Def:f_Optimize_ind}
    For parameters $k > 0$ and $\delta \in [0,1]$, an online algorithm $\mathcal{A}$ operating according to Definition~\ref{def:OnlineAlg} is said to $(k, \delta)$-optimize the $\gamma$-balanced independent set problem in $G_{\mathsf{bip}}(n, p)$ if the following is satisfied when $G \sim G_{\mathsf{bip}}(n, p)$:
    \[\mathbb{P}[|\mathcal{A}(G)| \geq k] \geq \delta.\]
\end{definition}

\subsection{Algorithmic Results}
Equipped with Definitions~\ref{def:OnlineAlg} and~\ref{Def:f_Optimize_ind}, we now present our algorithmic results. Recall $\alpha_{\rm COMP}=(1-\gamma)\alpha_{\rm STAT} = \log_b n/\gamma$ from~\eqref{eq:alpha-alpg}. Our first result shows $(1-\epsilon)\alpha_{\rm COMP}$ is achievable for any $\epsilon>0$.

\begin{theorem}\label{theorem: achievability}
    For any $\epsilon > 0$ and $p, \gamma \in (0, 1)$, there is an online algorithm $\mathcal{A}$ that $(k, \delta)$-optimizes the $\gamma$-balanced independent set problem in $G_{\mathsf{bip}}(n, p)$, where
    \[k = (1-\epsilon)\alpha_{\rm COMP}, \qquad \text{and} \qquad \delta = 1 - \exp\left(-\Omega\left(n^{\epsilon/2}\right)\right).\]
\end{theorem}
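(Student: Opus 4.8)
\textbf{Proof proposal for Theorem~\ref{theorem: achievability}.}

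The plan is to analyze the two-stage online algorithm described in the proof overview: in stage one we greedily add any incoming vertex that has no edge to the current independent set $I$, stopping the moment $|I\cap L|$ or $|I\cap R|$ hits the target $k_1\defeq(1-\epsilon)\gamma\alpha_{\rm COMP}$; assuming without loss of generality that this target is first met on the $L$-side, stage two only considers incoming $R$-vertices and greedily adds those with no edge to $I$, stopping once $|I\cap R|$ reaches $k_2\defeq(1-\epsilon)(1-\gamma)\alpha_{\rm COMP}$. The output has size $k_1+k_2=(1-\epsilon)\alpha_{\rm COMP}$, so it suffices to show both targets are reached whp. Crucially, throughout I only ever reveal edges incident to vertices currently in $I$, so the algorithm is genuinely online and, because the decision rule never references the arrival order, it is order-oblivious.

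The heart of the matter is Lemma~\ref{lemma: greedy analysis}: \emph{stage one terminates within $n^{1-\epsilon/2}$ steps whp, regardless of arrival order}. I would prove this by a stochastic-domination/stopping-time argument. Fix any arrival order. Let $\ell_t=|I\cap L|$, $r_t=|I\cap R|$ after $t$ steps; these are nondecreasing. The probability that an incoming vertex is accepted is $b^{-r_t}$ if it lands in $L$ and $b^{-\ell_t}$ if it lands in $R$ (conditioned on the history, using that edges to the opposite side are fresh and i.i.d.\ $\BER(p)$, so a vertex avoids a fixed set of size $s$ on the opposite side with probability $(1-p)^s=b^{-s}$). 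The key observation is that the acceptance probability of \emph{every} incoming vertex (on either side) is at least $b^{-\max(\ell_t,r_t)}$, and since stage one ends as soon as $\max(\ell_t,r_t)=k_1$, throughout stage one this probability is at least $b^{-k_1}=b^{-(1-\epsilon)\gamma\alpha_{\rm COMP}}=n^{-(1-\epsilon)}$ (using $\gamma\alpha_{\rm COMP}=\log_b n$). Hence the number of accepted vertices in the first $N\defeq n^{1-\epsilon/2}$ steps stochastically dominates a $\BIN(N', n^{-(1-\epsilon)})$ variable for $N'=\min(N,2n)$, which has mean $\ge n^{\epsilon/2-\epsilon}\cdot$(lower-order correction)$\,=n^{\epsilon/2}\cdot n^{-\epsilon/2}$—wait, more carefully: mean $\Theta(n^{1-\epsilon/2}\cdot n^{-(1-\epsilon)})=\Theta(n^{\epsilon/2})$, which dwarfs $2k_1=\Theta(\log n)$. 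A Chernoff bound then gives that with probability $1-\exp(-\Omega(n^{\epsilon/2}))$ at least $2k_1$ vertices are accepted within $N$ steps, forcing $\max(\ell_t,r_t)\ge k_1$ and thus terminating stage one. (One must note there are always enough un-exposed vertices: $N\ll n$.)

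For stage two, condition on stage one having ended by step $\tau_1\le n^{1-\epsilon/2}$ with $|I\cap L|=k_1$ and $|I\cap R|\le k_1\le k_2$ (the latter since $\gamma\le 1/2$). At least $n-n^{1-\epsilon/2}=(1-o(1))n$ vertices of $R$ remain unexposed. Each such vertex, when it arrives, is accepted with probability exactly $b^{-|I\cap L|}=b^{-k_1}=n^{-(1-\epsilon)}$ — and this stays fixed because in stage two $|I\cap L|$ never changes (we only add $R$-vertices), so the acceptances on the remaining $R$-vertices are independent $\BER(n^{-(1-\epsilon)})$ trials up until we possibly add a new $R$-vertex; to handle the mild dependence introduced by growing $I\cap R$, one can simply lower-bound: each incoming $R$-vertex is accepted with probability $\ge b^{-k_1-k_2}$, or more cleanly, reveal only edges to $I\cap L$ first and note acceptance w.r.t.\ $I\cap L$ alone is an i.i.d.\ $\BER(n^{-(1-\epsilon)})$ process; among those we keep a vertex iff it also misses $I\cap R$, but since we stop at $k_2$ it is cleanest to track the process that accepts w.r.t.\ $I\cap L$ only and observe we need merely $k_2=O(\log n)$ successes. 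The number of $L$-acceptances among $(1-o(1))n$ remaining $R$-vertices stochastically dominates $\BIN((1-o(1))n, n^{-(1-\epsilon)})$ with mean $\Theta(n^{\epsilon})\gg k_2$; once $k_2$ such vertices are found we are done, so Chernoff gives failure probability $\exp(-\Omega(n^{\epsilon}))$. Taking a union bound over the (constant number of) failure events—stage one too long, or too few stage-two acceptances—yields overall success probability $1-\exp(-\Omega(n^{\epsilon/2}))$, as claimed.

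\textbf{Main obstacle.} The delicate point is Lemma~\ref{lemma: greedy analysis}: one must argue that stage one terminates quickly \emph{uniformly over all arrival orders}, including adversarial ones that feed the algorithm vertices from only one side. The resolution above—bounding every acceptance probability below by $b^{-\max(\ell_t,r_t)}\ge b^{-k_1}$ during stage one and thereby coupling the accepted-count to a single binomial independent of the order—is exactly the mechanism that makes the truncation at $k_1$ (rather than at the naive $\gamma\alpha_{\rm COMP}$) pay off: without stopping at $k_1$, an adversary dumping $\omega(n^{1-\epsilon})$ same-side vertices could drive $\max(\ell_t,r_t)$ well past $\log_b n$ and kill the acceptance probability. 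Care is also needed in the stage-two independence bookkeeping; reducing to "acceptance with respect to $I\cap L$ only", which genuinely is an i.i.d.\ process since $I\cap L$ is frozen in stage two, sidesteps this cleanly.
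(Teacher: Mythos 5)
Your proposal is correct and follows essentially the same route as the paper: the same two-stage truncated greedy, with stage one analyzed by lower-bounding every acceptance probability by $n^{-1+\epsilon}$ while the stage is active, dominating the number of acceptances in the first $n^{1-\epsilon/2}$ steps by a $\BIN\left(n^{1-\epsilon/2}, n^{-1+\epsilon}\right)$ variable and noting that $2(1-\epsilon)\log_b n$ total acceptances force one side to reach the target, and stage two analyzed via the frozen $I\cap L$ giving acceptance probability $n^{-1+\epsilon}$ for each of the at least $n-n^{1-\epsilon/2}$ unexposed $R$-vertices, followed by Chernoff and a union bound. The only stray remark is your concern about ``dependence introduced by growing $I\cap R$'' in stage two: since the graph is bipartite, an arriving $R$-vertex has no edges to $I\cap R$, so acceptance is exactly the event of missing the frozen set $I\cap L$, and your clean i.i.d.\ bookkeeping holds automatically (matching the paper's auxiliary-process coupling).
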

See Section~\ref{sec:achievability} for the proof. As noted earlier, the algorithm achieving $(1-\epsilon)\alpha_{\rm COMP}$ is online, implemented in two stages. In contrast, the classical greedy algorithm on $G(n,p)$ yields a straightforward online algorithm. The two-stage structure ensures that the global $\gamma$-balancedness constraint is satisfied by the final output, even though the algorithm itself operates using only local information.

We next complement Theorem~\ref{theorem: achievability} with a sharp lower bound.
\begin{theorem}\label{theorem: impossibility}
    For any $\epsilon>0$, 
    there exists no online algorithm that $(k, \delta)$-optimizes the balanced independent set problem in $G_{\mathsf{bip}}(n, p)$, where
    \[k = (1+\epsilon)\alpha_{\rm COMP}, \qquad \text{and} \qquad \delta = \exp\left(-O\left(\epsilon^2\log_b^2n\right)\right).\]
\end{theorem}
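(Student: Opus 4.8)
The plan is to prove Theorem~\ref{theorem: impossibility} via the online-adapted multi-OGP framework, following the blueprint sketched in the Proof Overview and extending the $G(n,p)$ arguments of~\cite{gamarnik2025optimal} to the bipartite, $\gamma$-balanced setting. Suppose, for contradiction, that an online algorithm $\mathcal{A}$ (per Definition~\ref{def:OnlineAlg}) $(k,\delta)$-optimizes the problem with $k=(1+\epsilon)\alpha_{\rm COMP}$ and $\delta=\exp(-O(\epsilon^2\log_b^2 n))$. Fix $m=\Theta(\epsilon^{-2})$. The first step is to construct a family of correlated instances $(G_i^{(T)})_{i\in[m],\,T\in[2n]}$: run $\mathcal{A}$ on a base graph $G_0$, and for each $i\in[m]$ let $G_i^{(T)}$ agree with $G_0$ on all cross-edges revealed in the first $T$ rounds of $\mathcal{A}$'s execution, but resample independently every cross-edge not yet revealed by round $T$. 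The crucial point---the ``key refinement'' emphasized in the overview---is that this coupling is defined \emph{relative to $\mathcal{A}$'s own (random) vertex arrival order}, so that $G_i^{(T)}$ and $\mathcal{A}$'s state at round $T$ are consistent; as $T$ increases the $G_i^{(T)}$ become more correlated. Because $\mathcal{A}$ at round $t\le T$ has seen only edges that are common to $G_0$ and every $G_i^{(T)}$, its trajectory on all $m$ graphs coincides up to round $T$; hence the outputs $\mathcal{A}(G_i^{(T)})$ share a common ``stem'' $I_T$ (the part of the independent set built in the first $T$ rounds) and differ only in what is appended afterward, on independent randomness.

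Next I would define the stopping time $\tau$ and the success event $\mathcal{S}$. Following~\cite{gamarnik2025optimal}, let $\tau$ be the first round $T$ at which the common stem $I_T$ reaches a carefully chosen size---of order $\beta\log_b n$ for an appropriate $\beta\in(1/\gamma,\,(1+\epsilon)/\gamma)$ on the $L$-side, together with an analogous bookkeeping for the $R$-side to respect $\gamma$-balancedness; here the two-stage/truncation phenomenon from the achievability section reappears as a constraint on how the stem can grow. Define $\mathcal{S}$ to be the event that $\tau<\infty$ \emph{and} $|\mathcal{A}(G_i^{(\tau)})|\ge k$ for all $i\in[m]$. Conditioned on the common history up to $\tau$, the completions $\mathcal{A}(G_i^{(\tau)})$ for $i\in[m]$ are run on conditionally independent edge-sets, so $\P[\mathcal{S}]\ge (\text{something like }\delta')^m$ where $\delta'$ is a lower bound on the per-graph success probability given that the stem is of size $|I_\tau|$; one shows $\delta'\ge \delta\cdot\exp(-O(\log_b^2 n))$ by a union bound over which round plays the role of $\tau$ (there are only $2n=\exp(O(\log n))$ choices, negligible against $\exp(-\Theta(\log^2 n))$), yielding $\P[\mathcal{S}]\ge\delta^{m}\exp(-O(m\log_b^2n)) = \exp(-O(\epsilon^{-2}\cdot\epsilon^2\log_b^2 n))\cdot(\dots)$, the point being that with $m=\Theta(\epsilon^{-2})$ and $\delta=\exp(-O(\epsilon^2\log_b^2n))$ this is still $\exp(-O(\log_b^2 n))$, \emph{not} super-polynomially small; call this bound $\delta^m \le \P[\mathcal{S}]$ in the loose notation of the overview.

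The third and decisive step is the OGP upper bound: $\P[\mathcal{S}]=\exp(-\Omega(\log_b^2 n))$ with a constant in the $\Omega$ that beats the lower bound above once $m$ is a large enough constant multiple of $\epsilon^{-2}$. This is where the multi-overlap structure enters. On the event $\mathcal{S}$ we have $m$ $\gamma$-balanced independent sets $J_i:=\mathcal{A}(G_i^{(\tau)})$, each of size $\ge(1+\epsilon)\alpha_{\rm COMP}$, all containing the common stem $I_\tau$ (so pairwise overlaps are at least $|I_\tau|$, bounded below away from zero), while outside the stem they are ``fresh'' on independent cross-edges. I would set up a first-moment / union-bound computation over all candidate tuples $(I,J_1,\dots,J_m)$ with $I\subseteq J_i$, $|I|\ge\beta\log_b n$ on the appropriate side, $|J_i|\ge(1+\epsilon)\alpha_{\rm COMP}$, and $J_i$ $\gamma$-balanced: the probability that a fixed such tuple consists of independent sets in the correlated family is a product over the non-stem cross-pairs, and the entropy (number of tuples) is $n^{O(m\log_b n)}=\exp(O(m\log_b n\cdot\log n))$ while the probability cost per tuple is $\exp(-\Omega(m\log_b^2 n\cdot\text{(gap term)}))$; the ``gap term'' is exactly the slack $(1+\epsilon)/\gamma$ versus $1/\gamma$ that separates $\alpha_{\rm COMP}$ from what can be extended past a stem of size $\approx\log_b n/\gamma$, i.e. the same factor-$1/(1-\gamma)$ arithmetic as in Theorem~\ref{thm:Max-IS}\ref{stat: FMM}. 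Balancing entropy against probability shows the whole event has probability $\exp(-\Omega(\epsilon^2 m\log_b^2 n))$, which for $m=C\epsilon^{-2}$ with $C$ large contradicts $\P[\mathcal{S}]\ge\exp(-O(\log_b^2 n))$.

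I expect the main obstacle to be making the algorithm-dependent coupling rigorous while keeping the stem genuinely ``common'': one must verify that $\mathcal{A}$'s decisions up to round $\tau$ depend only on edges shared by all $G_i^{(\tau)}$ (true because those are exactly the revealed edges), and simultaneously that the stopping time $\tau$ is itself measurable with respect to this common information so that the conditional-independence factorization $\P[\mathcal{S}\mid \mathcal{F}_\tau]\ge(\delta')^m$ is valid---this is delicate precisely because the random arrival order may, on some sample paths, reveal very few cross-edges (e.g. the first $n$ arrivals all in $L$), so $\tau$ and the stem must be defined robustly across all such scenarios. A secondary difficulty is choosing the stem size $\beta\log_b n$ and the number of graphs $m$ so that (a) the stem is large enough that completing it to size $(1+\epsilon)\alpha_{\rm COMP}$ is genuinely improbable on independent randomness (forcing the OGP gap to bite), yet (b) small enough that $\mathcal{A}$ reaches it with probability close to $\delta$, and (c) the $m$-fold union bound's entropy does not swamp the per-tuple probability gain---the interplay is exactly the $\Theta(\epsilon^{-2})$ scaling, and getting the constants to line up (so the two bounds on $\P[\mathcal{S}]$ are contradictory) is the crux of the quantitative argument. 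The remaining ingredients---the first/second-moment estimates for $\gamma$-balanced independent sets, and the bookkeeping for the $\gamma$-balancedness of each $J_i$---are routine adaptations of the $G(n,p)$ case and of Theorem~\ref{thm:Max-IS}.
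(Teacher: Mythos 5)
Your overall architecture is the same as the paper's: an algorithm-adapted family of correlated graphs $(G_i^{(T)})$ that agree with the base graph on the edges revealed in the first $T$ rounds and are resampled elsewhere, a stopping time based on the size of the common ``stem,'' a success event $\mathcal{S}$ over $m=\Theta(\epsilon^{-2})$ copies, a lower bound of roughly $\delta^m$ via conditional independence, and a first-moment bound over tuples of large $\gamma$-balanced independent sets sharing the stem (the paper's Propositions~\ref{prop: lb} and~\ref{prop: ub}). However, two of your quantitative choices would break the argument as written. First, your stopping threshold is mis-calibrated: you stop when the stem reaches size $\beta\log_b n$ with $\beta\in(1/\gamma,(1+\epsilon)/\gamma)$ on one side. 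A successful output of size exactly $(1+\epsilon)\alpha_{\rm COMP}$ has only $(1+\epsilon)\gamma\alpha_{\rm COMP}=(1+\epsilon)\log_b n$ vertices on the $\gamma$-side and $(1+\epsilon)(1-\gamma)\log_b n/\gamma$ on the other; for $\gamma=1/2$ both sides have about $(1+\epsilon)\log_b n<2\log_b n<\beta\log_b n$, so your stopping time is never triggered on the success event and the forbidden structure never materializes. The paper instead defines $\tau$ in~\eqref{eq:def_tau} as the first time $\max\{|I_t\cap L|,|I_t\cap R|\}$ hits $(1-\mu)\log_b n$ with $\mu=\epsilon^2/2$ --- strictly below the guaranteed $(1+\epsilon)\log_b n$ on \emph{every} side of a successful output, so it is always reached --- and the slack $\mu<\epsilon^2$ is exactly what makes each conditionally independent completion carry a net penalty $n^{-\Omega(\epsilon^2\log_b n)}$ in Lemma~\ref{lemma: key prob estimate}.

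Second, your lower bound $\P[\mathcal{S}]\ge\delta^m\exp(-O(m\log_b^2 n))$ is fatally lossy: the upper bound from the first-moment computation has exponent $\Theta(m\epsilon^2\log_b^2 n)$, so a loss whose exponent also scales like $m\log_b^2 n$ (with an $\epsilon$-independent constant) can never be beaten by taking $m=C\epsilon^{-2}$ large --- both sides grow linearly in $m$ and the comparison is vacuous for small $\epsilon$. (Your parenthetical justification, a union over the $2n$ possible values of $\tau$, would only cost $(2n)^{-m}=\exp(-O(m\log n))$, which is harmless; but that is not the bound you wrote, and even this detour is unnecessary.) The paper's Proposition~\ref{prop: lb} obtains $\P[\mathcal{S}]\ge\P[\mathcal{E}]^m=\delta^m$ with \emph{no} loss: after reducing to deterministic algorithms (Lemma~\ref{lem:reduc_to_det_algos}, a step you gesture at but do not carry out, and which is what makes $\{\tau=T\}$ measurable with respect to the revealed edges $E_{\mathcal{A}}(T)$ and the trajectory identical across copies), one writes $\P[\mathcal{S}]=\E\bigl[\bigl(\sum_{T}\ind{\tau=T}\P[\mathcal{E}_{1,T}\mid E_{\mathcal{A}}(T)]\bigr)^m\bigr]$ using conditional independence of the $m$ events given $E_{\mathcal{A}}(T)$, and applies Jensen's inequality to the mixture over $\{\tau=T\}$. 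Fixing these two points --- the threshold $(1-\epsilon^2/2)\log_b n$ and the lossless Jensen argument --- is precisely what makes the constants ``line up,'' which you correctly identified as the crux but did not resolve.
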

We prove Theorem~\ref{theorem: impossibility} through a refined version of the OGP framework adapted to the online setting, which leverages geometric properties of tuples of large $\gamma$-balanced independent sets. See Section~\ref{sec:impossibility} for the details.

Taken together, Theorems~\ref{theorem: achievability} and~\ref{theorem: impossibility} indicate the presence of factor-$1/(1-\gamma)$ statistical-computational gap with respect to online algorithms, with $\alpha_{\rm COMP}$ serving as the computational threshold for this model. As noted earlier, the same factor gap also appears in the context of sparse random bipartite graphs and LDP algorithms~\cite{perkins2024hardness}, further supporting the universality of this gap. 
\begin{remark}\label{remark:prob-guarantee}
Observe that while there exists an algorithm succeeding modulo an exponentially small probability below $\alpha_{\rm COMP}$, even those with success probability $o(1)$ break down above $\alpha_{\rm COMP}$. This is known as strong hardness, see~\cite{huang2025strong}. We highlight that the probability guarantee in Theorem~\ref{theorem: impossibility} is essentially the best possible: the probability that a randomly selected, $\gamma$-balanced set of size $(1+\epsilon)\alpha_{\rm COMP}$ is an independent set is at most $\exp(-c\log_b^2 n)$ for some constant $c>0$.
\end{remark}

\subsection{Surpassing $\alpha_{\rm COMP}$ with Limited Future Queries}\label{sec:future-sec}
Our algorithmic lower bound shows that online algorithms, operating exclusively based on the information available up to round $t$, cannot surpass $\alpha_{\rm COMP}$. This provides strong evidence for the conjecture that $\alpha_{\rm COMP}$ is the true computational threshold for this model.

At the same time, prior work~\cite{gamarnik2025optimal} made an intriguing observation. For the largest independent set problem in $G(n, \tfrac12)$, they showed that granting the algorithm access to a limited amount of additional information allows it to exceed the computational threshold.\footnote{The resulting algorithm, albeit being online, requires super-polynomial time.} This naturally raises the following question: for the balanced independent set problem in $\ERB(n,p)$, can $\alpha_{\rm COMP}$ be surpassed if the algorithm is permitted a limited number of future queries at each round?

In this section, we show that the answer is yes: online algorithms augmented with a small number of future queries can indeed surpass $\alpha_{\rm COMP}$. To make this precise, we extend the standard definition of an online algorithm (Definition~\ref{def:OnlineAlg}) to allow future queries
\begin{definition}\label{def:cAdmissible}
For $c>0$, let $\mathcal{C}_c$ be the class of online algorithms operating according to Definition~\ref{def:OnlineAlg}, with the following extension. 
At each round $t\in[2n]$, the algorithm may, in addition to observing the edges $(v_t,v)$ for $v\in L_t\cup R_t$, query a (possibly random) set $S_t$ of vertex pairs $\{i,j\}$---which may include vertices not yet revealed---and reveal the status of all edges in $S_t$. The decision is then based on the combined information, where the number of future queries satisfy
\begin{equation}\label{eq:BUDGET}
    \Biggl|\;\bigcup_{t=1}^{2n} S_t \;\cap\; \binom{\A(\ERB(n,p)) }{2}\Biggr|\;\le\; c(\log_b n)^2.
\end{equation}
\end{definition}
Note that $S_t$ may include pairs involving vertices from ${v_{t+1},\dots,v_{2n}}$. For this reason, we refer to the edges in $S_t$ as \emph{future edges}. The total amount of such future information is limited to $O(\log^2 n)$.

Our final main result is as follows.
\begin{theorem}\label{thm:surpass}
    For any $\epsilon>0$ and $p,\gamma\in(0,1)$, there exists $c\coloneqq c(\gamma,\epsilon)$ and an online algorithm  $\A\in \mathcal{C}_c$ that $(k,\delta)$-optimizes the $\gamma$-balanced independent set problem in $\ERB(n,p)$, where
    \[
    k=(1+\epsilon)\alpha_{\rm COMP},\qquad \text{and}\qquad \delta = 1-\exp(-n^{\Theta(1)}).
    \]
\end{theorem}
Our algorithm proceeds in three phases, described informally as follows. 
The first phase runs for $T=o(n)$ rounds and greedily constructs a $\gamma$-balanced independent set $I_T=R_T\cup L_T$ with $R_T\subset R$ and $L_T\subset L$, of size $c'\log_b n/\gamma$ for a suitable constant $c'\coloneqq c'(\gamma,\epsilon)$. The second phase is an \emph{exploration phase}, where—using future queries—it identifies sets $W_L \subset L$ and $W_R \subset R$ not yet inspected, such that there are no edges (i) between $R_T$ and $W_L$, and (ii) between $L_T$ and $W_R$. In the third phase, it performs a brute-force search to identify a $\gamma$-balanced independent set of size $(1+\epsilon-c')\log_b n/\gamma$ inside $\ERB(W_L\cup W_R,p)$, and augments this to $I_T$. 

Our analysis shows that the procedure  articulated above can be implemented in an online fashion; see Section~\ref{sec:Future} for the details. 

Importantly, our algorithm is not fully oblivious to the vertex arrival order. In fact, no online algorithm can produce a $\gamma$-balanced independent set of size $(1+\epsilon)\log_b n$ (whp) while remaining completely oblivious to the arrival order. Suppose, for contradiction, that such an algorithm outputs a $\gamma$-balanced independent set of size $(1+\epsilon)\alpha_{\rm COMP}$. If the first $n$ arrivals all lie on the same side of the bipartition, then by the online constraint (decisions cannot be deferred), the algorithm must select at least $(1+\epsilon)\log_b n$ vertices from that side. This, however, precludes adding vertices from the opposite side, since the expected number of vertices there with no edge to the chosen $(1+\epsilon)\log_b n$ vertices is at most $n^{-\epsilon}$. Thus, some dependence on arrival order is unavoidable. That said, our algorithm's dependence on arrival order is fairly mild: the only step where order matters is in the first greedy phase (Algorithm~\ref{alg:greedy_cont'd}), designed specifically to avoid such pathological situations.
\paragraph{Number of Future Queries}
Our analysis in Section~\ref{sec:Future} shows that it suffices to take
\[
c(\gamma,\epsilon)\coloneqq\frac{1-\gamma}{\gamma}\left((1+\epsilon)^2 - (c')^2\right),
\]
where $c'$ is any arbitrary constant satisfying
\[
c'<\frac{\gamma\bigl(1-(1+\epsilon)(1-\gamma)\bigr)}{(1-\gamma)^2}.
\]
In the balanced case $(\gamma=\tfrac12$), the condition on $c'$ boils down to $c'<1-\epsilon$. In this case, one can choose $c(\frac12,\epsilon)$ such that $c(\frac12,\epsilon)\to 0$ as $\epsilon\to 0$, e.g., by taking $c(\frac12,\epsilon)=6\epsilon$. By contrast, for $\gamma < \tfrac12$, our analysis suggests that it is not possible to choose $c(\gamma,\epsilon)$ such that $c(\gamma,\epsilon)\to 0$ as $\epsilon\to 0$.
At first glance, this may seem surprising.
However, note the following: as $\gamma \to \frac{\epsilon}{1 + \epsilon}$, we have $\alpha_{\rm COMP} \to \alpha_{\rm STAT}$.
In particular, as the statistical-computational gap is ``smaller'' for smaller values of $\gamma$, one expects surpassing the gap to be somewhat ``harder''. Indeed, our analysis reflects this: interpolating between $\gamma = 1/2$ and $\gamma = \frac{\epsilon}{1 + \epsilon}$, we shift from $c(\gamma,\epsilon)=\Theta(\epsilon)$ to $c(\gamma,\epsilon)=\Theta(1/\epsilon)$.\footnote{We remark that the distinction between $\gamma = 1/2$ and $\gamma < 1/2$ is prevalent in the analogous setting of $\gamma$-balanced colorings.
In fact, while the problem has been heavily studied for $\gamma = 1/2$ \cite{chakraborti2023extremal, feige2010balanced, dhawan2025balanced, dhawan2025balancedER}, no results are known for $\gamma$-balanced colorings for $\gamma < 1/2$ (see the discussion in \cite[Section 1.3]{dhawan2025balancedER} on the challenges involved).}

\section{Statistical Threshold: Proof of Theorem~\ref{thm:Max-IS}}\label{section: stat thresh proof}

In this section, we will prove Theorem~\ref{thm:Max-IS}.
Let the vertex set of $G$ be $L \sqcup R$ with $|L|=|R|=n$.
Recall the definition of the random variable $Z_\alpha(\gamma)$.
We define two new variables $Z_\alpha^\eta(\gamma)$ for $\eta \in \set{L, R}$ denoting the number of $\gamma$-balanced independent sets $I$ in $\ERB(n, p)$ such that $\gamma$ proportion of $I$ lies in the partition determined by $\eta$.
Note that
\[Z_\alpha(\gamma) = Z_\alpha^L(\gamma) + Z_\alpha^R(\gamma).\]
Furthermore, it is easy to see by symmetry that $Z_\alpha^L(\gamma) \stackrel{d}{=} Z_\alpha^R(\gamma)$ and so it is enough to consider $Z_\alpha \coloneqq Z_\alpha^L(\gamma)$.

We first prove \ref{stat: FMM} by a simple first moment argument.
Note the following for any $\alpha$:
\begin{equation}\label{eq:FirstMom}
    \mathbb{E}[Z_\alpha] = \binom{n}{\alpha\gamma}\binom{n}{\alpha(1-\gamma)}(1-p)^{\gamma(1-\gamma)\alpha^2}.
\end{equation}
For $\alpha \geq (1+\epsilon)\alpha_{\rm STAT}$, we have
\begin{align*}
    \mathbb{E}[Z_\alpha]&\le n^{\alpha \gamma}n^{\alpha(1-\gamma)}(1-p)^{\gamma(1-\gamma)\alpha^2} \\
    &=\exp\left(\alpha \log n -\gamma(1-\gamma)\alpha^2 \log \frac{1}{1-p}\right) \\
    &\le\exp\left(\alpha\left(\log n - (1+\epsilon)\gamma(1-\gamma)\frac{\log_b n}{\gamma(1-\gamma)}\log\frac{1}{1-p}\right)\right) \\
    &=\exp\left(-\epsilon\alpha \log n\right)=\exp\bigl(-\Omega(\log^2 n)\bigr),
\end{align*}
as desired.

The remainder of this section is dedicated to the proof of \ref{stat: SMM}.
Suppose $\alpha\le (1-\epsilon)\alpha_{\rm STAT}$. It is easy to verify from \eqref{eq:FirstMom}, using the inequality $\binom{n}{k}\ge (n/k)^k$, that 
\[\mathbb{E}[Z_\alpha]=\exp(\Theta(\log^2 n)) = \omega(1).\]
Moreover, as $Z_\alpha \ge Z_{\alpha'}$ for $\alpha\le \alpha'$ it suffices to prove the claim when $\alpha = \alpha_\epsilon\coloneqq (1-\epsilon)\alpha_{\rm STAT}$. We do so by showing $\mathbb{E}[Z_{\alpha_\epsilon}^2]/\mathbb{E}[Z_{\alpha_\epsilon}]^2 = 1+o(1)$.
The result then follows by the Paley-Zygmund inequality.

For notational convenience, set $Z\coloneqq Z_{\alpha_\epsilon}$.
Thus
\[
Z\coloneqq \sum_{\substack{(L',R')\,:\,L'\subset L,\, R'\subset R \\ |L'| = \gamma \alpha_\epsilon,\,|R'|=(1-\gamma)\alpha_\epsilon}} \, I_{(L',R')},
\]
where $I_{(L',R')}$ is the indicator of the event that $(L',R')$ is a $\gamma$-balanced independent set. 
Next,
\begin{align}\label{eq:Z^2}
Z^2\coloneqq \sum_{\substack{(L',R')\,:\,L'\subset L,\, R'\subset R \\ |L'| = \gamma \alpha_\epsilon,\,|R'|=(1-\gamma)\alpha_\epsilon}} \,\,\sum_{\substack{(L'',R'')\,:\,L''\subset L,\, R''\subset R \\ |L''| = \gamma \alpha_\epsilon,\,|R''|=(1-\gamma)\alpha_\epsilon}}\, I_{(L',R')}I_{(L'',R'')}.
\end{align}
In what follows, we use the following parameterization:
\begin{equation}\label{eq:i1-i2}
  i_1\coloneqq |L'\cap L''|\qquad\text{and}\qquad i_2\coloneqq |R'\cap R''|.  
\end{equation}
Clearly, 
\begin{equation}\label{eq:Range-of-is}
    0\le i_1\le \gamma\alpha_\epsilon,\qquad \text{and} \qquad 0\le i_2\le (1-\gamma)\alpha_\epsilon.
\end{equation}
Fix $(L',R')$.
The number of tuples $((L',R'),(L'',R''))$ subject to~\eqref{eq:i1-i2} is
\begin{equation}\label{eq:N-i1i2}
    N(i_1,i_2)\coloneqq \binom{\gamma\alpha_\epsilon}{i_1}\binom{n-\gamma\alpha_\epsilon}{\gamma\alpha_\epsilon-i_1}\binom{(1-\gamma)\alpha_\epsilon}{i_2}\binom{n-(1-\gamma)\alpha_\epsilon}{(1-\gamma)\alpha_\epsilon-i_2}.
\end{equation}
For any such tuple, the quantity $\mathbb{E}[I_{(L',R')}I_{(L'',R'')}]$ depends solely on $(i_1,i_2)$:
\begin{equation}\label{eq:joint-p}
    \mathbb{E}[I_{(L',R')}I_{(L'',R'')}] = (1-p)^{2\gamma(1-\gamma)\alpha_\epsilon^2}(1-p)^{-i_1i_2},
\end{equation}
where the term $(1-p)^{-i_1i_2}$ accounts for the double counted edges.
Combining~\eqref{eq:Z^2}, \eqref{eq:N-i1i2}, and~\eqref{eq:joint-p}, we obtain
\begin{align*}
    \mathbb{E}[Z^2]&=\binom{n}{\gamma\alpha_\epsilon}\binom{n}{(1-\gamma)\alpha_\epsilon}\sum_{i_1 = 0}^{\gamma\alpha_\epsilon} \sum_{i_2= 0}^{(1-\gamma)\alpha_\epsilon}N(i_1,i_2)(1-p)^{2\gamma(1-\gamma)\alpha_\epsilon^2 - i_1i_2}.
\end{align*}
Combining this with~\eqref{eq:FirstMom}, we arrive at
\begin{align}\label{eq:MAIN}
    \frac{\mathbb{E}[Z^2]}{\mathbb{E}[Z]^2} = \sum_{i_1 = 0}^{\gamma\alpha_\epsilon} \sum_{i_2= 0}^{(1-\gamma)\alpha_\epsilon} \frac{\binom{\gamma\alpha_\epsilon}{i_1}\binom{n-\gamma\alpha_\epsilon}{\gamma\alpha_\epsilon-i_1}}{\binom{n}{\gamma\alpha_\epsilon}}\frac{\binom{(1-\gamma)\alpha_\epsilon}{i_2}\binom{n-(1-\gamma)\alpha_\epsilon}{(1-\gamma)\alpha_\epsilon-i_2}}{\binom{n}{(1-\gamma)\alpha_\epsilon}}(1-p)^{-i_1i_2}.
\end{align}
Note the following as a result of the bounds in \eqref{eq:Range-of-is}:
\begin{align*}
    \frac{\binom{n-\gamma\alpha_\epsilon}{\gamma\alpha_\epsilon-i_1}}{\binom{n}{\gamma\alpha_\epsilon}} \leq \frac{\binom{n-i_1}{\gamma\alpha_\epsilon-i_1}}{\binom{n}{\gamma\alpha_\epsilon}} = \frac{(n - i_1)!}{n!}\,\frac{(\gamma\alpha_\epsilon)!}{(\gamma\alpha_\epsilon - i_1)!} \leq \left(\frac{\gamma\alpha_\epsilon}{n}\right)^{i_1},
\end{align*}
and
\[\frac{\binom{n-(1-\gamma)\alpha_\epsilon}{(1-\gamma)\alpha_\epsilon-i_1}}{\binom{n}{\gamma\alpha_\epsilon}} \leq \frac{\binom{n-i_2}{(1-\gamma)\alpha_\epsilon-i_2}}{\binom{n}{(1-\gamma)\alpha_\epsilon}} = \frac{(n - i_2)!}{n!}\,\frac{((1-\gamma)\alpha_\epsilon)!}{((1-\gamma)\alpha_\epsilon - i_2)!} \leq \left(\frac{(1-\gamma)\alpha_\epsilon}{n}\right)^{i_2}.\]
Applying these bounds to \eqref{eq:MAIN}, we have
\begin{align}
    \frac{\mathbb{E}[Z^2]}{\mathbb{E}[Z]^2} &\leq \sum_{i_1 = 0}^{\gamma\alpha_\epsilon} \sum_{i_2= 0}^{(1-\gamma)\alpha_\epsilon} \binom{\gamma\alpha_\epsilon}{i_1}\left(\frac{\gamma\alpha_\epsilon}{n}\right)^{i_1}\binom{(1-\gamma)\alpha_\epsilon}{i_2}\left(\frac{(1-\gamma)\alpha_\epsilon}{n}\right)^{i_2}(1-p)^{-i_1i_2}\nonumber \\
    &\leq \sum_{i_1 = 0}^{\gamma\alpha_\epsilon} \sum_{i_2= 0}^{(1-\gamma)\alpha_\epsilon} \underbrace{\exp\left(-(i_1 + i_2)(\log n - 2\log \alpha_\epsilon) + i_1i_2\log b\right)}_{q(i_1, i_2)}, \label{eq:AUXIL2}
\end{align}
where we use the fact that $\gamma = \Theta(1)$ and $b = 1/(1-p)$.
To bound $q(i_1, i_2)$, we consider cases.
\begin{enumerate}[label = \textbf{Case \arabic*:}, wide]
    \item $i_1 = i_2 = 0$. Clearly, $q(i_1, i_2) = 1$ in this case.
    \item $i_1 = 0$ and $i_2 \geq 1$ or $i_2 = 0$ and $i_1 \geq 1$. As $\log\alpha_\epsilon = \Theta(\log \log n)$, it follows that $q(i_1, i_2) = \exp\left(-\Omega(\log n)\right)$ in this case.
    \item $i_1, i_2 \geq 1$ and $i_1i_2 \leq \log_bn$.
    In this case, we have
    \[q(i_1, i_2) \leq \exp\left(-2(\log n - 2\log \alpha_\epsilon) + \log n\right) = \exp\left(-\Omega(\log n)\right),\]
    once again.
    \item $i_1, i_2 \geq 1$ and $i_1i_2 \geq \log_bn$.
    Observe that we can further modify $q(i_1, i_2)$ to get
    \begin{align}
        q(i_1,i_2) &= \exp\left(-(i_1 + i_2)(\log n - 2\log \alpha_\epsilon) + i_1i_2\log b\right) \nonumber \\
        &=\exp\left(i_1i_2\log b\left(1 - \left(\frac{1}{i_1}+\frac{1}{i_2}\right)\left(\log_{b}n - 2\log_b\alpha_\epsilon\right)\right)\right), \label{eq:RateFnc1}
    \end{align}
    where we note that $1/i_j$ is well-defined since $i_j \geq 1$ for $j \in \set{1, 2}$.
    Using the bounds on $i_1$ and $i_2$ from~\eqref{eq:Range-of-is}, we control the harmonic mean of $i_1$ and $i_2$ as follows:
    \[\frac{1}{i_1}+\frac{1}{i_2}\ge \frac{1}{\gamma\alpha_\epsilon} + \frac{1}{(1-\gamma)\alpha_\epsilon} = \frac{1}{(1-\epsilon)\log_b n}\ge \frac{1+\epsilon}{\log_b n}.\]
    Consequently,
    \[
    1 - \left(\frac{1}{i_1}+\frac{1}{i_2}\right)\bigl(\log_b n - 2\log_b\alpha_\epsilon\bigr) \le 1-(1+\epsilon)\left(1 - \frac{2\log_b\alpha_\epsilon}{\log_b n}\right) \le -\frac{\epsilon}{2},
    \]
    where we use the fact that $\log \alpha_\epsilon = \Theta(\log \log n)$ and $n$ is sufficiently large in terms of $\epsilon$.
    With this,~\eqref{eq:RateFnc1} is again upper bounded by $\exp\left(-\Omega(\log n)\right)$.
\end{enumerate}

Combining all of the above cases,~\eqref{eq:AUXIL2} becomes:
\begin{align*}
    1\,\leq\, \frac{\mathbb{E}[Z^2]}{\mathbb{E}[Z]^2}&\leq \sum_{i_1 = 0}^{\gamma\alpha_\epsilon} \sum_{i_2= 0}^{(1-\gamma)\alpha_\epsilon}q(i_1,i_2)\\
    &\le 1+\gamma(1-\gamma)\alpha_\epsilon^2 \exp(-\Omega(\log n)) \\
    &=1+\exp\left(-\Omega(\log n)+O(\log \log n)\right)\\
    &=1+\exp\left(-\Omega(\log n)\right),
\end{align*}
as $\gamma=\Theta(1)$ and $\log \alpha_\epsilon = \Theta(\log \log n)$. 
Using the Paley-Zygmund inequality~\cite{alon2016probabilistic},
\[
\mathbb{P}[Z>0]\ge \frac{\mathbb{E}[Z]^2}{\mathbb{E}[Z^2]} = 1-\exp\left(-\Omega(\log n)\right),
\]
completing the proof of \ref{stat: SMM}.

\section{Achievability Result: Proof of Theorem~\ref{theorem: achievability}}\label{sec:achievability}

Recall the statistical threshold $\alpha_{\rm STAT}$ from Theorem~\ref{thm:Max-IS} and \eqref{eq:a-star}. In this section, we give an online algorithm that finds an independent set of size at least $(1-\epsilon)\alpha_{\rm COMP}$ for arbitrary $\epsilon > 0$ with high probability, where $\alpha_{\rm COMP} = (1-\gamma) \alpha_{\rm STAT}$ is as defined in \eqref{eq:alpha-alpg}.
This gives a lower bound on the computational threshold.

Our algorithm will proceed in two stages:
\begin{enumerate}
    \item In \hyperref[alg:greedy]{Stage One}, we greedily find an independent set $I$ satisfying $\max\left\{|I \cap L|,\,|I \cap R|\right\} = (1-\epsilon)\log_bn$.
    \item At this point, by relabeling the partitions if necessary, we may assume $|I \cap L| = (1-\epsilon)\log_b n$.
    During \hyperref[alg:greedy_bal]{Stage Two}, we only add vertices in $R$ to the independent set, stopping once $|I \cap R| = \frac{(1 - \gamma)}{\gamma}(1-\epsilon)\log_b n$.
\end{enumerate}
Before we formally describe the algorithms for each stage, we make the following definition: for a given timestep $t$ and vertex $v$, denote by $N_t(v)$ the set of all its neighbors in $\set{v_1, \ldots, v_{t-1}}$.
Let us now formally describe our greedy algorithm, which constitutes Stage $1$ of our procedure.

\begin{breakablealgorithm}
\caption{Stage One}\label{alg:greedy}

\begin{algorithmic}[1]
    \State Initialize $I_0, L_0, R_0=\varnothing$ and $t = 1$.
    \While{$\max\left\{|I_{t-1} \cap L|,\,|I_{t-1} \cap R|\right\} < (1-\epsilon)\log_bn$}
        \State Sample the random vertex $v_t$.
        \If{$N_t(v_t) \cap I_{t-1} = \varnothing$}
            \State Set $I_t = I_{t-1} \cup \{v_t\}$.
        \Else
            \State $I_t = I_{t-1}$.
        \EndIf
        \If{$v_t \in L$}
            \State $L_t = L_{t-1} \cup \set{v_t}$, $R_t = R_{t-1}$.
        \Else
            \State $L_t = L_{t-1}$, $R_t = R_{t-1} \cup \set{v_t}$.
        \EndIf
        \State Update $t = t + 1$.
    \EndWhile
\end{algorithmic}
\end{breakablealgorithm}

Let $T_f$ be the random variable denoting the number of iterations of the \textsf{while} loop of Algorithm~\ref{alg:greedy}, i.e.,
\[
T_f = \min\bigl\{t\ge 1\,:\,\max\{|I_t\cap L|,|I_t\cap R|=(1-\epsilon)\log_b n\}\bigr\}.
\]
The key result  for Stage One is the following lemma, which shows that $T_f$ is sublinear whp irrespective of the vertex arrival order.

\begin{lemma}\label{lemma: greedy analysis}
    $\mathbb{P}\left[T_f>n^{1-\epsilon/2}\right] = \exp\left(-\Omega\left(n^{\epsilon/2}\right)\right)$.
\end{lemma}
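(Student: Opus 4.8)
The plan is to show that the while-loop of Stage One terminates in at most $n^{1-\epsilon/2}$ steps with the claimed probability, \emph{uniformly over all adversarial arrival orders}. The key difficulty is that the adversary controls which side each incoming vertex comes from, so it could feed us a long run of vertices all from $L$ (or all from $R$), during which no cross-edges are revealed. However, this works \emph{in our favor} for Stage One: whichever side the vertices come from, a vertex $v_t$ is added to $I$ precisely when it has no neighbor in $I_{t-1}$ restricted to the opposite side. Concretely, fix the side $\eta \in \{L,R\}$ of the incoming vertex $v_t$; the conditional probability that $v_t$ is added, given the history, is $(1-p)^{|I_{t-1}\cap \bar\eta|}$, where $\bar\eta$ is the opposite side. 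As long as $\max\{|I_{t-1}\cap L|,|I_{t-1}\cap R|\} < (1-\epsilon)\log_b n$, we have in particular $|I_{t-1}\cap\bar\eta| < (1-\epsilon)\log_b n$, so this probability is at least $(1-p)^{(1-\epsilon)\log_b n} = n^{-(1-\epsilon)} = n^{-1+\epsilon}$. Thus, conditioned on the past, each step independently adds a vertex to $I$ with probability at least $n^{-1+\epsilon}$, regardless of the arrival order.

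From here I would argue by stochastic domination. Let $T_f$ be the stopping time in the statement. On the event $\{T_f > n^{1-\epsilon/2}\}$, during the first $n^{1-\epsilon/2}$ steps fewer than $(1-\epsilon)\log_b n$ vertices are added to $I$ in total (since neither side has reached the truncation threshold). Let $X_t \in \{0,1\}$ indicate that $v_t$ is added to $I$ at step $t$; the above shows that, for $t \le T_f$, $\mathbb{E}[X_t \mid \mathcal{F}_{t-1}] \ge n^{-1+\epsilon}$. Therefore $S := \sum_{t=1}^{n^{1-\epsilon/2}\wedge T_f} X_t$ stochastically dominates a sum of $n^{1-\epsilon/2}$ i.i.d.\ $\mathrm{Ber}(n^{-1+\epsilon})$ variables (formally, by comparing with a coupled sequence of independent coin flips that succeed whenever the conditional success probability clears the threshold). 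The mean of that binomial is $n^{1-\epsilon/2}\cdot n^{-1+\epsilon} = n^{\epsilon/2}$, which tends to infinity. The event $\{T_f > n^{1-\epsilon/2}\}$ forces $S < (1-\epsilon)\log_b n = O(\log n)$, i.e.\ $S$ is below a $(1-o(1))$ fraction of its mean — in fact below $n^{-\epsilon/2+o(1)}$ times its mean. A standard multiplicative Chernoff lower-tail bound for $\mathrm{Bin}(n^{1-\epsilon/2}, n^{-1+\epsilon})$ then gives
\[
\mathbb{P}\bigl[\mathrm{Bin}(n^{1-\epsilon/2}, n^{-1+\epsilon}) < (1-\epsilon)\log_b n\bigr] \le \exp\bigl(-\Omega(n^{\epsilon/2})\bigr),
\]
and by the domination this bounds $\mathbb{P}[T_f > n^{1-\epsilon/2}]$ as well. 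Taking a union bound over the (at most exponentially many, but in fact we argue per fixed order and the bound is order-independent) possible arrival orders is unnecessary precisely because the per-step lower bound $n^{-1+\epsilon}$ on the conditional success probability holds \emph{pointwise} for every realization of the history, so the Chernoff estimate already accounts for all orders simultaneously.

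The one technical point to handle carefully is the coupling/domination step: the increments $X_t$ are not independent and their conditional success probabilities are only bounded \emph{below} by $n^{-1+\epsilon}$, not equal to it, and moreover the number of active steps is itself the random stopping time $T_f \wedge n^{1-\epsilon/2}$. The clean way is to define, on the same probability space, auxiliary i.i.d.\ uniforms $U_t$ and set $X_t = \mathbb{1}\{U_t \le (1-p)^{|I_{t-1}\cap\bar\eta_t|}\}$ where $\bar\eta_t$ is the opposite side to $v_t$; then $Y_t := \mathbb{1}\{U_t \le n^{-1+\epsilon}\}$ are genuinely i.i.d.\ $\mathrm{Ber}(n^{-1+\epsilon})$ and $X_t \ge Y_t$ for all $t \le T_f$, so $\sum_{t\le n^{1-\epsilon/2}} Y_t \le (1-\epsilon)\log_b n$ on the event $\{T_f > n^{1-\epsilon/2}\}$, and the Chernoff bound applies directly to $\sum Y_t$. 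I expect this coupling bookkeeping — making sure the threshold $(1-p)^{|I_{t-1}\cap\bar\eta_t|} \ge n^{-1+\epsilon}$ genuinely holds for \emph{every} $t < T_f$ and both choices of side — to be the main (though routine) obstacle; everything else is a one-line Chernoff estimate.
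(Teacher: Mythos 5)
Your proposal is correct and follows essentially the same route as the paper: a pointwise per-step lower bound of $n^{-1+\epsilon}$ on the conditional acceptance probability (valid for every arrival order), a coupling/domination with i.i.d.\ $\mathrm{Ber}(n^{-1+\epsilon})$ variables, and a lower-tail Chernoff bound for $\mathrm{Bin}(n^{1-\epsilon/2}, n^{-1+\epsilon})$; the paper merely packages the stopping-time issue by extending the process past $T_f$ with auxiliary coins rather than truncating at $T_f \wedge n^{1-\epsilon/2}$ as you do. One small slip: on $\{T_f > n^{1-\epsilon/2}\}$ the total number of accepted vertices is only bounded by $2(1-\epsilon)\log_b n$ (each side separately below $(1-\epsilon)\log_b n$), not $(1-\epsilon)\log_b n$, but since this is still $O(\log n) \ll n^{\epsilon/2}$ the Chernoff estimate and the final bound are unaffected.
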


We defer the proof of Lemma~\ref{lemma: greedy analysis} to \S\ref{subsection: greedy}.
Let us now formally describe Stage Two of our online algorithm.
Note that at time $T_f$, the contribution from one of the partitions to the constructed independent set $I_{T_f}$ is $(1-\epsilon)\log_b n$. 
By relabeling the partitions if necessary, we may assume this partition is $L$.
For the second stage, we employ the following algorithm.

\begin{breakablealgorithm}
\caption{Stage Two}\label{alg:greedy_bal}

\begin{algorithmic}[1]
    \For{$t = T_f + 1, \ldots, 2n$}
        \State Sample the random vertex $v_t$.
        \If{$v_t \in L$}
            \State $L_t = L_{t-1} \cup \set{v_t}$, $R_t = R_{t-1}$.
        \Else
            \State $L_t = L_{t-1}$, $R_t = R_{t-1} \cup \set{v_t}$.
            \If{$|I_{t-1} \cap R_t| < \frac{(1 - \gamma)}{\gamma}\,(1-\epsilon)\log_bn$ and $N_t(v_t) \cap I_{t-1} = \varnothing$}
                \State Set $I_t = I_{t-1} \cup \{v_t\}$.
            \Else
                \State $I_t = I_{t-1}$.
            \EndIf
        \EndIf
    \EndFor
\end{algorithmic}
\end{breakablealgorithm}

Namely, if $v_t\in L$, we do not add it to the independent set. If $v_t\in R$, we add it to the independent set only if (i) $|I_{t-1}\cap R_t|$ is less than $(1-\gamma)(1-\epsilon)\alpha_{\rm COMP}$, and (ii) $v_t$ has no neighbors in $I_{t-1}$. This truncation ensures the $\gamma$-balancedness constraint.

The key result for Stage Two is the following lemma.

\begin{lemma}\label{lemma: balancing analysis}
    $\mathbb{P}\left[|I_{2n} \cap R| < \frac{(1 - \gamma)}{\gamma}(1-\epsilon)\log_bn\,\bigl\lvert\, T_f \leq n^{1-\epsilon/2}\right] = \exp\left(-\Omega\left(n^{\epsilon/2}\right)\right)$.
\end{lemma}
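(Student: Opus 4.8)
\textbf{Proof proposal for Lemma~\ref{lemma: balancing analysis}.}

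The plan is to condition on the event $\{T_f \le n^{1-\epsilon/2}\}$ and to show that, with overwhelming probability, Stage Two encounters enough ``fresh'' vertices of $R$ that are added to $I$ before the truncation cap $\frac{(1-\gamma)}{\gamma}(1-\epsilon)\log_b n$ is hit. At the start of Stage Two we have $|I_{T_f} \cap L| = (1-\epsilon)\log_b n$, and $I \cap L$ never changes afterward (Stage Two never adds $L$-vertices). Fix any vertex $v \in R$ that has not yet been revealed by time $T_f$. Since each edge from $v$ to the $(1-\epsilon)\log_b n$ vertices of $I \cap L$ is present independently with probability $p$, and these edges are unrevealed (because $v$ is fresh and the only revealed edges incident to $I\cap L$ go to previously-seen vertices), we get
\[
\mathbb{P}\bigl[N(v) \cap (I \cap L) = \emptyset\bigr] = (1-p)^{(1-\epsilon)\log_b n} = n^{-(1-\epsilon)}.
\]
Thus, ignoring the truncation for a moment, each fresh $R$-vertex processed in Stage Two is added to $I$ with probability $n^{-(1-\epsilon)}$, independently across distinct fresh vertices (their edge-sets to $I\cap L$ are disjoint and independent).

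First I would lower bound the number of fresh $R$-vertices available in Stage Two. After $T_f \le n^{1-\epsilon/2}$ rounds, at most $n^{1-\epsilon/2} = o(n)$ vertices of $R$ have been revealed, so at least $n - o(n) \ge n/2$ vertices of $R$ remain to be processed among rounds $T_f+1, \dots, 2n$ (every remaining vertex, of either side, is eventually revealed since the algorithm runs for $2n$ rounds and $|L\cup R| = 2n$). Let $K$ denote the number of these fresh $R$-vertices that pass the connectivity test $N_t(v_t)\cap I_{t-1} = \emptyset$; note passing this test is \emph{implied by} $N(v_t)\cap(I\cap L)=\emptyset$ together with $v_t$ having no neighbor among the (at most $\frac{1-\gamma}{\gamma}(1-\epsilon)\log_b n$) $R$-vertices already in $I$ — but since $R$ is an independent side, $v_t$ has \emph{no} edges to other $R$-vertices at all, so in fact the connectivity test for an $R$-vertex is \emph{exactly} $N(v_t)\cap(I\cap L) = \emptyset$. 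Hence $K$ stochastically dominates a $\mathrm{Bin}(n/2,\, n^{-(1-\epsilon)})$ random variable, which has mean $\tfrac12 n^{\epsilon} \to \infty$. A Chernoff bound gives $\mathbb{P}[K < \tfrac14 n^\epsilon] = \exp(-\Omega(n^\epsilon))$.

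Finally I would argue that $K \ge \frac{1-\gamma}{\gamma}(1-\epsilon)\log_b n$ suffices: since $\frac{1-\gamma}{\gamma}(1-\epsilon)\log_b n = \Theta(\log n) \ll \tfrac14 n^\epsilon$, whenever $K \ge \tfrac14 n^\epsilon$ the algorithm adds $R$-vertices to $I$ until the truncation cap is reached, so $|I_{2n}\cap R| = \frac{1-\gamma}{\gamma}(1-\epsilon)\log_b n$ and the bad event fails. This yields the claimed bound $\exp(-\Omega(n^{\epsilon/2}))$ (indeed even $\exp(-\Omega(n^\epsilon))$), absorbed into the conditioning error already present.

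\textbf{Main obstacle.} The delicate point is the independence/domination claim: one must be careful that conditioning on $\{T_f \le n^{1-\epsilon/2}\}$ and on the entire transcript of Stage One (which determines $I\cap L$, the set of revealed vertices, and all revealed edges) does not bias the edges from fresh $R$-vertices to $I\cap L$. This holds because those edges were never queried during Stage One — the only edges revealed are between $v_t$ and previously-\emph{revealed} vertices — so conditionally they remain i.i.d.\ $\mathrm{Ber}(p)$. The second subtlety is that the adversarial arrival order could, in principle, interleave many $L$-vertices; but this only delays $R$-vertices, and since $2n$ rounds are guaranteed and only $o(n)$ of $R$ was consumed in Stage One, at least $n/2$ fresh $R$-vertices are still processed regardless of order. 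Making both of these precise (ideally via a clean filtration/optional-stopping argument so that the $\mathrm{Bin}$ domination is literal rather than heuristic) is the part that needs the most care.
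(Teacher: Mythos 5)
Your proposal is correct and follows essentially the same route as the paper: the paper also removes the truncation (via an auxiliary process $\bar I$ that keeps adding fresh $R$-vertices passing the connectivity test after the cap is hit), observes that each fresh $R$-vertex is added with probability exactly $(1-p)^{(1-\epsilon)\log_b n}=n^{-1+\epsilon}$ independently since its edges to the fixed set $I_{T_f}\cap L$ are unrevealed, and applies a Chernoff bound together with the fact that the cap $\frac{1-\gamma}{\gamma}(1-\epsilon)\log_b n$ is far below the mean. Your ``$K$ vs.\ truncation never binds on the bad event'' step is just a rephrasing of the paper's monotone comparison $|I_{2n}\cap R|<\mathrm{cap}\Rightarrow|\bar I_{2n}\cap R_f|<\mathrm{cap}$, and the filtration subtlety you flag is resolved exactly as you indicate.
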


We defer the proof of Lemma~\ref{lemma: balancing analysis} to \S\ref{subsection: balancing step}.
Before we prove these key lemmas, let us complete the proof of our achievability result.

\begin{proof}[Proof of Theorem~\ref{theorem: achievability}]
    Consider running our two-stage algorithm with input $\ERB(n, p)$.
    Let $I$ be the output and let $T_f$ be the number of iterations of Algorithm~\ref{alg:greedy}.
    Using the inequality $\mathbb{P}[A] \leq \mathbb{P}[B] + \mathbb{P}[A|B^c]$, we have
    \[\mathbb{P}[|I| < (1- \epsilon) \alpha_{\rm COMP}] \,\leq\, \mathbb{P}[T_f > n^{1-\epsilon/2}] + \CProb{|I| < (1- \epsilon) \alpha_{\rm COMP}}{T_f \leq n^{1-\epsilon/2}}.\]
    Note that conditionally on $\{T_f\leq n^{1-\epsilon/2}\}$, the event $|I|<(1-\epsilon)\alpha_{\rm COMP}$ implies the event that $|I\cap R|<\frac{1-\gamma}{\gamma}(1-\epsilon)\log_b n$, where recall that by relabeling the partitions if necessary, we may assume that $I\cap L$ has size $(1-\epsilon)\log_b n$ at time $T_f$. Therefore, combining Lemmas~\ref{lemma: greedy analysis} and \ref{lemma: balancing analysis} completes the proof.
\end{proof}

\subsection{Stage One: Proof of Lemma~\ref{lemma: greedy analysis}}\label{subsection: greedy}

Recall Algorithm~\ref{alg:greedy}. 
To prove Lemma~\ref{lemma: greedy analysis}, we will analyze an alternate process, which does not necessarily produce an independent set, but is easier to analyze and can be coupled with our procedure.

Note that by definition, Algorithm \ref{alg:greedy} terminates at time $t=T_f$. From this time point, we continue the algorithm as follows. Let $\bar{I}_{T_f}\coloneqq I_{T_f}$. At each time step $t\geq T_f+1$, we consider i.i.d.\ random variables $Y_t\sim \mathrm{Ber}(n^{-1+\epsilon})$ for $t \geq T_f + 1$. We update
    \[\bar{I}_{t}= \left\{\begin{array}{cc}
       \bar{I}_{t - 1} \cup \{v_t\}  & \text{if } Y_t = 1; \\
       \bar{I}_{t - 1}  & \text{otherwise.}
    \end{array}\right.\]
As remarked before, $\bar{I}_t$ need not be an independent set for any $t\geq T_f+1$.

Observe that, by definition of $T_f$, for any $t \leq T_f$, the probability with which we add $v_t$ to $I_t$ is
\begin{align*}
    \mathbb{P}[N_t(v_t)\cap I_{t-1}=\varnothing] &= \CProb{\BIN(|I_{t-1}\cap R|,p)=0}{v_t \in L} + \CProb{\BIN(|I_{t-1}\cap L|,p)=0}{v_t \in R} \\ 
    &= (1-p)^{|I_{t-1}\cap R|}\ind{v_t \in L} + (1-p)^{|I_{t-1}\cap L|}\ind{v_t \in R}  \\
    &> (1-p)^{(1-\epsilon)\log_bn} =n^{-1+\epsilon}, \numberthis \label{eq:para_dom}
\end{align*}   
where we recall that $b=1/(1-p)$. Consider an i.i.d.\ sequence of random variables $Z_t \sim \mathrm{Ber}(n^{-1+\epsilon})$ for $t \geq 1$. The observation in \eqref{eq:para_dom} leads us to consider the following simple process:
\begin{itemize}
    \item Initialize $\Tilde{I}_0=\varnothing$.
    \item At each time step $t \geq 1$, update as 
    \[\Tilde{I}_{t}= \left\{\begin{array}{cc}
       \Tilde{I}_{t - 1} \cup \{v_t\}  & \text{if } Z_t = 1; \\
       \Tilde{I}_{t - 1}  & \text{otherwise.}
    \end{array}\right.\]

\end{itemize}
Consider the processes $(\bar{S}_t)_{t \geq 1}$ and $(\Tilde{S}_t)_{t \geq 1}$, where for any $t\geq 1$, we respectively have $\bar{S}_t=|\bar{I}_t|$ and $\Tilde{S}_t=|\Tilde{I}_t|$. As a consequence of \eqref{eq:para_dom}, we note that
\begin{align*}
    q_t\coloneqq \CProb{\bar{S}_{t}=\bar{S}_{t - 1}}{\mathcal{F}_{t - 1}}\leq 1-n^{-1+\epsilon}=\CProb{\Tilde{S}_{t}=\Tilde{S}_{t - 1}}{\mathcal{F}^*_{t - 1}},\numberthis \label{eq:cprob_dom}
\end{align*}
where $\mathcal{F}_t$ (resp.~$\mathcal{F}^*_t$) is the sigma algebra generated by all the information up to (and including) timestep $t$ for the process $(\bar{S}_t)_{t \geq 1}$ (resp.~$(\Tilde{S}_t)_{t \geq 1}$).

Furthermore, as all edges are included independently in $\ERB(n, p)$, we can couple the processes $(\bar{S}_t)_{t \geq 1}$ and $(\Tilde{S}_t)_{t \geq 1}$ on the same probability space using \eqref{eq:cprob_dom} as follows:
\begin{itemize}
    \item Let $\mathcal{Y}_0=0$. For each $t \geq 1$, given $\mathcal{F}_t$, consider independent random variables $\mathcal{X}_t \sim \BER\left(\dfrac{n^{-1+\epsilon}}{1-q_t} \right)$.
    \item For any $t\geq 1$, given $\mathcal{Y}_t$, define $\mathcal{Y}_{t+1}$ as follows. 
    \begin{align*}
        \mathcal{Y}_{t}=\mathcal{Y}_{t - 1} + \mathcal{X}_t\ind{\bar{S}_{t}=\bar{S}_{t - 1}+1}.
    \end{align*}

\end{itemize}
Observe that for any $t\geq 1$, by construction $\mathcal{Y}_{t}\leq \bar{S}_t$, and furthermore, the variables $\mathcal{X}_t$ make sure that marginally, $\mathcal{Y}_t \stackrel{d}{=} \Tilde{S}_t$.

Next, we note that if we define the analogous stopping time 
\begin{align*}
    \bar{T}_f\coloneqq \min\{t\,:\,\max\{|\bar{I}_t\cap L|,|\bar{I}_t\cap R|\}\geq (1-\epsilon)\log_b n\}
\end{align*}
for the set process $(\bar{I}_t)_{t \geq 1}$, then in fact $\bar{T}_f=T_f$, as for any $t\leq T_f$, the construction of the processes $(I_t)_{t \geq 1}$ and $(\bar{I}_t)_{t \geq 1}$ are the same. In particular, to conclude the lemma, it is enough to show that 
\[\mathbb{P}\left[\bar{T}_f>n^{1-\epsilon/2}\right]= \exp\left(-\Omega\left(n^{\epsilon/2}\right)\right).\] 

Since the event $\{|\bar{I}_t| \geq 2(1-\epsilon)\log_b n\}$ implies that either $|\bar{I}_t \cap L| \geq (1-\epsilon)\log_b n$ or $|\bar{I}_t \cap R| \geq (1-\epsilon)\log_b n$, we have 
\begin{align}
    \mathbb{P}\left[\bar{T}_f>n^{1-\epsilon/2}\right]\leq \mathbb{P}\left[\bar{T}_*>n^{1-\epsilon/2}\right], \label{eq:tail_bar}
\end{align}
where we introduce
\begin{align*}
    \bar{T}_*\coloneqq \min\{t\,:\, |\bar{I}_t|=2(1-\epsilon)\log_b n\}=\min\{t \,:\, \bar{S}_t=2(1-\epsilon)\log_b n\}.
\end{align*}
Since $\mathcal{Y}_t\leq \bar{S}_t$ and $\mathcal{Y}_t \stackrel{d}{=} \Tilde{S}_t$, the RHS in \eqref{eq:tail_bar} is at most
\begin{align*}
    \mathbb{P}\left[\Tilde{T}_*>n^{1-\epsilon/2}\right],
\end{align*}
where analogously
\begin{align*}
    \Tilde{T}_*\coloneqq \min\{t \,:\, \Tilde{S}_t=2(1-\epsilon)\log_b n\}.
\end{align*}

Recall the process defining $\Tilde{S}_t=|\Tilde{I}_t|$. 
Observe that irrespective of whether $v_t \in L$ or $v_t \in R$, we always have 
\begin{align*}
    \Tilde{S}_{t}=\Tilde{S}_{t - 1}+Z_t,
\end{align*}
where we recall $(Z_t)_{t \geq 1}$ is an i.i.d.\ sequence with distribution $\BER(n^{-1+\epsilon})$. As a consequence, we can bound
\begin{align*}
    \mathbb{P}\left[\Tilde{T}_*>n^{1-\epsilon/2}\right] \leq \mathbb{P}\left[Z_1+Z_2+\dots+Z_{n^{1-\epsilon/2}}\leq 2(1-\epsilon)\log_b n\right].
\end{align*}
Note that $Z_1+\dots+Z_{n^{1-\epsilon/2}}\sim \BIN(n^{1-\epsilon/2},n^{-1+\epsilon})$ has mean $n^{\epsilon/2}$.
Furthermore, we have
\[n^{\epsilon/2}-2(1-\epsilon)\log_b n>\dfrac{n^{\epsilon/2}}{2},\]
for $n$ sufficiently large. 
A standard Chernoff bound (see, e.g., \cite{alon2016probabilistic}) yields 
\[\mathbb{P}[Z_1+Z_2+\dots+Z_{n^{1-\epsilon/2}}\leq 2(1-\epsilon)\log_b n] = \exp \left(-\Omega\left(n^{\epsilon/2}\right)\right),\]
completing the proof.

\subsection{Stage Two: Proof of Lemma~\ref{lemma: balancing analysis}}\label{subsection: balancing step}

Let us now analyze Algorithm~\ref{alg:greedy_bal} conditioning on the event $T_f \leq n^{1-\epsilon/2}$.
For brevity, we avoid stating this conditioning at every step of the proof.
Once again, we will consider an alternate process $\bar{I}$.
Note that during a single iteration of the \textsf{for} loop of Algorithm \ref{alg:greedy_bal}, we do nothing if $v_t \in L$ or $|I_{t - 1} \cap R| = \frac{(1 - \gamma)}{\gamma}(1-\epsilon)\log_bn$.
Suppose the latter condition first occurs at timestep $T_b$.
From this time point, we continue the algorithm as follows. 
Let $\bar{I}_{T_b - 1}\coloneqq I_{T_b - 1}$. At each time step $t\geq T_b$, we do the following: if $v_t \in L$, we do nothing; if $v_t \in R$, we add $v_t$ to $\bar I_{t - 1}$ if $N_t(v_t) \cap \bar I_{t - 1} = \varnothing$.

Note that as we do nothing for vertices in $L$ and since $|I_{T_f} \cap L| = (1-\epsilon)\log_bn$, we have the following for each vertex $v \in R_f$, where $R_f \coloneqq R \setminus R_{T_f}$:
\[\mathbb{P}[v \in \bar I_{2n}] = (1-p)^{(1-\epsilon)\log_bn} = n^{-1 + \epsilon}.\]
In particular, 
\[\E[|\bar I_{2n} \cap R_f|] = |R_f|n^{-1 + \epsilon} \geq (n - T_f)n^{-1 + \epsilon} \geq n^{\epsilon} - n^{\epsilon/2} \geq n^{\epsilon/2},\]
where we use the assumption that $T_f \leq n^{1 - \epsilon/2}$.
By a simple Chernoff bound (see, e.g., \cite{alon2016probabilistic}), we have
\[\mathbb{P}\left[|\bar I_{2n} \cap R_f| < n^{\epsilon/2}/2\right] = \exp\left(-\Omega\left(n^{\epsilon/2}\right)\right).\]
By the definition of $\bar I_t$, we have
\begin{align*}
    |I_{2n} \cap R| < \frac{(1 - \gamma)}{\gamma}(1-\epsilon)\log_bn &\implies |\bar I_{2n} \cap R| < \frac{(1 - \gamma)}{\gamma}(1-\epsilon)\log_bn \\
    &\implies |\bar I_{2n} \cap R_f| < \frac{(1 - \gamma)}{\gamma}(1-\epsilon)\log_bn.
\end{align*}
From here, we may conclude that
\begin{align*}
    \mathbb{P}\left[|I_{2n} \cap R| < \frac{(1 - \gamma)}{\gamma}(1-\epsilon)\log_bn\right] &\leq \mathbb{P}\left[|\bar I_{2n} \cap R_f| < \frac{(1 - \gamma)}{\gamma}(1-\epsilon)\log_bn\right] \\
    &\leq \mathbb{P}\left[|\bar I_{2n} \cap R_f| < n^{\epsilon/2}/2\right] \\
    &= \exp\left(-\Omega\left(n^{\epsilon/2}\right)\right),
\end{align*}
where we use the fact that $n^{\epsilon/2} \gg \frac{(1 - \gamma)}{\gamma}(1-\epsilon)\log_bn$ as $n \to \infty$.

\section{Impossibility Result: Proof of Theorem~\ref{theorem: impossibility}}\label{sec:impossibility}

In this section we prove that the class of online algorithms falling under Definition~\ref{def:OnlineAlg} fails to produce an independent set of size at least $(1+\epsilon)\alpha_{\rm COMP}$, for any $\epsilon>0$, with suitably high probability. 
This shows that the factor-$1/(1-\gamma)$ statistical-computational gap cannot be bridged by online algorithms. To achieve this result, we study \emph{correlated random graph families}. 
Our proof goes by contradiction. Roughly speaking, under the assumption that an online algorithm $\mathcal{A}$ indeed finds an independent set of size at least $(1+\epsilon)\alpha_{\rm COMP}$, we analyze $\mathcal{A}$ over multiple correlated random graph families to exhibit the existence of a particular structure. The contradiction is then derived by arguing that the probability of such a structure existing must satisfy certain inequalities, which we prove fail to hold.

Note that the online algorithm we consider in Section \ref{sec:achievability} is randomized through the arrival of the vertices $(v_t)_{t \geq 1}$. We first argue that in Theorem \ref{theorem: impossibility}, it is enough to consider algorithms that are \emph{deterministic} instead. To do so, let us formally denote an algorithm $\mathcal{A}$ as 
\begin{align*}
    \mathcal{A}:\{0,1\}^{n^2}\times \Omega \to \{0,1\}^{2n},
\end{align*}
where $\Omega$ is some abstract probability space capturing the randomness coming from the algorithm $\mathcal{A}$, $\{0,1\}^{n^2}$ is the vector indicating which edges are present in the random graph, and the output vector indicates which vertices are present in the constructed independent set by the algorithm $\mathcal{A}$. With a slight abuse of notation, for a bipartite graph $G$ on vertex bipartitions $L$ and $R$, we let $G$ denote both the graph itself as well as the vector $E \in \{0,1\}^{n^2}$ indicating which edges are present.

\begin{lemma}[Reduction to deterministic algorithms]\label{lem:reduc_to_det_algos}
    There exists $\omega^* \in \Omega$ such that the deterministic algorithm $\mathcal{A}:\{0,1\}^{n^2}\to \{0,1\}$ defined as $\mathcal{A}(\cdot)=\mathcal{A}(\cdot,\omega^*)$ satisfies
    \begin{align*}
        \mathbb{P}[|\mathcal{A}(\ERB(n,p),\omega)|\geq (1+\epsilon)\cdot \alpha_{\rm COMP}]\leq\mathbb{P}[|\mathcal{A}(\ERB(n,p))|\geq (1+\epsilon)\cdot \alpha_{\rm COMP}]. 
    \end{align*}
\end{lemma}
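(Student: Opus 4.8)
The statement is the standard ``fix the best random seed'' reduction, but one must be careful about the direction of the inequality. Write the randomized success probability by conditioning on the seed:
\[
\mathbb{P}_{G,\omega}\bigl[|\mathcal{A}(G,\omega)|\ge (1+\epsilon)\alpha_{\rm COMP}\bigr]
= \mathbb{E}_{\omega}\Bigl[\,\mathbb{P}_{G}\bigl[|\mathcal{A}(G,\omega)|\ge (1+\epsilon)\alpha_{\rm COMP}\bigr]\,\Bigr],
\]
using that $G\sim \ERB(n,p)$ is independent of the algorithm's internal randomness $\omega$ (this independence is exactly what Definition~\ref{def:OnlineAlg} and the preceding discussion set up). Define the function $g(\omega):=\mathbb{P}_{G}[|\mathcal{A}(G,\omega)|\ge (1+\epsilon)\alpha_{\rm COMP}]$, so the right-hand side is $\mathbb{E}_\omega[g(\omega)]$. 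Since a random variable is always at least its expectation on a positive-probability set, there exists $\omega^*\in\Omega$ (more precisely, a positive-measure set of such $\omega$) with $g(\omega^*)\ge \mathbb{E}_\omega[g(\omega)]$. Taking $\mathcal{A}(\cdot):=\mathcal{A}(\cdot,\omega^*)$ gives a deterministic algorithm (in the sense that its only remaining randomness is the vertex-arrival order, which in the deterministic-seed case is a deterministic function of $G$) with
\[
\mathbb{P}[|\mathcal{A}(\ERB(n,p))|\ge (1+\epsilon)\alpha_{\rm COMP}] = g(\omega^*) \ge \mathbb{E}_\omega[g(\omega)] = \mathbb{P}[|\mathcal{A}(\ERB(n,p),\omega)|\ge (1+\epsilon)\alpha_{\rm COMP}],
\]
which is precisely the claimed inequality. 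The point of the reduction is that an upper bound on the deterministic success probability (to be established later via the OGP argument) then transfers back to the randomized algorithm, so it suffices to prove Theorem~\ref{theorem: impossibility} for deterministic algorithms.

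\textbf{On measurability / existence of $\omega^*$.} The only mild subtlety is that $\Omega$ is an abstract probability space, so one should note that $g:\Omega\to[0,1]$ is measurable (it is a fixed integral of the measurable indicator $\mathbbm{1}\{|\mathcal{A}(G,\omega)|\ge (1+\epsilon)\alpha_{\rm COMP}\}$ against the law of $G$, and measurability in $\omega$ follows from Fubini–Tonelli since the joint map $(G,\omega)\mapsto \mathcal{A}(G,\omega)$ is measurable) and bounded, hence $\mathbb{E}_\omega[g(\omega)]$ is well-defined and finite; then $\mathbb{P}_\omega[g(\omega)\ge \mathbb{E}_\omega[g(\omega)]]>0$, so such an $\omega^*$ exists. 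If one prefers to avoid even this, one can restrict to $\Omega$ finite or countable (any online algorithm using finitely many coin flips per round over $2n$ rounds has countable seed space), where the averaging argument is entirely elementary.

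\textbf{Main obstacle.} There is no real analytic difficulty here; the only thing to get right is the \emph{direction} of the inequality — we need a seed whose deterministic success probability is \emph{at least} the randomized average (not at most), which is why we pick $\omega^*$ attaining (or exceeding) the mean of $g$ rather than its minimum. The rest is bookkeeping: making sure the independence of $G$ and $\omega$ is invoked correctly so that conditioning on $\omega$ leaves the law of $G$ unchanged, and noting that once $\omega$ is fixed the resulting algorithm still fits Definition~\ref{def:OnlineAlg} (with deterministic decision rules) so that the subsequent OGP machinery applies to it.
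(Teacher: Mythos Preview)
Your proof is correct and follows essentially the same approach as the paper: write the randomized success probability as $\mathbb{E}_\omega[g(\omega)]$ with $g(\omega)=\mathbb{P}_G[|\mathcal{A}(G,\omega)|\ge(1+\epsilon)\alpha_{\rm COMP}]$, then pick $\omega^*$ with $g(\omega^*)\ge\mathbb{E}_\omega[g(\omega)]$. The paper's version is just the two-line averaging argument without your additional remarks on measurability and the direction of the inequality.
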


\begin{proof}
    Since
    \begin{align*}
        \mathbb{P}[|\mathcal{A}(\ERB(n,p),\omega)|\geq (1+\epsilon)\cdot \alpha_{\rm COMP}]=\E_\omega\big[\CProb{|\mathcal{A}(\ERB(n,p),\omega)|\geq (1+\epsilon)\cdot \alpha_{\rm COMP}}{\omega} \big],
    \end{align*}
    and since by definition
    \begin{align*}
        \mathbb{P}[|\mathcal{A}(\ERB(n,p))|\geq (1+\epsilon)\cdot \alpha_{\rm COMP}]=\CProb{|\mathcal{A}(\ERB(n,p),\omega)|\geq (1+\epsilon)\cdot \alpha_{\rm COMP}}{\omega},
    \end{align*}
    we note that the claimed inequality must hold for some $\omega^* \in \Omega$.
\end{proof}

Recalling Definition~\ref{Def:f_Optimize_ind}, we quickly note that as a consequence of Lemma~\ref{lem:reduc_to_det_algos}, to prove Theorem~\ref{theorem: impossibility}, it is enough to show that no \textit{deterministic} online algorithm $(k,\delta)$-optimizes the balanced independent set problem in $\ERB(n,p)$ for the specified values of $k$ and $\delta$.
Therefore, for the rest of this section, we only consider deterministic online algorithms.

\subsection{Correlated Random Graph Families}\label{section: correlated graphs}

Consider a random graph $G \sim \ERB(n,p)$. 
We will use $G$ as a base graph to define our correlated graph families.

Consider any online algorithm $\mathcal{A}$, and run it on the base graph $G$. 
Let $E_{\mathcal{A}}(T)$ be the set of all the edges of $G$ queried by the algorithm $\mathcal{A}$ up-to and including timestep $T$ for $1 \leq T \leq 2n$, and let $V_{\mathcal{A}}(T)$ be the set of vertices exposed by $\mathcal{A}$ in the first $T$ steps of the algorithm.

For any $m \geq 1$, we define a sequence of random graphs 
\[G_1^{(T)}, \ldots, G_m^{(T)}, \qquad \text{for } 1 \leq T \leq 2n,\]
as follows:
\begin{itemize}
    \item The graph $G_1^{(T)}$ is a copy of $\mathbb G$.
    
    \item For any $2\leq i \leq m$ and any $e \in E_{\mathcal{A}}(T)$, the status of the edge $e$ in $G^{(T)}_i$ is exactly the same as it is in $G$, i.e., it is present in all the graphs $\{G^{(T)}_i\,:\, 2\leq i \leq m\}$ if and only if it is present in $G$, and absent in all the graphs $\{G^{(T)}_i\,:\, 2 \leq i \leq m\}$ otherwise.
    
    \item For any $2\leq i \leq m$ and for any edge $e \not \subseteq V_{\mathcal{A}}(T)$, the status of the edge $e$ is independently decided for each graph $G^{(T)}_i$. In other words, for any $e \not \subseteq V_{\mathcal{A}}(T)$,
    \begin{align*}
        \mathbb{P}[e\text{ is present in }G^{(T)}_i]=\mathbb{P}[\chi^{(T,i)}_{e}=1],
    \end{align*}
    where the collection $\{\chi^{(T,i)}_{e}\,:\,e \not \subseteq V_{\mathcal{A}}(T),\, 1\leq T \leq 2n, \,2\leq i \leq m\}$ is a collection of i.i.d. $\BER(p)$ random variables.
\end{itemize}
We immediately observe the following:

\begin{remark}\label{remark: correlated graphs}
    \mbox{}
    \begin{itemize}
        \item If we consider the entire \emph{graph array} $(G^{(T)}_i)_{1\leq T \leq 2n,1\leq i \leq m}$, then each entry of both the \emph{first column} $(G^{(T)}_1)_{1\leq T\leq 2n}$ and the \emph{last row} $(G^{(2n)}_i)_{1\leq i \leq m}$  is the same as $G$.
        \item For any $1\leq T\leq 2n$ and $1\leq i \leq m$, marginally the distribution of $G^{(T)}_i$ is the same as $G$.
        \item Since $\mathcal{A}$ is deterministic, by construction, the behavior of the first $T$ steps of the algorithm $\mathcal{A}$ is exactly the same on all the graphs $G^{(T)}_1,\ldots,G^{(T)}_m$.
    \end{itemize}
\end{remark}

\subsection{Forbidden tuples of independent sets}

Consider an online algorithm $\mathcal{A}$.
We wish to bound the probability that $\mathcal{A}$ produces an independent set of size at least $(1+\epsilon)\alpha_{\rm COMP}$. 
Fix $\mu = \epsilon^2/2$ and define the stopping time 
\begin{align}
    \tau \coloneqq \min\set{2n,\,\min\{t\,:\,\max\{|I_t\cap L|,|I_t\cap R|\} = (1-\mu)\log_b n\}}. \label{eq:def_tau}
\end{align}
We analyze the algorithm $\mathcal{A}$ on the graphs $G^{(\tau)}_1,\ldots, G^{(\tau)}_m$. Define $\zeta  = L$ (resp. $\zeta' = R$) if $|I_{\tau}\cap L|>|I_{\tau}\cap R|$ and $R$ otherwise (resp. $\zeta' = L$ otherwise). 
Note that $\zeta \in \{L,R\}$ is random and depends on the edges as revealed by the algorithm in the first $\tau$ steps.
We define the \emph{successful event}
\begin{align}\label{eq: sucecesful event}
    \mathcal{S} \coloneqq \bigcap_{1 \leq i \leq m} \left\{|\mathcal{A}(G_i^{\tau})| \geq (1+\epsilon)\alpha_{\rm COMP}\right\} = \bigcap_{1 \leq i \leq m} \mathcal{E}_{i, \tau},
\end{align}
where $\mathcal{E}_{i, T} \coloneqq \left\{|\mathcal{A}(G_i^{T})| \geq (1+\epsilon)\alpha_{\rm COMP}\right\}$.
The key results en route to the proof of Theorem~\ref{theorem: impossibility} provide lower and upper bounds on $\mathbb{P}[\mathcal{S}]$.
We state these results here and prove them in Sections~\ref{subsec: lb proof} and \ref{subsec: ub proof}, respectively.

\begin{proposition}\label{prop: lb}
    Let $\mathcal{E}$ denote the event that $|\mathcal{A}(G)| \geq (1+\epsilon)\alpha_{\rm COMP}$.
    Then $\mathbb{P}[\mathcal{S}] \geq \mathbb{P}[\mathcal{E}]^m$.
\end{proposition}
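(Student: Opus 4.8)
The plan is to establish $\mathbb{P}[\mathcal{S}] \geq \mathbb{P}[\mathcal{E}]^m$ by exploiting the specific correlation structure built into the graph array $(G_i^{(T)})$, together with the fact that $\mathcal{A}$ is deterministic (which we may assume by Lemma~\ref{lem:reduc_to_det_algos}). The essential observation, recorded in Remark~\ref{remark: correlated graphs}, is that for each fixed $T$ the algorithm behaves identically on all of $G_1^{(T)}, \ldots, G_m^{(T)}$ through time $T$, since those graphs agree on every edge the algorithm has queried; consequently the stopping time $\tau$, the side $\zeta$, and the entire partial state $I_\tau$ are well-defined functions of the base graph $G$ alone, and they are the \emph{same} object regardless of which copy $G_i^{(\tau)}$ we eventually continue the algorithm on. Thus conditioning on the first $\tau$ steps — equivalently, conditioning on $E_{\mathcal{A}}(\tau)$ and $V_{\mathcal{A}}(\tau)$ — fixes a common prefix, and the subsequent evolution of $\mathcal{A}$ on $G_i^{(\tau)}$ depends only on the edges not contained in $V_{\mathcal{A}}(\tau)$, which by construction are drawn independently across $i \in \{2,\ldots,m\}$ (and the $i=1$ copy equals $G$).

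The key step is then a conditional-independence argument. Let $\mathcal{G}_\tau$ denote the $\sigma$-algebra generated by the first $\tau$ steps of $\mathcal{A}$ run on $G$ — i.e., by the queried edge statuses $E_{\mathcal{A}}(\tau)$ and the exposed vertex set $V_{\mathcal{A}}(\tau)$. I would argue that, conditionally on $\mathcal{G}_\tau$, the events $\mathcal{E}_{i,\tau} = \{|\mathcal{A}(G_i^{(\tau)})| \geq (1+\epsilon)\alpha_{\rm COMP}\}$ for $i = 1, \ldots, m$ are mutually independent, and moreover each has the same conditional probability. Independence holds because the continuation of $\mathcal{A}$ past time $\tau$ on graph $G_i^{(\tau)}$ is a deterministic function of the fixed prefix together with the fresh randomness $\{\chi_e^{(\tau,i)} : e \not\subseteq V_{\mathcal{A}}(\tau)\}$, and these randomness blocks are independent across $i$ (with $i=1$ using the original edges of $G$ outside $V_{\mathcal{A}}(\tau)$, which are themselves independent of $\mathcal{G}_\tau$ and fresh). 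Identical distribution holds because, marginally, each $G_i^{(\tau)}$ conditioned on $\mathcal{G}_\tau$ has the same law: the queried edges are pinned to their revealed values and the remaining edges are i.i.d.\ $\mathrm{Ber}(p)$, exactly as for $G$ itself. Hence $\CProb{\mathcal{S}}{\mathcal{G}_\tau} = \prod_{i=1}^m \CProb{\mathcal{E}_{i,\tau}}{\mathcal{G}_\tau} = \left(\CProb{\mathcal{E}_{1,\tau}}{\mathcal{G}_\tau}\right)^m$.

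To finish, I would relate $\CProb{\mathcal{E}_{1,\tau}}{\mathcal{G}_\tau}$ back to $\mathbb{P}[\mathcal{E}]$. Since $G_1^{(\tau)} = G$ exactly (first column of the array is always $G$), the event $\mathcal{E}_{1,\tau}$ is precisely $\mathcal{E} = \{|\mathcal{A}(G)| \geq (1+\epsilon)\alpha_{\rm COMP}\}$. Therefore $\CProb{\mathcal{S}}{\mathcal{G}_\tau} = \left(\CProb{\mathcal{E}}{\mathcal{G}_\tau}\right)^m$, and taking expectations over $\mathcal{G}_\tau$ and applying Jensen's inequality (convexity of $x \mapsto x^m$ on $[0,1]$) gives
\[
\mathbb{P}[\mathcal{S}] = \mathbb{E}\left[\left(\CProb{\mathcal{E}}{\mathcal{G}_\tau}\right)^m\right] \geq \left(\mathbb{E}\left[\CProb{\mathcal{E}}{\mathcal{G}_\tau}\right]\right)^m = \mathbb{P}[\mathcal{E}]^m,
\]
as desired.

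The main obstacle I anticipate is making the conditional-independence claim fully rigorous, since $\tau$ is a \emph{random} stopping time rather than a fixed one: one must be careful that $V_{\mathcal{A}}(\tau)$ and $E_{\mathcal{A}}(\tau)$ are measurable with respect to the natural filtration, that the "fresh" randomness $\chi^{(\tau,i)}_e$ used after time $\tau$ is genuinely independent of the stopped prefix (a strong-Markov-type property for the coupled construction), and that the continuation map is indeed a deterministic function of prefix plus fresh randomness — this is where determinism of $\mathcal{A}$ (Lemma~\ref{lem:reduc_to_det_algos}) and the precise wording of the graph-family construction in Section~\ref{section: correlated graphs} must be invoked carefully. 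A clean way to handle this is to condition first on the value of $\tau$ (a discrete random variable taking values in $[2n]$) and on the full transcript of the first $\tau$ steps, reducing to a fixed-time statement, then sum/average over transcripts.
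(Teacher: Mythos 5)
Your proposal is correct and takes essentially the same route as the paper: both hinge on the determinism of $\mathcal{A}$, the conditional independence and identical conditional law of the events $\mathcal{E}_{i,\tau}$ given the information revealed through time $\tau$ (a consequence of the construction of the correlated array), the identification $\mathcal{E}_{1,\tau}=\mathcal{E}$ since the first column of the array equals $G$, and Jensen's inequality for $x\mapsto x^m$. The stopping-time subtlety you flag is handled in the paper exactly as you anticipate—by decomposing over $\{\tau=T\}$, conditioning on $E_{\mathcal{A}}(T)$ at each fixed $T$, and reassembling via the indicators $\ind{\tau=T}$ before applying Jensen—so your stopped-$\sigma$-algebra phrasing is just a repackaging of the same argument.
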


\begin{proposition}\label{prop: ub}
   Let $m=C\epsilon^{-2}$ for some sufficiently large constant $C>0$. Then, \[\mathbb{P}[\mathcal S] = \exp\left(-\Omega(\log_b^2 n)\right).\]
\end{proposition}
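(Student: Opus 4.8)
The plan is to establish a multi-overlap gap property (multi-OGP) for tuples of large $\gamma$-balanced independent sets in correlated copies of $\ERB(n,p)$, and then combine it with the structural constraints on the algorithm's output at the stopping time $\tau$. First I would note that, by the definition of $\tau$ in~\eqref{eq:def_tau} and the third bullet of Remark~\ref{remark: correlated graphs}, the algorithm behaves identically on $G_1^{(\tau)},\ldots,G_m^{(\tau)}$ for the first $\tau$ steps; hence the independent sets $I^{(i)} \coloneqq \mathcal{A}(G_i^{(\tau)})$ all share the common ``stub'' $I_\tau$ built on the exposed vertex set $V_{\mathcal A}(\tau)$, and this common part has $(1-\mu)\log_b n$ vertices on the side $\zeta$ and at most $(1-\mu)\log_b n$ on $\zeta'$. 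On the event $\mathcal S$, each $I^{(i)}$ has size at least $(1+\epsilon)\alpha_{\rm COMP} = (1+\epsilon)\log_b n/\gamma$. The point of choosing $\mu = \epsilon^2/2$ is that the common part is a negligible fraction of the total: outside $V_{\mathcal A}(\tau)$, each $I^{(i)}$ must still acquire $\Omega(\log_b n)$ new vertices, and on those fresh vertices the edge statuses across the $m$ graphs are independent.

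The core step is the overlap argument. I would condition on the realization of $G$ restricted to $V_{\mathcal A}(\tau)$ (equivalently, on $\tau$, $\zeta$, $V_{\mathcal A}(\tau)$ and $I_\tau$), so that it remains to bound the probability that, over the independent fresh randomness $\{\chi_e^{(\tau,i)}\}$, there exist $\gamma$-balanced independent sets $I^{(1)},\ldots,I^{(m)}$ of size $\ge (1+\epsilon)\log_b n/\gamma$ in $G_1^{(\tau)},\ldots,G_m^{(\tau)}$ respectively, all extending the common stub. Writing $J^{(i)} = I^{(i)} \setminus V_{\mathcal A}(\tau)$ and splitting each into its $L$- and $R$-parts, one performs a union bound over all candidate tuples of vertex sets (of the appropriate sizes on each side, totalling $n^{O(m\log_b n)}$ choices) and estimates the probability that each $J^{(i)}$, together with the stub, is independent in $G_i^{(\tau)}$. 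Because the fresh edges are independent across $i$, this probability factorizes; the subtle part is that within a single graph the ``double-counted'' edges among the $J^{(i)}$'s (i.e.\ pairs lying in the intersection of several $J^{(i)}$'s) are shared, so overlaps between the sets \emph{help} the bound. A first-moment / Ramsey-type argument then shows that the exponent $-\Theta(m\log_b^2 n)$ coming from the independence requirement dominates the entropy term $+O(m\log_b n \cdot \log n) = O(m\log_b^2 n)$ only once $m$ is a large enough constant multiple of $\epsilon^{-2}$ — this is exactly the role of the hypothesis $m = C\epsilon^{-2}$. Concretely, one shows that for such $m$ the expected number of forbidden $m$-tuples is $\exp(-\Omega(\log_b^2 n))$, uniformly over the conditioning, and then integrates out.

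I would organize the computation so that the ``gain'' from the $(1+\epsilon)$ factor is quantified cleanly: for a single $\gamma$-balanced set of size $\alpha$, $\log \mathbb{E}[\text{\# such independent sets}] \approx \alpha\log n - \gamma(1-\gamma)\alpha^2\log b$, which at $\alpha = (1+\epsilon)\alpha_{\rm COMP}$ is $-\Theta(\epsilon \log_b^2 n)$ (cf.\ the \ref{stat: FMM} computation in Section~\ref{section: stat thresh proof}). Raising to the $m$-th power and accounting for the shared stub and the overlap structure, the total exponent is at most $-c\, m\, \epsilon\, \log_b^2 n + C' m \log_b^2 n \cdot (\text{overlap-entropy correction})$; a convexity/averaging argument over the overlap profile forces the correction to be smaller than the gain once $m \gtrsim \epsilon^{-2}$, yielding $\mathbb{P}[\mathcal S] \le \exp(-\Omega(\log_b^2 n))$ as claimed.

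The main obstacle I anticipate is controlling the overlap structure among the $m$ fresh parts $J^{(1)},\ldots,J^{(m)}$ \emph{simultaneously on both sides} of the bipartition: the within-graph edge count that must be ``empty'' is $\sum_{i} |J^{(i)}_L|\,|J^{(i)}_R|$ minus the doubly-counted contributions from pairwise (and higher) intersections, and one must argue that no choice of sizes and overlaps lets the entropy win. This is where the $\gamma$-balancedness enters essentially — it pins down $|J^{(i)}_L|$ and $|J^{(i)}_R|$ up to the stub correction — and where the choice $\mu=\epsilon^2/2$ must be shown to be compatible, i.e.\ small enough that the stub does not erode the $(1+\epsilon)$ margin but large enough that $\tau$ is reached with the right structure. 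Handling the random arrival order (so that $\zeta$ and $V_{\mathcal A}(\tau)$ are themselves random) is dealt with by the conditioning described above, since the bound is uniform over all admissible values of the conditioned quantities.
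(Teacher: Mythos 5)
Your high-level scaffolding matches the paper's proof: condition on the stopping time $\tau$, the exposed set $V_{\mathcal{A}}(\tau)$ and the common stub $I_\tau$ (which by Remark~\ref{remark: correlated graphs} is shared by all $m$ runs), take a union bound over all $m$-tuples of $\gamma$-balanced independent sets extending this stub, use the fact that the unexposed edge variables are independent across the $m$ graphs so the first moment factorizes over replicas, and choose $m=C\epsilon^{-2}$. However, the quantitative heart of your argument is misidentified, and this is a genuine gap. You claim the per-replica gain comes from the single-set first moment being $-\Theta(\epsilon\log_b^2 n)$ at $\alpha=(1+\epsilon)\alpha_{\rm COMP}$; this is false. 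Since $(1+\epsilon)\alpha_{\rm COMP}<\alpha_{\rm STAT}$ for small $\epsilon$, one has $\alpha\log n-\gamma(1-\gamma)\alpha^2\log b=\tfrac{(1+\epsilon)}{\gamma}\bigl(\gamma-\epsilon(1-\gamma)\bigr)\log_b n\,\log n=+\Theta(\log_b^2 n)$, i.e.\ there are $\exp(\Theta(\log_b^2 n))$ such independent sets and no unconditional first moment (nor its $m$-fold product) decays. The decay in the paper comes entirely from the conditioning on the stub: given that the stub already occupies exactly $(1-\mu)\log_b n$ vertices on side $\eta$ (and $\beta$ on the other side), while $\gamma$-balancedness forces each completed set to have $\theta_i,\theta_i'\ge \gamma a_i\ge(1+\epsilon)\log_b n$ vertices on \emph{each} side, the conditional entropy-times-probability per replica is at most $n^{-(\epsilon^2-\mu)\log_b n-\beta\mu}$, a saving of only $\Theta(\epsilon^2\log_b^2 n)$ per replica (not $\Theta(\epsilon\log_b^2 n)$), which must be paid $m-1$ times against the one-time stub entropy $n^{O(\log_b n)}$. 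This is precisely why $m\asymp\epsilon^{-2}$ is needed; your accounting, taken at face value, would suggest $m\asymp\epsilon^{-1}$ suffices and thus does not actually explain the hypothesis, and the choice $\mu=\epsilon^2/2$ plays no identifiable role in your bookkeeping.

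Relatedly, the "subtle overlap" issue you plan to handle by a convexity/averaging argument over overlap profiles does not exist in this ensemble: a pair $e\not\subseteq V_{\mathcal{A}}(\tau)$ lying in several of the fresh parts $J^{(i)}$ corresponds to distinct i.i.d.\ variables $\chi_e^{(\tau,i)}$, one per graph, so no edge is shared or double-counted across replicas and the factorization is exact regardless of overlaps. A classical multi-OGP over overlap profiles (as in the stable-algorithm literature) would be analyzing a different, single-graph ensemble and is neither needed nor a substitute for the stub-conditioning computation above. The remaining ingredients you gloss over—summing over the size vector $\mathbf{a}$, the stub size $\beta$, the two choices of $\eta$, and $T\le 2n$, plus the case analysis of whether the heavy side $\zeta$ of the final set coincides with $\eta$—cost only $\exp(O(m\log n))$ and are harmless, as in the paper. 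So: right architecture, but the step that actually produces $\exp(-\Omega(\log_b^2 n))$ is missing and the mechanism you propose in its place would not deliver it.
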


Let us now combine the above propositions to prove our impossibility result.

\begin{proof}[Proof of Theorem~\ref{theorem: impossibility}]
    Suppose there exists an online algorithm $\mathcal{A}$ that $((1+\epsilon)\alpha_{\rm COMP},\, \delta)$-optimizes the $\gamma$-balanced independent set problem in $\ERB(n, p)$.
    Fix $m = C\epsilon^{-2}$ for $C$ sufficiently large.
    By Propositions~\ref{prop: lb} and \ref{prop: ub}, we must have
    \[\delta^m = \exp\left(-\Omega(\log_b^2 n)\right) \implies \delta \leq \exp\left(-\Theta(\epsilon^2\log_b^2 n)\right),\]
    as desired.
\end{proof}

\subsubsection{Lower Bound: Proof of Proposition~\ref{prop: lb}}\label{subsec: lb proof}

In this section, we will prove Proposition~\ref{prop: lb} by a careful application of Jensen's inequality and conditional independence.
Recall the definition of $E_{\mathcal{A}}(T)$.
We have
\begin{align*}
    \mathbb{P}[\tau = T,\, \mathcal{S}] &= \E[\mathbb{P}[\tau = T,\,\mathcal{E}_{1, T}, \ldots, \mathcal{E}_{m, T}]\mid E_{\mathcal{A}}(T)] \\
    &= \E[\ind{\tau = T}\mathbb{P}[\mathcal{E}_{1, T}, \ldots, \mathcal{E}_{m, T}\mid E_{\mathcal{A}}(T)]],
\end{align*}
where we use the fact that $\set{\tau = T}$ is determined by $E_{\mathcal{A}}(T)$ as $\mathcal{A}$ is deterministic.
Note that by the observations made in Remark~\ref{remark: correlated graphs}, the algorithm $\mathcal{A}$ is identical for the first $T$ steps on the graphs $G_1^{(T)}, \ldots, G_m^{(T)}$.
With this in hand, we have
\begin{align*}
    \mathbb{P}[\mathcal{E}_{1, T}, \ldots, \mathcal{E}_{m, T}\mid E_{\mathcal{A}}(T)] = \prod_{i = 1}^m\mathbb{P}[\mathcal{E}_{i, T} \mid E_{\mathcal{A}}(T)] = \mathbb{P}[\mathcal{E}_{1, T} \mid E_{\mathcal{A}}(T)]^m,
\end{align*}
where the last step follows since the random variables $\chi_e^{T, i}$ are i.i.d.
Plugging this in above, we have
\begin{align*}
    \mathbb{P}[\mathcal{S}] &= \sum_{T = 1}^{2n}\mathbb{P}[\tau = T,\, \mathcal{S}] \\
    &= \sum_{T = 1}^{2n}\E[\ind{\tau = T}\mathbb{P}[\mathcal{E}_{1, T}, \ldots, \mathcal{E}_{m, T}\mid E_{\mathcal{A}}(T)]] \\
    &= \sum_{T = 1}^{2n}\E[\ind{\tau = T}\mathbb{P}[\mathcal{E}_{1, T}\mid E_{\mathcal{A}}(T)]^m].
\end{align*}
Observing that $\sum_T\ind{\tau = T} = 1$ and $\ind{\tau=i}\ind{\tau=j}=0$ for $i \neq j$, we can further simplify the above to
\begin{align}
    \mathbb{P}[\mathcal{S}] &= \E\left[\left(\sum_{T = 1}^{2n}\ind{\tau = T}\mathbb{P}[\mathcal{E}_{1, T}\mid E_{\mathcal{A}}(T)]\right)^m\right] \nonumber \\
    &\geq \left(\E\left[\sum_{T = 1}^{2n}\ind{\tau = T}\mathbb{P}[\mathcal{E}_{1, T}\mid E_{\mathcal{A}}(T)]\right]\right)^m, \label{eq: success lb}
\end{align}
where we use Jensen's inequality in the final step.
Once again, since $\sum_T\ind{\tau = T} = 1$, we have
\begin{align*}
    \E\left[\sum_{T = 1}^{2n}\ind{\tau = T}\mathbb{P}[\mathcal{E}_{1, T}\mid E_{\mathcal{A}}(T)]\right] &= \sum_{T = 1}^{2n}\E\left[\ind{\tau = T}\mathbb{P}[\mathcal{E}_{1, T}\mid E_{\mathcal{A}}(T)]\right] \\
    &= \sum_{T = 1}^{2n}\mathbb{P}[\tau = T, \mathcal{E}_{1, T}] = \mathbb{P}[\mathcal{E}].
\end{align*}
Plugging this into \eqref{eq: success lb} completes the proof.

\subsubsection{Upper Bound: Proof of Proposition~\ref{prop: ub}}\label{subsec: ub proof}

In this section, we will prove Proposition \ref{prop: ub}. Recall the the definitions of the successful event $\mathcal{S}$ from \eqref{eq: sucecesful event}, the array $(G_i^{(T)})_{1\leq T \leq 2n, 1\leq i \leq m}$ from Section~\ref{section: correlated graphs}, the stopping time $\tau$ from \eqref{eq:def_tau}, and the random parts $\zeta, \zeta'$ of the vertex bipartition.
Note that on the event $\mathcal{S}$, for any $1\leq i\leq m$, $|I_i|\geq (1+\epsilon)\alpha_{\rm COMP}$, where $I_i=\mathcal{A}(G^{(\tau)}_i)$. 
Recall also that for any $1\leq T \leq 2n$, $E_{\mathcal{A}}(T)$ is the set of edges queried by the algorithm $\mathcal{A}$ in the first $T$ steps, and $V_{\mathcal{A}}(T)$ denotes the set of vertices exposed in the first $T$ steps.

\begin{observation}\label{eq:obs:on_S}
We note that on the successful event $\mathcal{S}$, the following statements hold:
\begin{itemize}
    \item By the construction of $(G^{(T)}_i)_{1\leq T \leq 2n, 1\leq i \leq m}$, the sets $I_i\cap V_{\mathcal{A}}(\tau)$ are identical for $1\leq i \leq m$.
    \item $I_i\cap V_{\mathcal{A}}(\tau)\cap \zeta$ has size $(1-\mu)\log_b n$ for all $1\leq i \leq m$.
    \item $I_i\cap V_{\mathcal{A}}(\tau)\cap \zeta'$ has the same size for all $1\leq i \leq m$. 
\end{itemize}    
\end{observation}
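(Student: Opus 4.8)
The plan is to derive all three bullets from two structural facts about the construction: (a) $\mathcal{A}$ is deterministic, so it runs identically on $G_1^{(T)},\dots,G_m^{(T)}$ during the first $T$ rounds, and (b) online decisions are irrevocable, so the membership of an early‑exposed vertex in the final output is fixed the moment that vertex is processed. Only the second bullet needs an extra ingredient, a short analysis of the stopping time $\tau$ on the event $\mathcal{S}$. First I would record the \emph{coupling‑of‑initial‑segments} fact: for every fixed $T\in[2n]$, running $\mathcal{A}$ on $G_i^{(T)}$ and on $G=G_1^{(T)}$ produces the same vertex arrivals $v_1,\dots,v_T$, the same queried edges, and the same inclusion decisions, so in particular $\mathcal{A}_T(G_i^{(T)})=\mathcal{A}_T(G)$ and the set of vertices exposed in the first $T$ rounds is $V_{\mathcal{A}}(T)$ for every $i$. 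This is exactly the third bullet of Remark~\ref{remark: correlated graphs}; the justification is an induction on $t\le T$, the point being that the fixed seed $\omega^*$ together with the edges revealed at round $t$ — all of which lie in $E_{\mathcal{A}}(t)\subseteq E_{\mathcal{A}}(T)$, where $G_i^{(T)}$ agrees with $G$ by construction — determine the choice of $v_{t+1}$ and the decision at round $t+1$, identically across $i$.

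Next I would lift this to a statement about the \emph{final} outputs restricted to $V_{\mathcal{A}}(\tau)$. Since $\tau$ is a function of the base graph $G$ alone (it is defined by running $\mathcal{A}$ on $G$), and $\{\tau=T\}$ is measurable with respect to $E_{\mathcal{A}}(T)$ as already used in the proof of Proposition~\ref{prop: lb}, applying the coupling fact with $T=\tau$ shows $\mathcal{A}_\tau(G_i^{(\tau)})=\mathcal{A}_\tau(G)$ for every $i$, and that the vertices exposed in the first $\tau$ rounds form exactly $V_{\mathcal{A}}(\tau)$. A vertex $v\in V_{\mathcal{A}}(\tau)$ is processed at some round $t\le\tau$, and by irrevocability $v\in I_i=\mathcal{A}(G_i^{(\tau)})$ if and only if $v\in\mathcal{A}_\tau(G_i^{(\tau)})=\mathcal{A}_\tau(G)$. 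Hence $I_i\cap V_{\mathcal{A}}(\tau)=\mathcal{A}_\tau(G)$ for all $i$, which is the first bullet; intersecting with $\zeta$ or $\zeta'$ — each a function of $G$ only — then shows $I_i\cap V_{\mathcal{A}}(\tau)\cap\zeta$ and $I_i\cap V_{\mathcal{A}}(\tau)\cap\zeta'$ are, respectively, the same set for every $i$, so in particular their sizes are constant in $i$, which covers the third bullet.

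Finally, for the second bullet I would pin the size of $I_i\cap V_{\mathcal{A}}(\tau)\cap\zeta$ at $(1-\mu)\log_b n$ using the definition of $\tau$ on $\mathcal{S}$. On $\mathcal{S}$ we have $|\mathcal{A}(G)|=|\mathcal{A}(G_1^{(\tau)})|\ge(1+\epsilon)\alpha_{\rm COMP}=(1+\epsilon)\log_b n/\gamma$, and since the final set on the base graph splits as $|I\cap L|+|I\cap R|$ with $\gamma\le 1/2$, its larger part has size at least $(1+\epsilon)\log_b n/(2\gamma)\ge(1+\epsilon)\log_b n>(1-\mu)\log_b n$ (recall $\mu=\epsilon^2/2$). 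Running the process $(I_t)$ on $G$, the quantity $\max\{|I_t\cap L|,|I_t\cap R|\}$ starts at $0$, rises by at most one per round, and finishes above $(1-\mu)\log_b n$, hence equals $(1-\mu)\log_b n$ at the first round where it attains that value, and this round is at most $2n$; therefore $\tau$ is precisely that round and $\max\{|I_\tau\cap L|,|I_\tau\cap R|\}=(1-\mu)\log_b n$. By definition $\zeta$ is the side attaining this maximum, so $|\mathcal{A}_\tau(G)\cap\zeta|=(1-\mu)\log_b n$, and the previous paragraph yields $|I_i\cap V_{\mathcal{A}}(\tau)\cap\zeta|=(1-\mu)\log_b n$ for all $i$.

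The routine part is the induction in the coupling step. The only genuinely delicate point — the one I would be most careful about — is the handling of the random index $T=\tau$: one must observe that $\tau$, $\zeta$, $\zeta'$ and $V_{\mathcal{A}}(\tau)$ are all determined by the base graph $G$, so that the coupling fact can be invoked by conditioning on $\{\tau=T\}$, and one must rule out the degenerate possibility that $\tau=2n$ merely because the threshold $(1-\mu)\log_b n$ is never attained — which is exactly what the bound $|\mathcal{A}(G)|\ge(1+\epsilon)\alpha_{\rm COMP}$ on $\mathcal{S}$ together with $\gamma\le 1/2$ secures.
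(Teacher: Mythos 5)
Your proposal is correct and follows exactly the route the paper treats as immediate: the first two structural facts are just Remark~\ref{remark: correlated graphs} (identical first-$T$ behaviour of the deterministic $\mathcal{A}$ on $G_1^{(T)},\dots,G_m^{(T)}$) combined with irrevocability, giving $I_i\cap V_{\mathcal{A}}(\tau)=\mathcal{A}_\tau(G)$ for all $i$, which yields the first and third bullets. Your extra care on the second bullet — using $|\mathcal{A}(G)|=|\mathcal{A}(G_1^{(\tau)})|\ge(1+\epsilon)\alpha_{\rm COMP}$ with $\gamma\le 1/2$ to rule out the degenerate case $\tau=2n$ with the threshold never attained, so that $|I_\tau\cap\zeta|=(1-\mu)\log_b n$ exactly — is precisely the detail the paper leaves implicit, and it is handled correctly.
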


It is convenient at this point to introduce the following set:
\begin{align*}
    A &\coloneqq \set{(a_1, \ldots, a_m) \in \Z^m\,:\, (1+\epsilon)\alpha_{\rm COMP} \leq a_i \leq 2n \text{ for each } i}.
\end{align*}
Next, let us define certain sets which will be convenient for us to decompose the successful event $\mathcal{S}$. Recall $\alpha_{\rm COMP}=\frac{\log_b n}{\gamma}$ and $\mu=\epsilon^2/2$. Additionally, for any set $X$, we denote its power set by $\mathcal{P}(X)$.
\begin{definition}[Forbidden $m$-tuples]\label{def:forbidden_tuples}
    For any $\epsilon>0$, $m \geq 1$, $\mathbf{a}=(a_1,\ldots,a_m)\in A$, $\eta \in \{L,R\}$, and $1\leq T \leq 2n$, we define the set $\chi^{(\eta)}(m,T) \subset \mathcal{P}(L \cup R)^m$
    to be the set of all $m$-tuples $(I_1,\ldots,I_m)$, where each $I_i \subseteq L \cup R$ has the following properties:
    \begin{itemize}
        \item For each $1\leq i\leq m$, $I_i$ is a $\gamma$-balanced independent set in $G_i^{(T)}$.
        \item $|I_i|=a_i$ for each $1\leq i \leq m$.
        \item The sets $I_i\cap V_{\mathcal{A}}(T)$ are identical for $1\leq i \leq m$. 
        \item $|I_i\cap V_{\mathcal{A}}(T)\cap \eta|=(1-\mu)\log_b n$ and $|(I_i\cap V_{\mathcal{A}}(T))\setminus \eta|<(1-\mu)\log_b n$ for all $1\leq i \leq m$.
    \end{itemize}
    Let us further define $X^{\eta}_{m,T}(\mathbf{a})$ to be the size of the set $\chi^{\eta}_{m,T}(\mathbf{a})$.
\end{definition}
Recalling Observation \ref{eq:obs:on_S}, and keeping Definition \ref{def:forbidden_tuples} in mind, we conclude that
\begin{align*}
    \mathcal{S}\subset \bigcup_{\mathbf{a}\in A}\left(\left(\cup_{1\leq T \leq 2n}\{X^{(L)}_{m,T}(\mathbf{a})\geq 1\}\right)\cup\left(\cup_{1\leq T \leq 2n}\{X^{(R)}_{m,T}(\mathbf{a})\geq 1\}\right)\right). \numberthis\label{eq:S_inc}
\end{align*}

Thus to conclude Proposition \ref{prop: ub}, using \eqref{eq:S_inc} and a union bound, it is enough to obtain an upper bound on the sum
\begin{align*}
  \sum_{\mathbf{a}\in A}\left(\sum_{T=1}^{2n}\mathbb{P}[X^{(L)}_{m,T}(\mathbf{a})\geq 1]+\sum_{T=1}^{2n}\mathbb{P}[X^{(R)}_{m,T}(\mathbf{a})\geq 1]\right)
  &=\sum_{\mathbf{a}\in A^+}\sum_{\eta \in \{L,R\}} \left(\sum_{T=1}^{2n}\mathbb{P}[X^{(\eta)}_{m,T}(\mathbf{a})\geq 1]\right).\numberthis \label{eq:split_over_a_ranges}
\end{align*}
To this end, we introduce a new random variable.
For a fixed $\beta \in \N$, $\mathbf{a} \in A$ and $\eta \in \{L,R\}$, let $\chi^{(\eta)}_{m,T}(\mathbf{a},\beta)$ be the set of $m$-tuples $(I_1,\ldots,I_m)$, where for each $1\leq i \leq m$, $I_i$ is a $\gamma$-balanced independent set in $G^{(T)}_i$ with the following properties:
\begin{itemize}
    \item $|I_i|=a_i$ for each $1\leq i \leq m$.
    \item The sets $I_i\cap V_{\mathcal{A}}(T)$ are identical for $1\leq i \leq m$. 
    \item $|I_i\cap V_{\mathcal{A}}(T)\cap \eta|=(1-\mu) \log_b n$ and $|(I_i\cap V_{\mathcal{A}}(T))\setminus \eta|=\beta$ for all $1\leq i \leq m$.
\end{itemize}
We further upper bound \eqref{eq:split_over_a_ranges} by
\begin{align*}
    \sum_{\mathbf{a} \in A}\sum_{\beta = 0}^{(1-\mu)\log_bn}\sum_{\eta \in \{L,R\}} \left(\sum_{T=1}^{2n}\mathbb{P}[X^{(\eta)}_{m,T}(\mathbf{a},\beta)\geq 1]\right),
\end{align*}
where $X^{(\eta)}_{m,T}(\mathbf{a},\beta) = |\chi^{(\eta)}_{m,T}(\mathbf{a},\beta)|$.
The following key lemma provides a bound on $\mathbb{P}[X^{(\eta)}_{m,T}(\mathbf{a},\beta)\geq 1]$.

\begin{lemma}\label{lemma: key prob estimate}
    For any fixed $\mathbf{a} \in A$, $0 \leq \beta \leq (1 - \mu)\log_bn$, $\eta \in \set{L, R}$, and $1 \leq T \leq 2n$, we have
    \[\mathbb{P}[X^{(\eta)}_{m,T}(\mathbf{a},\beta)\geq 1] = \exp\left(-\Omega(\log_b^2n)\right)\]
\end{lemma}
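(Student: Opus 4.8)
The plan is to bound $\mathbb{P}[X^{(\eta)}_{m,T}(\mathbf{a},\beta)\geq 1]$ via a first moment (union bound) estimate: $\mathbb{P}[X^{(\eta)}_{m,T}(\mathbf{a},\beta)\geq 1] \leq \mathbb{E}[X^{(\eta)}_{m,T}(\mathbf{a},\beta)]$, and show the expectation is $\exp(-\Omega(\log_b^2 n))$. The key structural observation is that an $m$-tuple $(I_1,\dots,I_m)$ counted by $X^{(\eta)}_{m,T}(\mathbf{a},\beta)$ splits into a \emph{shared core} $C \coloneqq I_i \cap V_{\mathcal{A}}(T)$ — which is common to all $i$ and lies in the ``revealed'' region where all graphs $G_i^{(T)}$ agree — and \emph{private parts} $J_i \coloneqq I_i \setminus V_{\mathcal{A}}(T)$, which live on vertex sets whose incident edges are decided independently across $i$. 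So I would: (i) fix a candidate core $C$ with $|C\cap\eta| = (1-\mu)\log_b n$, $|C\setminus\eta| = \beta$; (ii) fix candidate private parts $J_1,\dots,J_m$ with $|J_i| = a_i - |C|$; (iii) count the number of such choices; and (iv) bound the probability that $C$ is independent in $G$ (equivalently in every $G_i^{(T)}$) \emph{and} each $I_i = C \cup J_i$ is an independent $\gamma$-balanced set in $G_i^{(T)}$.

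The heart of the computation is the probability estimate in step (iv). Writing $c_\eta = (1-\mu)\log_b n$ and $c_{\eta'} = \beta$ for the two halves of the core, the probability that $C$ itself is independent contributes $(1-p)^{c_\eta c_{\eta'}} = b^{-\beta(1-\mu)\log_b n}$ (cross-edges within the core). Then, conditionally on $C$ being independent and on the revealed edges, each $I_i = C \cup J_i$ must be independent in $G_i^{(T)}$: the new cross-edges are those inside $J_i$ and those between $J_i$ and $C$, and crucially — because the private parts are on vertex sets outside $V_{\mathcal{A}}(T)$ and the $\chi_e^{(T,i)}$ are independent across $i$ — these events are \emph{independent across $i$}. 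For each $i$ the probability is at most $(1-p)^{(\text{cross-edges of }I_i)}$, and since $I_i$ is $\gamma$-balanced of size $a_i \geq (1+\epsilon)\alpha_{\rm COMP} = (1+\epsilon)\log_b n/\gamma$, the number of cross-edges in $I_i$ is $\geq \gamma(1-\gamma)a_i^2$, giving a per-$i$ factor $b^{-\gamma(1-\gamma)a_i^2}$; one subtracts the double-counted core contribution carefully but this only helps. Multiplying over $i\in[m]$ and over the counting factors (at most $n^{|C|}$ ways to choose $C$, at most $n^{a_i - |C|}$ ways to choose each $J_i$, and a factor $2n$ for the choice of $T$ absorbed later), the dominant competition is between $\prod_i n^{a_i}$ and $\prod_i b^{-\gamma(1-\gamma)a_i^2}$. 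Since $a_i \geq (1+\epsilon)\log_b n/\gamma$, we get $\gamma(1-\gamma)a_i^2 \log b \geq (1+\epsilon)(1-\gamma) a_i \log n / \gamma$... — here one must be slightly careful, because the balancedness used in the core is $\gamma$-balancedness of $I_i$, but the core's $\eta$-side has size $(1-\mu)\log_b n$ which need \emph{not} be the $\gamma$-fraction. The right way to see the gain is: the term $n^{a_i} b^{-\gamma(1-\gamma)a_i^2}$ per $i$ is maximized (over the allowed range $a_i \geq (1+\epsilon)\alpha_{\rm COMP}$) at the left endpoint $a_i = (1+\epsilon)\alpha_{\rm COMP}$, where it equals $\exp\big((1+\epsilon)\alpha_{\rm COMP}\log n - \gamma(1-\gamma)(1+\epsilon)^2\alpha_{\rm COMP}^2 \log b\big) = \exp\big(-\epsilon(1+\epsilon)\alpha_{\rm COMP}\log n\big) = \exp(-\Omega(\log_b^2 n))$, using $\gamma(1-\gamma)\alpha_{\rm COMP}\log b = (1-\gamma)\log n$ and $\alpha_{\rm COMP} = \Theta(\log_b n)$. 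So even \emph{one} value of $i$ already supplies an $\exp(-\Omega(\log_b^2 n))$ factor; the remaining $m-1$ private parts and the shared core contribute factors that are at worst $n^{O(\log_b n)} = \exp(O(\log n \log_b n)) = \exp(O(\log_b^2 n))$, and one needs the first factor to dominate — this forces the use of $m \geq 2$ and is why the statement is about tuples rather than single sets, though actually for \emph{this} lemma a single index suffices and the constant in $\Omega(\cdot)$ need not depend on $m$; the role of large $m$ only enters later when combining Lemma~\ref{lemma: key prob estimate} over the full range of $\mathbf{a}$ and $\beta$ via \eqref{eq:split_over_a_ranges}.

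Concretely, I would carry out the steps in this order. \textbf{Step 1:} Write $X^{(\eta)}_{m,T}(\mathbf{a},\beta) = \sum_{C} \sum_{J_1,\dots,J_m} \mathbbm{1}\{C\cup J_1, \dots, C\cup J_m \text{ all independent }\gamma\text{-balanced in resp. } G_i^{(T)}\}$, where $C$ ranges over subsets of $V_{\mathcal{A}}(T)$ with the prescribed intersection sizes with $\eta$ and $\eta'$, and $J_i$ over subsets of $(L\cup R)\setminus V_{\mathcal{A}}(T)$ of size $a_i - |C|$ with $I_i = C\cup J_i$ being $\gamma$-balanced. \textbf{Step 2:} Take expectations, and use the key independence: conditioning on $G$ restricted to $E_{\mathcal{A}}(T)$, the events ``$I_i$ is independent in $G_i^{(T)}$'' for $i=1,\dots,m$ are mutually independent, and each factorizes as (prob. core $C$ is cross-edge-free) $\times$ (prob. $J_i$-internal and $J_i$-to-$C$ cross-edges absent). \textbf{Step 3:} Bound each factor by $(1-p)$ raised to the appropriate number of cross-pairs; use $\gamma$-balancedness of each $I_i$ to lower-bound the exponent by $\gamma(1-\gamma)a_i^2 - (\text{core overcount})$, where the overcount is $O(\log_b^2 n)$ and harmless. \textbf{Step 4:} Multiply by the counting factors $\binom{|V_{\mathcal{A}}(T)\cap\eta|}{(1-\mu)\log_b n}\binom{|V_{\mathcal{A}}(T)\cap\eta'|}{\beta} \leq n^{(1-\mu)\log_b n + \beta}$ and $\prod_i \binom{2n}{a_i - |C|} \leq \prod_i n^{a_i}$, and combine: the product over $i$ of $n^{a_i} b^{-\gamma(1-\gamma)a_i^2 + O(\log_b^2 n)}$, times $n^{O(\log_b n)}$, is $\exp(-\Omega(\log_b^2 n))$ because each $a_i \geq (1+\epsilon)\alpha_{\rm COMP}$ makes $n^{a_i} b^{-\gamma(1-\gamma)a_i^2} \leq \exp(-\epsilon(1+\epsilon)\alpha_{\rm COMP}\log n)$ and the polynomially-bounded correction terms cannot overcome even one such factor. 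The main obstacle I anticipate is bookkeeping the cross-edge counts cleanly — in particular separating the core's own cross-edges (which appear once, shared across all $i$ via $G$) from the private cross-edges (which appear independently per $i$), and making sure the ``double-counted'' core contribution and the mismatch between the core's $\eta$-proportion $(1-\mu)$ and the global $\gamma$-proportion are absorbed into lower-order $\exp(O(\log_b^2 n))$ or even $\exp(o(\log_b^2 n))$ terms rather than eroding the main gain; getting the constant in the final $\Omega(\cdot)$ to come out uniformly in $\mathbf{a}$, $\beta$, $\eta$, $T$ (so the later union bound over all these survives, using $|A| \leq (2n)^m$ and only $O(\log_b n)$ choices of $\beta$ and $2$ of $\eta$ and $2n$ of $T$) requires that the per-$i$ exponent be bounded below by a \emph{positive} multiple of $\log_b^2 n$ that does not shrink as $a_i$ grows — which holds since the function $a \mapsto \epsilon(1+\epsilon)(\log n/\gamma)\cdot\frac{a}{\alpha_{\rm COMP}}$... is increasing, so the left endpoint is indeed the worst case.
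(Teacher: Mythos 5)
Your structural setup is the same as the paper's: decompose each $I_i$ into the shared core $C = I_i \cap V_{\mathcal{A}}(T)$ (identical across $i$, living in the revealed region) and private parts outside $V_{\mathcal{A}}(T)$, use conditional independence of the events across $i$ given $E_{\mathcal{A}}(T)$, and run a first-moment/union bound over cores and private parts. But the quantitative heart of your argument is wrong. You claim that a \emph{single} index $i$ already supplies an $\exp(-\Omega(\log_b^2 n))$ factor, computing $n^{a_i}b^{-\gamma(1-\gamma)a_i^2}$ at $a_i=(1+\epsilon)\alpha_{\rm COMP}$ to be $\exp(-\epsilon(1+\epsilon)\alpha_{\rm COMP}\log n)$. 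Using your own identity $\gamma(1-\gamma)\alpha_{\rm COMP}\log b=(1-\gamma)\log n$, the exponent is actually $(1+\epsilon)\alpha_{\rm COMP}\log n\,\bigl[1-(1+\epsilon)(1-\gamma)\bigr]=(1+\epsilon)\alpha_{\rm COMP}\log n\,\bigl[\gamma(1+\epsilon)-\epsilon\bigr]$, which is \emph{positive} (hence the factor is $\exp(+\Theta(\log_b^2 n))$) whenever $\epsilon<\gamma/(1-\gamma)$ — i.e., in the main regime of interest. You have implicitly substituted $\alpha_{\rm STAT}$ for $\alpha_{\rm COMP}$: the size $(1+\epsilon)\alpha_{\rm COMP}=(1+\epsilon)(1-\gamma)\alpha_{\rm STAT}$ sits \emph{below} the statistical threshold, where $\gamma$-balanced independent sets are plentiful (their expected number is $\exp(+\Theta(\log_b^2 n))$, cf.\ the second-moment part of Theorem~\ref{thm:Max-IS}), so no per-copy first-moment deficit exists and your "one factor dominates the $n^{O(\log_b n)}$ corrections" step collapses.

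Relatedly, your assertion that "for this lemma a single index suffices and the constant in $\Omega(\cdot)$ need not depend on $m$" is exactly backwards. In the paper's proof the entire gain comes from the \emph{shared core}: the core of size $(1-\mu)\log_b n+\beta$ is counted only once rather than $m$ times, yielding a saving of $n^{-(m-1)\left((1-\mu)\log_b n+\mu\beta\right)}$, while each copy's surplus (counting of its private part times its conditional independence probability) is bounded, using $\gamma$-balancedness and $\gamma\le 1/2$ to get $\theta_i,\theta_i'\ge(1+\epsilon)\log_b n$, by $n^{(1-\epsilon^2)\log_b n}$. The net per additional copy is $n^{-(\epsilon^2-\mu)\log_b n}=n^{-\epsilon^2\log_b n/2}$ (this is where the choice $\mu=\epsilon^2/2<\epsilon^2$ matters), and only after taking $m=C\epsilon^{-2}$ with $C$ large does the total become $n^{-\Omega(\log_b n)}=\exp(-\Omega(\log_b^2 n))$. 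So the dependence on $m$ is essential inside this very lemma, not merely in the later union bound over $\mathbf{a}$, $\beta$, $\eta$, $T$. To repair your write-up you would need to replace Step 4's per-$i$ analysis with this $m$-wise accounting: compare the one-time core counting cost against the $m$-fold conditional probabilities and private-part counts, and track the $(1-\mu)$ versus $(1+\epsilon)$ mismatch rather than absorbing it as a harmless error term.
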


Before we prove the above lemma, we note that it implies our desired result.
Indeed, we have
\begin{align*}
    \sum_{\mathbf{a}\in A}\sum_{\eta \in \{L,R\}} \left(\sum_{T=1}^{2n}\mathbb{P}[X^{(\eta)}_{m,T}(\mathbf{a})\geq 1]\right) &\leq  \sum_{\mathbf{a} \in A}\sum_{\beta = 0}^{(1-\mu)\log_bn}\sum_{\eta \in \{L,R\}} \left(\sum_{T=1}^{2n}\mathbb{P}[X^{(\eta)}_{m,T}(\mathbf{a},\beta)\geq 1]\right) \\
    &= \sum_{\mathbf{a} \in A}\sum_{\beta = 0}^{(1-\mu)\log_bn}\sum_{\eta \in \{L,R\}}\sum_{T=1}^{2n}\exp\left(-\Omega(\log_b^2n)\right) \\
    &\leq (2n)^m\cdot \log_bn \cdot 2 \cdot \exp\left(-\Omega(\log_b^2n)\right) \\
    &= \exp\left(-\Omega(\log_b^2n)\right).
\end{align*}
It follows that \eqref{eq:split_over_a_ranges} is at most $\exp\left(-\Omega(\log^2_bn) \right)$.
Recalling \eqref{eq:S_inc}, this completes the proof of Proposition \ref{prop: ub}.

\begin{proof}[Proof of Lemma~\ref{lemma: key prob estimate}]
    Consider an arbitrary collection of $\gamma$-balanced sets $(I_1, \ldots, I_m) \in \chi^{(\eta)}_{m,T}(\mathbf{a},\beta)$.
    Recall that by definition of $\chi^{(\eta)}_{m,T}(\mathbf{a},\beta)$, we have $I_i \cap V_{\mathcal{A}}(T)$ is identical for each $i$.
    With this in mind, we define the following sets:
    \[\mathcal{I} \coloneqq \set{I \subseteq V_{\mathcal{A}}(T)\,:\, |I \cap \eta| = (1 - \mu)\log_bn,\, |I \setminus \eta| = \beta},\]
    and for each $I \in \mathcal{I}$ and $1 \leq i \leq m$
    \[\mathcal{I}_i(I) \coloneqq \set{I_i \subseteq V(G_i^{(T)}) \setminus V_{\mathcal{A}}(T) \,:\, I \cup I_i \text{ is a $\gamma$-balanced set in } G_i^{(T)} \text{ and } |I \cup I_i| = a_i}.\]
    With these definitions in hand and recalling Remark~\ref{remark: correlated graphs} and Observation~\ref{eq:obs:on_S}, we have
    \begin{align}
        \mathbb{P}[X^{(\eta)}_{m,T}(\mathbf{a},\beta)\geq 1]
        &\leq \sum_{I \in \mathcal{I}}\sum_{\substack{I_i \in \mathcal{I}_i(I) \\ \text{for } 1 \leq i \leq m}}\mathbb{P}[I \cup I_i \text{ is independent in } G_i^{(T)} \text{for each }i] \nonumber \\
        &\leq \sum_{I \in \mathcal{I}}(1-p)^{\beta(1-\mu)\log_bn}\prod_{i = 1}^m\sum_{I_i \in \mathcal{I}_i(I)}p(I, I_i), \label{eq: prob ub for X}
    \end{align}
    where 
    \[p(I, I_i) \coloneqq \CProb{I \cup I_i \text{ is independent in } G_i^{(T)}}{I \text{ is independent in } G_i^{(T)}}.\]
    Consider a fixed $I_i \in \mathcal{I}_i(I)$.
    Let $\theta_i = \gamma\, a_i$ and $\theta_i' = (1 - \gamma)a_i$.
    We have two cases to consider depending on which side $\zeta \in \set{L, R}$ contains $\gamma$ proportion of the balanced independent set $I \cup I_i$.
    \begin{enumerate}[label= \textbf{Case \arabic*}:, wide]
        \item $\zeta = \eta$. 
        Then, we have
        \begin{align*}
            p(I, I_i)
            &= (1 - p)^{(\theta_i - (1-\mu)\log_bn)\theta_i' + (\theta_i' - \beta)(1-\mu)\log_b n} \\
            &= (1-p)^{\theta_i\theta_i' - \beta(1-\mu)\log_bn}
        \end{align*}

        \item $\zeta \neq \eta$. 
        Then, we have
        \begin{align*}
            p(I, I_i)
            &= (1 - p)^{(\theta_i' - (1-\mu)\log_bn)\theta_i + (\theta_i - \beta)(1-\mu)\log_b n} \\
            &= (1-p)^{\theta_i\theta_i' - \beta(1-\mu)\log_bn}
        \end{align*}
    \end{enumerate}
    Combining both cases, by a simple counting argument we have that $\sum_{I_i \in \mathcal{I}_i(I)}p(I, I_i)$ is at most
    \begin{align*}
        &~ (1-p)^{\theta_i\theta_i' - \beta(1-\mu)\log_bn}\left(\binom{n}{\theta_i - (1 - \mu)\log_bn}\binom{n}{\theta_i' - \beta} + \binom{n}{\theta_i' - (1 - \mu)\log_bn}\binom{n}{\theta_i - \beta}\right) \\
        &\leq 2(1-p)^{\theta_i\theta_i' - \beta(1-\mu)\log_bn}\,n^{\theta_i - (1 - \mu)\log_bn + \theta_i' - \beta}.
    \end{align*}
    Plugging the above into \eqref{eq: prob ub for X}, we obtain
    \begin{align*}
        &~\mathbb{P}[X^{(\eta)}_{m,T}(\mathbf{a},\beta)\geq 1] \\
        &\leq 2^m\sum_{I \in \mathcal{I}}(1-p)^{\beta(1-\mu)\log_bn + \sum_{i = 1}^m(\theta_i\theta_i' - \beta(1-\mu)\log_bn)}\,n^{\sum_{i = 1}^m(\theta_i + \theta_i' - (1 - \mu)\log_bn - \beta)} \\
        &= 2^m\binom{n}{(1-\mu)\log_bn}\binom{n}{\beta}(1-p)^{- (m - 1)\beta(1-\mu)\log_bn + \sum_{i = 1}^m\theta_i\theta_i'}\,n^{\sum_{i = 1}^m(\theta_i + \theta_i'  - (1 - \mu)\log_bn - \beta)} \\
        &\leq 2^m(1-p)^{- (m - 1)\beta(1-\mu)\log_bn + \sum_{i = 1}^m\theta_i\theta_i'}\,n^{ - (m-1)((1 - \mu)\log_bn + \beta) + \sum_{i = 1}^m(\theta_i + \theta_i')}.
    \end{align*}
    Recall that $b = 1/(1-p)$.
    The above can be simplified to
    \begin{align*}
        &~ 2^mn^{-(m - 1)(\beta\mu + (1-\mu)\log_bn) - \sum_{i = 1}^m\left(\frac{\theta_i\theta_i'}{\log_b n} - \theta_i - \theta_i'\right)} \\
        &= 2^mn^{-(m - 1)(\beta\mu + (1-\mu)\log_bn) - \sum_{i = 1}^m\left(\left(\frac{\theta_i}{\log_b n} - 1\right)\left(\theta_i' - \log_bn\right) -\log_b n\right)}.
    \end{align*}
    Note that since $\gamma \leq 1/2$, we have
    \[\theta_i,\,\theta_i' \,\geq\, \gamma a_i \,\geq\, (1+\epsilon)\gamma\,\alpha_{\rm COMP} \,\geq\, (1+\epsilon)\log_bn.\]
    With this in hand, we may simplify further to obtain
    \begin{align*}
        \mathbb{P}[X^{(\eta)}_{m,T}(\mathbf{a},\beta)\geq 1] &\leq 2^mn^{-(m - 1)(\beta\mu + (1-\mu)\log_bn) + m(1-\epsilon^2)\log_bn} \\
        &\leq 2^mn^{-(m - 1)(\epsilon^2-\mu)\log_bn + (1-\epsilon^2)\log_bn} \\
        &= 2^mn^{-(m - 1)\epsilon^2\log_bn/2 + (1-\epsilon^2)\log_bn},
    \end{align*}
    where we plug in $\mu = \epsilon^2/2$.
    Recall that $m = C\epsilon^{-2}$.
    In particular, for $C$ sufficiently large, the above is $\exp\left(-\Omega(\log_b^2n)\right)$, as desired.
\end{proof}\section{Future Queries: Proof of Theorem~\ref{thm:surpass}}\label{sec:Future}
Fix an arbitrary constant 
\begin{equation}\label{eq:c'}
c'<\frac{\gamma\bigl(1-(1+\epsilon)(1-\gamma)\bigr)}{(1-\gamma)^2},
\end{equation}
where $\epsilon<\frac{\gamma}{1-\gamma}$. Indeed, otherwise there exists no $\gamma$-balanced independent set of size $(1+\epsilon)\alpha_{\rm COMP}$ per Theorem~\ref{thm:Max-IS}. Note that as $(1+\epsilon)(1-\gamma)\ge 1-\gamma$ and $\gamma\le \frac12$, we have
\begin{equation}\label{eq:trivial-bd-on-c'}
    c'<\frac{\gamma^2}{(1-\gamma)^2}\le 1.
\end{equation}
In what follows, we prove the existence of such an online algorithm with
\[
c(\gamma,\epsilon)=\frac{1-\gamma}{\gamma}\left((1+\epsilon)^2 - (c')^2\right).
\]

\noindent As noted, our algorithm proceeds in three phases. 
\paragraph{Phase I: Greedy Rounds} In phase I, we find a $\gamma$-balanced independent set of size $c'\log_bn /\gamma$ for $c'$ arising in~\eqref{eq:c'}. Proceeding analogously to the proof of Theorem~\ref{theorem: achievability}, we first run Algorithm~\ref{alg:greedy} until time $T_1$ where
\[
T_1 \coloneqq \min\bigl\{t\ge 1:\max\{|I_t\cap L|,|I_t\cap R|\}=c'\log_b n\bigr\}.
\]
Plugging $1-c'$ in place of $\epsilon$ in Lemma~\ref{lemma: greedy analysis}, we deduce
\begin{equation}\label{eq:T_1}
    \mathbb{P}\left[T_1\le n^{1-(1-c')/2}\right] = 1-\exp\left(-\Omega\left(n^{(1-c')/2}\right)\right).
\end{equation}
The bound~\eqref{eq:trivial-bd-on-c'} ensures that~\eqref{eq:T_1} is not vacuous. Note that at time $T_1$, one of the partitions to the independent set $I_{T_1}$ is of size $c'\log_b n$.  By relabeling the partitions if necessary, we again assume this partition is $L$. For the second stage, we employ the following modification of Algorithm~\ref{alg:greedy_bal}. Set $T_2 = n^{c'+\delta}$ where $0<\delta<1-c'$ is arbitrary.

\begin{breakablealgorithm}
\caption{Phase I: Part II}\label{alg:greedy_cont'd}

\begin{algorithmic}[1]
    \For{$t = T_1 + 1, \ldots, T_1+T_2$}
        \State Sample a random vertex $v_t\in R\setminus R_{t-1}$.
        \State $L_t = L_{t-1}$, $R_t = R_{t-1} \cup \set{v_t}$.
        \If{$|I_{t-1} \cap R| < c'\log_b n$ and $N_t(v_t) \cap I_{t-1} = \varnothing$}
            \State Set $I_t = I_{t-1} \cup \{v_t\}$.
        \Else
            \State $I_t = I_{t-1}$.
        \EndIf
    \EndFor
\end{algorithmic}
\end{breakablealgorithm}
\begin{lemma}\label{lemma:T-2}
$\displaystyle\mathbb{P}\left[\left|I_{T_1+T_2}\cap R\right|=c'\log_b n\right]= 1-\exp\left(-\Omega(n^{\delta})\right)$.
\end{lemma}
\begin{proof}
    For $\{v_{T_1+1},\dots,v_{T_1+T_2}\}\subset R$, let $N$ denote the number of $i$ such that $T_1+1\le i\le T_1+T_2$ for which there are no edges between $v_i$ and $I_{T_1}$. We have 
    \[
    N=\sum_{i=T_1+1}^{T_1+T_2} I_i\sim {\rm Bin}(T_2,n^{-c'})
    \]as $I_i$ are iid Bernouli random variables with parameter $n^{-c'}$. Clearly, it suffices to show that
    \[
    \mathbb{P}\bigl[N\ge c'\log_b n\bigr]= 1-\exp\left(-\Omega\left(n^{\delta}\right)\right),
    \]
    as we only add vertices as long as $|I_{t-1}\cap R|<c'\log_b n$. As $\mathbb{E}[N] = n^\delta$, we have
    \[
    \mathbb{P}\bigl[N<c'\log_b n\bigr]\le \mathbb{P}\left[N\le \mathbb{E}[N]/2\right]\le e^{-\mathbb{E}[N]/8} = e^{-n^\delta/8}
    \]
    by a standard Chernoff bound (see, e.g., \cite{alon2016probabilistic}), completing the proof.
\end{proof}
Combining~\eqref{eq:T_1} and Lemma~\ref{lemma:T-2} through a union bound, we have the following with probability $1-\exp\left(-\Omega(n^{\min\{\delta,(1-c')/2\}})\right)$ for $T=T_1+T_2$: (i) $T=o(n)$, (ii) $I_T=I^{(L)}_T\cup I^{(R)}_T$ is a $\gamma$-balanced independent set of size $c'\log_b n/\gamma$ where
\begin{align*}
   |I^{(L)}_T|&\coloneqq|I_T\cap L| = c'\log_b n \\
   |I^{(R)}_T|&\coloneqq|I_T\cap R| = c'\frac{1-\gamma}{\gamma}\log_b n.
\end{align*}
Namely, a $\gamma$-balanced independent set of size $c'\log_b n/\gamma$ is found in $o(n)$ time with high probability. This concludes Phase I of our algorithm. Note that for $t\le T=T_1+T_2$, $S_t = \varnothing$---i.e., no future queries have been made.

In what follows, we condition on the outcome of the greedy rounds.
\paragraph{Phase II: Exploration Phase} 
We condition on the outcome of Phase I and fix a $\theta\in(0,1)$ with
\begin{equation}\label{eq:theta}
(1-\gamma)(1+\epsilon-c')<\theta <1-c'\frac{1-\gamma}{\gamma}.
\end{equation}
Observe that $\theta$ is well defined if 
\[
1-c'\frac{1-\gamma}{\gamma}>(1+\epsilon-c')(1-\gamma)
\]
which is equivalent to~\eqref{eq:c'}. At round $T+1$, we do the following:
\begin{itemize}
    \item Sample a random vertex $v_{T+1}\in [2n]\setminus (L_T\cup R_T)$. If $v_{T+1}\in R$, update $R_{T+1}=R_T\cup \{v_{T+1}\}$ and $L_{T+1}=L_T$; otherwise update $R_{T+1}=R_T$ and $L_{T+1}=L_T\cup \{v_{T+1}\}$.
    \item  Query $S_{T+1}$ involving vertices $\{i,j\}$ with $i\in I^{(R)}_T$ and $j\in L\setminus L_T$ and reveal the edge status of all pairs $(i,j)$.
    \item Using the edges in the step above, identify a $\bar{W}_L\subset L\setminus L_T$ of size $|\bar{W}_L|=n^\theta$ for which $E[\bar{W}_L, I_T^{(R)}] = \varnothing$.
    Lemma~\ref{lemma:W-defined} below justifies the existence of $\bar{W}_L$ whp.
\end{itemize}
At the end of round $T+1$, we set $I_{T+1}\coloneqq I_T$. Similarly, at round $T+2$, we do the following:
\begin{itemize}
    \item Sample a random vertex $v_{T+2}\in [2n]\setminus (L_{T+1}\cup R_{T+1})$. If $v_{T+2}\in R$, update $R_{T+2}=R_{T+1}\cup \{v_{T+2}\}$ and $L_{T+2}=L_{T+1}$; otherwise update $R_{T+2}=R_{T+1}$ and $L_{T+2}=L_{T+1}\cup \{v_{T+2}\}$.
    \item  Query $S_{T+2}$ involving vertices $\{i,j\}$ with $i\in I^{(L)}_T$ and $j\in R\setminus R_{T+1}$ and reveal the edge status of all pairs $(i,j)$.
    \item Using the edges in the step above, identify a $\bar{W}_R\subset R\setminus R_{T+1}$ of size $|\bar{W}_R|=n^\theta$ for which $E[I_T^{(L)}, \bar{W}_R] = \varnothing$.
    Lemma~\ref{lemma:W-defined} justifies the existence of $\bar{W}_R$ whp.
\end{itemize}
We then set $I_{T+2}\coloneqq I_{T+1}$.
\begin{lemma}\label{lemma:W-defined}
    For $\theta$ satisfying~\eqref{eq:theta}, such sets $\bar{W}_L$ and $\bar{W}_R$ indeed exist with probability at least $1-\exp\left(-n^{\Theta(1)}\right)$.
\end{lemma}
\begin{proof}
    Let
    \[
    N_L = \sum_{v\in L\setminus L_T}\ind{v \text{ has no neighbors in } I_T^{(R)}}
    \]
    Note that for $v\in L\setminus L_T$
    \[
    \ind{v \text{ has no neighbors in } I_T^{(R)}} \stackrel{\text{iid}}{\sim} {\rm Ber}\left(n^{1-c'\frac{1-\gamma}{\gamma}}\right).
    \]
    Since we deterministically have $|L_T|,|R_T|\le T$, and $T=o(n)$ by conditioning,
    $N_L$ stochastically dominates ${\rm Bin}(n-T,n^{-c'(1-\gamma)/\gamma})$, which has mean $n^{1-c'(1-\gamma)/\gamma}(1+o(1))$ (see, e.g.,~\cite[Chapter~4]{roch2024modern}). Proceeding identically to the proof of Lemma~\ref{lemma:T-2},
    \begin{equation}\label{eq:NL}
            \mathbb{P}\left[N_L\ge n^\theta\right] \ge \mathbb{P}\left[{\rm Bin}\left(n-T,n^{-c'\frac{1-\gamma}{\gamma}}\right)\ge n^\theta\right]\ge 1-\exp\left(-n^{\Theta(1)}\right)
        \end{equation}
    as $\theta$ satisfies~\eqref{eq:theta}. Similarly, define
    \[
    N_R = \sum_{v\in R\setminus R_{T+1}}\ind{v \text{ has no neighbors in } I_T^{(L)}}.
    \]
    Clearly, for $v\in R\setminus R_{T+1}$
    \[
    \ind{v \text{ has no neighbors in } I_T^{(L)}} \stackrel{\text{iid}}{\sim} {\rm Ber}\left(n^{1-c'}\right).
    \]
As $|R\setminus R_{T+1}|  = |R|-|R_{T+1}|\ge n-T-1$, and therefore $N_R$ stochastically dominates ${\rm Bin}(n-T-1,n^{-c'})$ which has mean $n^{1-c'}(1+o(1))$. Proceeding identically as above, we obtain that
\begin{equation}\label{eq:NR}
    \mathbb{P}\left[N_R\ge n^\theta\right]\ge 1-\exp\left(-n^{\Theta(1)}\right).
\end{equation}
Noting that such $\bar{W}_L$ and $\bar{W}_R$ with sizes $n^\theta$ exist if and only if $N_L,N_R\ge n^\theta$, we conclude the proof by combining~\eqref{eq:NL} and~\eqref{eq:NR} through a union bound.
\end{proof}
\paragraph{Phase III: Brute-Force Step} At round $t=T+3$, we:
\begin{itemize}
    \item Sample a vertex $v_{T+3}\in [2n]\setminus (L_{T+2}\cup R_{T+2})$. If $v_{T+3}\in R$, update $R_{T+3}=R_{T+2}\cup \{v_{T+3}\}$ and $L_{T+3}=L_{T+2}$; otherwise update $R_{T+3}=R_{T+2}$ and $L_{T+3}=L_{T+2}\cup \{v_{T+3}\}$.
    \item {\bf (Pre-Processing)} Set $W_L\coloneqq\bar{W}_L\setminus \{v_{T+1},v_{T+2},v_{T+3}\}$ and $W_R\coloneqq\bar{W}_R\setminus \{v_{T+1},v_{T+2},v_{T+3}\}$. Note that $|W_L-W_R|\le 3$. Removing at most three vertices from the larger set if necessary, we may assume that $|W_L|=|W_R|=n^\theta+O(1)$.
    \item {\bf (Brute-Force Search)} Query $S_{T+3}$ involving vertices $\{i,j\}$ where $i\in W_L$ and $j\in W_R$ and reveal the edge status of all pairs $(i,j)$. Using this information, identify a $\gamma$-balanced independent set $\mathcal{J}=J_L\cup J_R$ in $G[W_L\cup W_R]$ $\ERB(n^\theta+O(1),p)$) of size $(1+\epsilon-c')\log_b n/\gamma$, where
    \begin{align*}
    |J_L|&=|\mathcal{J}\cap L| = (1+\epsilon-c')\log_b n, \\
    |J_R|&=|\mathcal{J}\cap R| = (1+\epsilon-c')\frac{1-\gamma}{\gamma}\log_b n.
    \end{align*}
    Lemma~\ref{lemma:brute} justifies the existence of $\mathcal{J}$ whp.
    We note that this step has quasi-polynomial running time $n^{O(\log n)}$ as mentioned in Section~\ref{sec:future-sec}.
\end{itemize}
Set $I_{T+3}\coloneqq I_T$.
\begin{lemma}\label{lemma:brute}
   The bipartite random graph $G[W_L\cup W_R] \sim \ERB(n^\theta + O(1),p)$ with vertex set $W_L\cup W_R$
   contains a $\gamma$-balanced independent set of size $(1+\epsilon-c')\log_b n/\gamma$.
\end{lemma}
\begin{proof}
Applying Theorem~\ref{thm:Max-IS} to $\ERB(n^\theta + O(1),p)$, it suffices to verify that
\[
\frac{\theta}{\gamma(1-\gamma)}\log_b n>(1+\epsilon-c')\frac{\log_b n}{\gamma} \Leftrightarrow \theta>(1-\gamma)(1+\epsilon-c'),
\]
which holds due to~\eqref{eq:theta}.
\end{proof}
\paragraph{Final Phases} For rounds $T+4\le t\le 2n$:
\begin{itemize}
    \item We randomly sample a vertex $v_t\in[2n]\setminus (L_{t-1}\cup R_{t-1})$. If $v_t\in R$, update $R_t = R_{t-1}\cup \{v_t\}$ and $L_t = L_{t-1}$. Otherwise, set $R_t = R_{t-1}$ and $L_t = L_{t-1}\cup \{v_t\}$. 
    \item If $v_t\in \mathcal{J}$, set $I_{t+1}=I_t\cup \{v_t\}$. Else, set $I_{t+1}=I_t$.
\end{itemize}
The algorithm above is online and constructs a $\gamma$-balanced independent set of size $(1+\epsilon)\alpha_{\rm COMP}$. We finally verify the constraint~\eqref{eq:BUDGET} on the number of queries arising in Definition~\ref{def:cAdmissible}.
\paragraph{Number of Future Queries}
Note that $S_t = \varnothing$ except $t\in\{T+1,T+2,T+3\}$. We have:
\begin{align*}
    \left|\bigcup_{t=1}^{2n} S_t \cap \binom{\A(\ERB(n,p))}{2}\right|&\le \sum_{t=T+1}^{T+3}\left|S_t \cap \binom{\A(\ERB(n,p))}{2}\right| \\
    &=|I_T^{(L)}|\cdot |J_R| + |I_T^{(R)}|\cdot |J_L| + |J_L|\cdot |J_R|\\
    &=\frac{2c'(1+\epsilon-c')(1-\gamma)}{\gamma}(\log_b n)^2 + \frac{(1+\epsilon-c')^2(1-\gamma)}{\gamma}(\log_b n)^2\\
    &=\frac{1-\gamma}{\gamma}\left((1+\epsilon)^2 - (c')^2\right)(\log_b n)^2\\
    &=c(\gamma,\epsilon)(\log_b n)^2.
\end{align*}
This completes the proof of Theorem~\ref{thm:surpass}.

\printbibliography

\end{document}